\newenvironment{scope}{}{}
\newcommand{\emath}[1]{\ensuremath{#1}}
\newcommand{\expect}{\emath{\mathbb{E}}}
	\newtheorem{assump}{Assumption}
	\newtheorem{theorem}{Theorem}[section]
	\newtheorem{definition}[theorem]{Definition}
	\newtheorem{lemma}[theorem]{Lemma}
	\newtheorem{corollary}[theorem]{Corollary}
	\newtheorem{remark}[theorem]{Remark}
	\newtheorem{prop}[theorem]{Proposition}
\hskip \labelsep {\bfseries #1}]
\newcommand{\onehalf}{\emath{ \frac{1}{2} }}
\newcommand{\reals}{ \emath{\mathbb{R}} }
\newcommand{\vecnorm}[1]{\emath{\left\|#1\right\|}}
\newcommand{\picktag}{\emath{ {\rm P} }}
\newcommand{\delvtag}{\emath{ {\rm D} }}
\newcommand{\env}{\emath{\Omega}}
\newcommand{\arrivalrate}{\emath{\lambda}}
\newcommand{\utilization}{\emath{\varrho}}
\newcommand{\servicetime}{\emath{s}}
\newcommand{\numveh}{\emath{m}}
\newcommand{\den}{\varphi}
\newcommand{\workspace}{\emath{\mathcal{C}}}
\newcommand{\workcell}{\emath{C}}
\newcommand{\cellset}{ {\mathbb C} }
\newcommand{\cyclevar}{\mathcal L}
\newcommand{\distfunc}{}
\renewcommand{\distfunc}[1]{\vecnorm{#1}}
\newcommand{\pathvar}{{\mathcal P}}
\newcommand{\pathset}{{\mathbb P}}
\newcommand{\nodeset}{V}
\newcommand{\edgeset}{{\mathcal A}}
\newcommand{\krondelta}{\mathbb{I}}
\providecommand{\argholder}{{\, \cdot \,}}
\newcommand{\STANDALONE}{}
\begin{document}

\newtheorem{conjecture}[theorem]{Conjecture}

%%% graphical macros %%%%

\newcommand{\nodevar}{u}
\newcommand{\altnodevar}{v}

\newcommand{\arcset}{A}
\newcommand{\arcvar}{a}
\renewcommand{\edgeset}{\arcset}

%%%% geometry macros

\newcommand{\pointvar}{{\bf p}}
\newcommand{\altpointvar}{{\bf q}}

\newcommand{\distance}{{\mathscr D}}

\makecommand{\workspace}{{\mathcal W}}		% this symbol is already used, unfortunately

\providecommand{\numserved}{S}

%%%% probability macros %%%%

\newcommand{\sigfield}{{\mathcal F}}
\newcommand{\altsigfield}{{\mathcal G}}

\newcommand{\measvar}{\mu}
\newcommand{\newmeas}{\tilde\mu}

\renewcommand{\den}{\varphi}
\newcommand{\cumden}{\Phi}
\newcommand{\invcumden}{\Psi}

\newcommand{\denone}{\den^\srcTag}
\newcommand{\dentwo}{\den^\trgTag}

%%%% Wasserstein defs %%%%

\newcommand{\Wass}{{W}}
\newcommand{\srcTag}{\sharp}
\newcommand{\trgTag}{\flat}

\newcommand{\measone}{\mu^\srcTag}
\newcommand{\meastwo}{\mu^\trgTag}
\newcommand{\newmeasone}{\tilde\mu^\srcTag}
\newcommand{\newmeastwo}{\tilde\mu^\trgTag}

%%%% road network macros %%%%

\newcommand{\roadnet}{{\mathcal R}}

\newcommand{\roadverts}{{\bf V}}
\newcommand{\roadvert}{{\bf v}}

\newcommand{\roadset}{{\bf R}}
\newcommand{\roadvar}{r}

\newcommand{\roadlen}{L}

\newcommand{\coordset}{S}
\newcommand{\coordvar}{s}
\newcommand{\coordmap}{\roadnet}

%%%% flow network construction macros %%%%

\newcommand{\precTag}{\prec}
\newcommand{\succTag}{\succ}

\newcommand{\thresh}{k}

\newcommand{\roadTL}[1][\roadvar]{(#1,#1^-)}
\newcommand{\roadTR}[1][\roadvar]{(#1,#1^+)}
\newcommand{\roadFL}[1][\roadvar]{(#1^-,#1)}
\newcommand{\roadFR}[1][\roadvar]{(#1^+,#1)}

\newcommand{\roadconnleft}[1][\roadvar]{ {\rm tconn}_{#1} }
\newcommand{\roadconnright}[1][\roadvar]{ {\rm hconn}_{#1} }

\newcommand{\tverts}{{\nodeset^\srcTag}}
\newcommand{\bverts}{{\nodeset^\trgTag}}

%%%% approximation %%%%

%%%% DEPRECATE %%%%%

%\newcommand{\optvar}{{\bf \flowvar}}
%\newcommand{\optvarmap}{{\bf \Xfervar}}
%\newcommand{\optset}{{\mathbb X}}

% SURROGATE GRAPH macros
% \newcommand{\archead}[1]{{\upharpoonright#1}}
% \newcommand{\arctail}[1]{{\downharpoonright#1}}
\newcommand{\topcopy}[1]{{{#1}\trgTag}}
\newcommand{\bottomcopy}[1]{{{#1}\srcTag}}

%\newcommand{\tvert}{\topcopy{u}}
%\newcommand{\bverts}{\bottomcopy{V}}
%\newcommand{\bvert}{\bottomcopy{v}}
% for matching notation
%\newcommand{\tverts}{{\nodeset^\sharp}}
%\newcommand{\bverts}{{\nodeset^\srcTag}}

%\newcommand{\tthresh}{{\overline k}}
%\newcommand{\bthresh}{{\underline k}}
\newcommand{\tthresh}{\thresh}
\newcommand{\bthresh}{\thresh}

% DPDP
\newcommand{\randpt}{{P}}
\newcommand{\altrandpt}{{Q}}
\newcommand{\pickden}{{\den_\picktag}}
\newcommand{\delvden}{{\den_\delvtag}}

\newcommand{\MyTitle}{An Explicit Formulation of the Earth Mover's Distance with Continuous Road Map Distances}
\author{Kyle~Treleaven, Emilio~Frazzoli
\thanks{Kyle Treleaven and Emilio Frazzoli are with the Laboratory for Information and Decision Systems, Department of Aeronautics and
Astronautics, Massachusetts Institute of Technology, Cambridge, MA 02139 {\tt\small \{ktreleav, frazzoli\}@mit.edu}.}%
\thanks{This research was supported in part by the Future Urban Mobility project of the Singapore-MIT Alliance for Research and Technology (SMART) Center, with funding from Singapore's National Research Foundation. }
}
\ifthenelse{ \isundefined{\STANDALONE} }
{
%\chapter{\MyTitle}
}{
\title{\MyTitle}
\maketitle
}

\begin{abstract}

The \emph{Earth mover's distance} (EMD) is a measure of distance between probability distributions which
%is commonly encountered in mathematics and computer science and
is at the heart of
\emph{mass transportation theory}%~\cite{rachev1998mass,rachev1998mass2}
.
Recent research %~\cite{treleaven2013} 
has shown that the EMD plays a crucial role in studying
the potential impact of Demand-Responsive Transportation (DRT) and Mobility-on-Demand (MoD) systems, which
are growing paradigms for one-way vehicle sharing 
where people drive (or are driven by) shared vehicles from a point of origin to a point of destination.
While the ubiquitous physical transportation setting is the ``road network'',
characterized by systems of roads connected together by interchanges,
most analytical works about vehicle sharing represent distances between points in a plane
using the simple Euclidean metric.
Instead, we consider the EMD when the ground metric is taken from a class
of one-dimensional, continuous metric spaces,
reminiscent of road networks.
We produce an ``explicit'' formulation
of the Earth mover's distance given any finite road network $\roadnet$.
% and input distributions $\measone$ and $\meastwo$.
The result generalizes the EMD with a Euclidean $\reals^1$ ground metric, which
had remained one of the only known \emph{non-discrete} cases with an explicit formula.
Our formulation casts the EMD as the optimal value of
%the solution to
a finite-dimensional, real-valued optimization problem,
with a convex objective function and linear constraints.
In the special case that the input distributions have piece-wise uniform (constant) density,
% (with possibly a finite number of atoms),
the problem reduces to one whose objective function is convex \emph{quadratic}.
Both forms are amenable to modern mathematical programming techniques.

\end{abstract}

\section{Introduction}

The \emph{Earth mover's distance} (EMD) is a measure of distance between probability distributions---%
or measures, more generally---%
which is commonly encountered in mathematics and computer science.
In mathematics, it is generally referred to as the Rubenstein/Kantorovich/Wasserstein distance,
or simply \emph{Wasserstein distance}.
The metric is also the solution to the Monge-Kantorovich problem, which is at the heart of
\emph{mass transportation theory}~\cite{rachev1998mass,rachev1998mass2}.
A common informal interpretation of the EMD is that
if one treats two measures (say, $\measone$ and $\meastwo$) as two distinct ways of arranging some fluid/continuous commodity
(e.g., ``a pile of dirt'') in a spatial domain $\env$,
then the EMD is the minimum cost of transforming the arrangement described by $\measone$ into the arrangement described by $\meastwo$.
Such interpretation requires that the underlying domain be equipped with a ``ground metric''
$\distance : \env \times \env \to \reals_{\geq 0}$
by which the cost of transformations can be measured;
the notion is that relocating a unit of commodity
from a point $\pointvar \in \env$ to point a $\pointvar' \in \env$
incurs cost $\distance(\pointvar,\pointvar')$.
Formally, the EMD is defined,
given a complete and separable metric space $(\env,\distance)$ as
%as the cost associated with the optimization problem:
\begin{equation} \label{eq:EMDdef}
  \Wass(\measone,\meastwo)
  \doteq
  \inf_{ \gamma \in \Gamma(\measone,\meastwo) }
    \int_{ \env } \distance(\pointvar, \pointvar') \ d\gamma(\pointvar,\pointvar').
  %W(\mu_1,\mu_2) := \inf_{\gamma \in \Gamma(\mu_1,\mu_2)} \int_{ \pointvar,\ycoord \in \env } \distfunc{ \pointvar, \ycoord } d\gamma( \pointvar, \ycoord );
\end{equation}
The search space $\Gamma$ is the set of \emph{couplings} of $\measone$ and $\meastwo$,
i.e., the collection of all joint measures over $\env^2$
% the \emph{product probability space}
%$\env \times \env$
having marginals $\measone$ and $\meastwo$ on the first and second factors, respectively.
Generally speaking, $\Gamma$ is infinite-dimensional.
% (see definition in~\ref{EMPTYREF}).
%We will call this interpretation the \emph{Earth Mover's Problem} (EMP).
%keeping in mind that a solution to the EMP
%
% A coupling $\gamma$ of~\eqref{eq:EMDdef} can be interpreted as describing the way in which the commodity is redistributed,
% in the sense that $\gamma( A \times B )$ is the total amount of commodity having an origin in $A$ and a destination in $B$,
% for measurable $A,B \subset \env$.

\subsection{Literature Review}

The work on EMD has developed, to a large extent, in two separate and independent tracks:
the discrete case deals largely with optimization over finite-dimensional polyhedra, and has been examined by computer scientists;
the continuous case (and a unifying theory) has remained the subject of the more mathematical/theoretical study called
mass transportation theory.
% %
% ``Whereas the continuous case of the transportation problem may be cast in measure-theoretic terms,
% the discrete case deals with optimization over generalized transportation polyhedra.
% Accordingly, work on these problems has developed in several separate and independent directions.''
% %
One of the most successful recent applications of the EMD has been
in image matching and retrieval~\cite{werman1985distance,rubner1997earth,cohen1999earth,rubner2000earth},
toward the development of fast computerized image databases.
The EMD obtains several advantages over previously-used metrics for comparing certain image data
represented using \emph{histograms} (i.e., distributions of finite support).
% been observed to more closely coincide with human ``perceptual dissimilarity''
%between certain image  than previously-used metrics.
The metric has also been studied recently from an algorithmic perspective~\cite{indyk2007near,ling2007efficient,andoni2008earth,shirdhonkar2008approximate,andoni2009efficient,indyk2009sublinear}, because
classical algorithms to compute the EMD can be too slow to meet the requirements of large database systems.
Many such studies leverage special structure of a particular ground metric.
%assume a specific ground metric $\distfunc{\cdot,\cdot}$, and leverage special structure in the metric to gain efficiency.
While most algorithmic studies of the EMD consider that the two distributions, or histograms, are known \emph{a priori},
a study in~\cite{indyk2009sublinear} considers optimal approximation algorithms in the case that the distributions are not known,
but the samples used to compute the histograms are obtained as a ``streaming input''.
The EMD has applications in other computer science domains as well,
e.g., alignment of two-dimensional surfaces~\cite{lipman2011conformal}.
In~\cite{treleaven2013},
the EMD with a Euclidean ground metric in $\reals^d$
has been shown to factor in predicting the length of the optimal solution to the \emph{Stacker Crane problem} (SCP),
a tour through many randomly generated transportation demands.
The SCP is a combinatorial optimization problem with applications in vehicle routing, and
the prediction is in a sense parallel to the Beardwood-Halton-Hammersley (BHH) theorem~\cite{beardwood1959shortest}
about the related \emph{Traveling Salesman problem}.
Similarly, the EMD has applications in the formal analysis of Mobility-on-Demand systems.
For example, \cite{pavone2011load} and~\cite{treleaven2013} present 
conditions to ensure the stability of two Mobility-on-Demand scenarios which 
can be expressed in terms of the EMD.

\subsection{Motivation}

When $\env$ is a finite set, then it is straightforward to compute the EMD, regardless of the metric $\distance$.
For example, the EMD can be computed by reducing it to a network flow problem~\cite{ahuja1993network}.
In this paper, we call such a formulation \emph{explicit}, in a sense that
we will define formally in Section~\ref{sec:problem statement}.
Unfortunately, if
the ground domain $\env$ is not finite, then
explicit formulations of the EMD are only known in a few special cases, although
it is usually straight-forward
to obtain a $1+O(\epsilon)$ approximation in polynomial time.
% time (e.g., by discretization).
% in $1/\epsilon$.
(If $\env$ is not finite, but both distributions have finite \emph{support}, then
$\env$ can be restricted to a finite set appropriately.)
%and a large body of research has been devoted to developing efficient algorithms to provide a solution.
The finite case has received by far the most attention in recent years,
as progress on the general problem has stagnated.
All the above works except~\cite{lipman2011conformal} and~\cite{treleaven2013}
consider the \emph{discrete} version of the EMD.
Indeed, the term ``Earth Mover's distance'' seems to have been coined in~\cite{rubner1998metric}
by researchers studying the discrete case,
so the assumption of discrete domains is often implicit to its usage.

%whereas ``Wasserstein distance'' is often used to refer to the continuous analogue.
%

One of the only known \emph{non-discrete} cases with an explicit formula
%One of the only known computable non-discrete cases---%
%which is of particular interest in this paper---%
is if $\env = \reals$ and $\distance(x,y) = |x-y|$.
Then
%(i.e., the Euclidean distance): then 
\begin{equation} \label{eq:Wass_real_line}
  \Wass(\measone,\meastwo)
  = \int \left| F^\srcTag(y) - F^\trgTag(y) \right| \ dy,
\end{equation}
where $F^i$ denotes the distribution function (d.f.) of a measure $\measvar^i$,
% and $\meastwo$, respectively,
i.e., $F^i(y) := \measvar^i( \{ Y \in \reals \ : \ Y \leq y \} )$.
% for $i \in \{ \srcTag, \trgTag \}$.
%they are the .
(If $\measone$ and $\meastwo$ are \emph{probability} distributions,
then $F^\srcTag$ and $F^\trgTag$ are their respective cumulative density functions.)
R\"{u}schendorf discusses a few other ``explicit'' expressions in~\cite{Ruschendorf1985};
however, as far as we are aware, the state-of-the-art has not improved significantly since the 1980s. 

This paper is motivated largely by the work in~\cite{pavone2011load,treleaven2013},
about a vehicle ``rebalancing'' problem that appears to be fundamental to 
large scale one-to-one transportation problems.
We consider the EMD when the domain $\env$ is taken from a class of one-dimensional metric spaces
inspired by spatial road networks, and which generalizes $\reals^1$:
Their metrics are almost everywhere \emph{locally} like Euclidean $\reals^1$, but
they may have a more general, ``graph-like'' topology.
We call such spaces, simply, \emph{roadmaps}.
%We believe roadmaps represent the motivating road network environments
%more accurately than embeddings in Euclidean $\reals^2$.
Formal treatments of road networks as \emph{continuous} metric spaces are
somewhat rare in literature.
\cite{okabe2012spatial} explores one similar yet distinct branch of geometrical study.

\subsection{Contributions}

The main contribution of this paper is an explicit formulation
of the Earth mover's distance (EMD) $\Wass(\measone,\meastwo)$
for any road network $\roadnet$.
% and input distributions $\measone$ and $\meastwo$.
The result generalizes the formulation of the EMD in Euclidean $\reals^1$, which
(i) is the most trivial kind of road network, and 
(ii) had remained one of the only EMDs in a continuous domain with an explicit formula.
%
%(Our analysis requires that 
%we define the class of roadmaps in a fairly formal way.)
%
We find that even given quite general distributions,
e.g., those admitting density functions,
our formulation casts the EMD as the optimal value of
%the solution to
a finite-dimensional, real-valued optimization problem with a convex objective function and linear constraints, which
is amenable to
%techniques for convex optimization~\cite{Boyd:2004}.
convex programming techniques~\cite{Bertsekas1999}.
In the special case that the distributions $\measone$ and $\meastwo$ have piece-wise uniform (constant) density,
% (with possibly a finite number of atoms),
the problem reduces to one whose objective function is convex \emph{quadratic},
in number of variables linear in the number of \emph{pieces}.
One can solve such a problem efficiently using standard quadratic programming (QP) methods.

\subsection{Applications to Vehicle Sharing}

%With the urban population projected to jump from the current 3.5 billion to more than 6 billion in the next 30 years~\cite{UN:07},
%and the availability of urban land for road and parking bound to decrease,
%private automobiles are an unsustainable solution for the future of personal mobility in dense urban environments.
%A paradigm shift is emerging whereby the outdated concept of personal urban mobility based on private cars is being replaced by the concept of \emph{large-scale, one-way vehicle sharing}. 

Mobility-on-Demand (MoD) is a growing paradigm for one-way vehicle sharing~\cite{Mitchell2010}, where
people drive (or are driven by) shared vehicles from a point of origin to a point of destination.
%represent a major paradigm for one-way vehicle sharing 
%
Recent research~\cite{treleaven2013,pavone2011load} has shown that the EMD plays a crucial role in studying
the potential impact of MoD systems,
e.g., in terms of the fleet sizes required to meet quality of service objectives.
However, while the ubiquitous physical setting
of a vast number of transportation problems
is a ``road network'' characterized by systems of roads connected together by interchanges,
all the mathematically rigorous studies that we are aware of represent the distance between points
in a planar workspace $\workspace$
using the simple Euclidean metric.
%distances between points.
%A major limitation of such an approach is that
At small-to-medium scale (e.g., of the so-called \emph{Last Mile}),
the Euclidean distance can yield a poor approximation of roadmap distances.
The results of this paper can be used to address such limitations.

\subsection{Organization}

The rest of the paper is organized as follows.
First, we state formally the objectives of the paper in Section~\ref{sec:problem statement}.
We present the relevant background in Section~\ref{sec:background},
including basic definitions in graph theory and geometry,
and a preliminary treatment of network flow theory and properties of the EMD.
In Section~\ref{sec:roadnet geometry}, we introduce the class of roadmap metric spaces
which form the basis of our analysis;
they provide the roadmap distance ground metrics commonly associated with road networks.
In Section~\ref{sec:main result},
we present the main result of the paper, 
an explicit formulation of the EMD on road networks
as a finite-dimensional convex optimization problem.
In Section~\ref{sec:simulation} we present the results of a simulation experiment designed to
validate our result while demonstrating the role of the EMD
in characterizing the ``workload''
faced by a one-way vehicle sharing system.
%faced by a demand-responsive transport system.
%
In Section~\ref{sec:general purpose approx}, we provide a naive, general-purpose procedure
to compute an approximation of the EMD for any ground metric.
In Section~\ref{sec:roadnet approx}, we refine the procedure
using structural knowledge about road networks
to obtain a procedure which is simultaneously more efficient and more insightful.
(These approximations are integral components to a formal proof
of the correctness of our main result,
presented later in the paper.)
In Section~\ref{sec:algorithm analysis} we analyze the computational space and runtime complexity of the procedures of
Sections~\ref{sec:main result}, \ref{sec:general purpose approx}, and~\ref{sec:roadnet approx}.
We provide the formal proof of correctness of our main result in Section~\ref{sec:proof of correctness}.
Finally, we present concluding remarks in Section~\ref{sec:conclusion}.

\section{Problem Statement}
\label{sec:problem statement}

For the rest of the paper,
we will say that a formula is \emph{explicit} if
it is a closed-form expression or an integral involving closed-forms, or
if it is a convex program in terms of such expressions for which strong duality holds~\cite[Ch.~5]{Boyd:2004}.
It is essentially straightforward to compute such formulas, because
closed-forms are ``well-studied'', and
efficient techniques exist both for numerical integration and convex optimization~\cite[Ch.~11]{Boyd:2004}.
Many of the distributions on $\reals$ which are commonly used
to represent other ones
have cdfs which are considered closed-form.
Network optimization problems~\cite[Ch.~5]{Bertsekas1999} are among a broad class of convex optimization problems satisfying strong duality.
%We will call the corresponding algorithms \emph{exact}.
%

The objective of the paper is
to obtain an explicit formulation of the Earth Mover's distance,
given a roadmap $\roadnet$,
as a network optimization problem.

\section{Background}
\label{sec:background}

\subsection{Notation}

\subsubsection{Graphs}
We use the following graph notation throughout the paper:
Let $(\nodeset,\arcset)$ denote a directed graph, or \emph{di-graph},
with vertex set $\nodeset$ and a set of directed edges $\arcset$.
In general, $(\nodeset,\arcset)$ might be a \emph{multi-} di-graph,
meaning that multiple distinct edges may share the same endpoints.
For any edge $\arcvar \in \edgeset$,
let $\arcvar^-$ denote the \emph{tail} of $\arcvar$
and let $\arcvar^+$ denote the \emph{head} of $\arcvar$.
For example, if $\arcvar=(u,v)$, then $\arcvar^- = u$ and $\arcvar^+ = v$.

\subsubsection{Geometry}
\providecommand{\pointset}{\env}

\begin{definition}[Metric space]
A \emph{metric space} is the pair of a set $\pointset$ of points,
and a distance function $\distance : \pointset \times \pointset \to \reals_{\geq 0}$,
satisfying for all $\pointvar_0,\pointvar_1,\pointvar_2 \in \pointset$:
(i) the coincidence axiom,
%$\distance(\pointvar_0,\pointvar_1) = 0 \implies \pointvar_0 = \pointvar_1$;
$\distance(\pointvar_0,\pointvar_1) = 0 \iff \pointvar_0 = \pointvar_1$;
(ii) symmetry, 
$\distance(\pointvar_0,\pointvar_1) = \distance(\pointvar_1,\pointvar_0)$; and
(iii) the triangle inequality
$\distance(\pointvar_0,\pointvar_1) \leq \distance(\pointvar_0,\pointvar_2) + \distance(\pointvar_2,\pointvar_1)$.
%\begin{enumerate}
%\item $\distance(\pointvar_0,\pointvar_1) = 0 \implies \pointvar_0 = \pointvar_1$ (coincidence axiom),
%\item $\distance(\pointvar_0,\pointvar_1) \leq \distance(\pointvar_0,\pointvar_2) + \distance(\pointvar_2,\pointvar_1)$
%(triangle inequality), and
%\item $\distance(\pointvar_0,\pointvar_1) = \distance(\pointvar_1,\pointvar_0)$ (symmetry).
%\end{enumerate}
\end{definition}

\subsection{Network Optimization (on Graphs)}

%%%% flow network macros %%%%

\providecommand{\flownet}{{\mathcal N}}
\providecommand{\newflownet}{\tilde\flownet}
\providecommand{\flowmap}{f}
\providecommand{\optflow}{\flowmap^*}
\providecommand{\newflowmap}{\tilde\flowmap}

\newcommand{\supplyvar}{b}

\newcommand{\flowcost}{J}
\newcommand{\newflowcost}{\tilde\flowcost}

\renewcommand{\pathvar}{P}
\renewcommand{\pathset}{{\mathcal P}}
\renewcommand{\cyclevar}{Q}
\newcommand{\cycleset}{{\mathcal Q}}

%Let $s$ and $t$ denote two special vertices in $\nodeset$,
%called the \emph{source} and the \emph{sink}, respectively.
%
% Consider the set of extended reals
% $\overline\reals := \reals \cup \{ -\infty, +\infty \}$,
% and let $[ \overline\reals ]$ denote the set of closed intervals over $\overline\reals$;
% i.e., $[ \overline\reals ] := \{ [a,b] \ : \ a,b \in \overline\reals, a \leq b \}$.
%Let $\edgecap : \edgeset \to [ \overline\reals ]$ be a mapping---%
%we call $\edgecap$ a ``flow quota''.
%Let
%

\begin{definition}[Vertex Supplies]
Given a di-graph $(\nodeset,\arcset)$,
a \emph{supply mapping} is a function $\supplyvar : \nodeset \to \reals$.
% a \emph{supply mapping}.
A supply mapping associates with each vertex $u \in \nodeset$ a \emph{supply} $\supplyvar(u) \in \reals$.
If $\supplyvar(u) > 0$, then $u$ is called a \emph{supply node};
if $\supplyvar(u) < 0$, then $u$ is called a \emph{demand node}, with ``demand'' $-\supplyvar(u) > 0$;
if $\supplyvar(u) = 0$, then $u$ is called a \emph{transshipment node}.
(We assume that $\sum_{ u \in \nodeset } \supplyvar(u) = 0$.)
\end{definition}
\begin{definition}[Flow Network]
A \emph{flow network} $\flownet$ is a tuple $( (\nodeset,\arcset), \supplyvar )$
of a digraph, or network, $(\nodeset,\arcset)$ and a supply mapping $\supplyvar$.
%, and
%(iii) edge capacities $\edgecap$.
\end{definition}
\begin{definition}[Admissible Flow]
Given a flow network $\flownet = ( (\nodeset,\arcset), \supplyvar )$,
a \emph{flow} is any non-negative mapping $f : \edgeset \to \reals_{\geq 0}$.
An \emph{admissible flow} is a flow satisfying
\begin{align}
%%\begin{aligned}
%  & f(\arcvar) \leq \edgecap(\arcvar)
%  %& \edgelowcap(e) \leq f(e) \leq \edgehighcap(e)
%    & (\arcvar \in \edgeset)	
%  \label{eq:quotas} \\
  & \supplyvar(u) + \sum_{ \arcvar \in \arcset \ : \ \arcvar^+ = u } f(\arcvar)
  = \sum_{ \arcvar \in \arcset \ : \ \arcvar^- = u } f(\arcvar)
    & ( u \in \nodeset ).
  \label{eq:conservation}
%\end{aligned}
\end{align}
%We call~\eqref{eq:quotas} the \emph{quota constraints}
We call~\eqref{eq:conservation} the \emph{flow conservation constraints}.
We use standard shorthand notation $\flowmap \in \flownet$
(e.g., see~\cite{ahuja1993network})
to say $\flowmap$ is admissible by flow network $\flownet$.
%
% Letting
% $\edgelowcap(\cdot) \doteq \min \edgecap(\cdot)$ and
% $\edgehighcap(\cdot) \doteq \max \edgecap(\cdot)$,
% \eqref{eq:quotas} can also be written as
% $\edgelowcap(e) \leq \flowmap(e) \leq \edgehighcap(e)$ for all $e \in \edgeset$.
% Let $\flowmap \in \flownet$ denote that $\flowmap$ is an admissible flow on $\flownet$.
\end{definition}

% 
% \begin{remark}
% The flow graph $G$ is directed only for the purpose
% of endowing flows (\emph{scalar} functions) with a sense of \emph{direction}.
% A functionally equivalent flow network can be constructed on any orientation
% of the underlying graph (i.e., undirected version) of $G$.
% %
% It is often convenient, therefore, to write
% the following \emph{orientation-agnostic} notation:
% for $e = (u,v) \in \edgeset$,
% $f(u,v) \equiv f(e)$, and $f(v,u) \equiv -f(e)$;
% similarly for flow quotas,
% it can be convenient to write
% $\edgecap(u,v) \equiv \edgecap(e)$,
% and $\edgecap(v,u) \equiv -\edgecap(e)$;
% note this implies
% $\edgecap^\pm(u,v) \equiv \edgecap^\pm(e)$,
% and $\edgecap^\pm(v,u) \equiv -\edgecap^\mp(e)$.
% %in fact, we apply this convention generally for scalar mappings from $\edgeset \to \reals$.
% %$\edgelowcap$ and $\edgehighcap$.
% \end{remark}
%

\newcommand{\edgecost}{c}
\newcommand{\edgecosts}{{\bf c}}

\begin{definition}[Flow Costs]
Let $\flownet$ be a flow network and let
$\edgecosts$ be a collection associating to each edge $\arcvar \in \arcset$
a \emph{cost function} $\edgecosts( \argholder ; \arcvar )$.
%i.e., $\edgecosts =: \{ \edgecost_\arcvar : \reals_{\geq 0} \to \reals \}_{ \arcvar \in \arcset }$.
We define the total cost of a flow $\flowmap \in \flownet$
[under edge costs $\edgecosts$] as
%The cost of a flow $\flowmap \in \flownet$ is
\begin{equation} \label{eq:total flow cost}
  \flowcost(\flowmap ; \edgecosts )
	\doteq
  \sum_{ \arcvar \in \edgeset } \edgecosts( \flowmap(\arcvar) ; \arcvar ).
  %_\arcvar( \flowmap(\arcvar) ). 
\end{equation}
%In many network flow problems, each edge $\arcvar \in \arcset$ is associated
%with a \emph{cost function} $$,
%such that---%
%letting $\edgecost$ denote the mapping $\arcvar \mapsto \edgecost_\arcvar$ for each $\arcvar \in \arcset$---%
%
%%and call it the \emph{cost} of $\flowmap$ under edge costs $\edgecost$.
\end{definition}

% 
% We observe two easily-checkable properties of $\flowcost(\cdot;\edgewt)$,
% for any $\edgewt \in \reals^\arcset_{\geq 0}$: First,
% \begin{equation}
%   \flowcost( \alpha \flowmap ; \edgewt ) = |\alpha| \ \flowcost( \flowmap ; \edgewt )
% \end{equation}
% for any flow $\flowmap \in \flowset$ (not necessarily admissible)
% and $\alpha \in \reals$; and second,
% \begin{equation}
%   \flowcost( \flowmap_1 + \flowmap_2 ; \edgewt )
%   \leq
%   \flowcost( \flowmap_1 ; \edgecost ) + \flowcost( \flowmap_2 ; \edgecost )
% \end{equation}
% for any two flows $\flowmap_1, \flowmap_2 \in \flowset$.

\begin{definition}[Minimum-Cost Admissible Flow]
Given a flow network $\flownet$ and edge costs $\edgecosts$,
an admissible flow $\flowmap \in \flownet$ is a
\emph{minimum-cost admissible flow} if
$\flowcost(\flowmap ; \edgecosts) \leq \flowcost(\newflowmap; \edgecosts)$
for all admissible flows $\newflowmap \in \flownet$.
\end{definition}

\newcommand{\edgewts}{{\bf w}}
\newcommand{\edgewt}{w}
\newcommand{\newedgewts}{{\tilde\edgewts}}

\begin{definition}[Linearly ``Weighted'' Flow Costs]
If edge costs $\edgecosts$ have the property
that $\flowcost(\flowmap ;\edgecosts)$ is \emph{linear} in $\flowmap$,
i.e., for some \emph{edge weights}
$\edgewts : \arcset \to \reals_{\geq 0}$,
%$\edgewts =: \{ \edgewt_\arcvar \in \reals_{\geq 0} \}_{\arcvar \in \arcset}$,
%: \arcset \to \reals_{\geq 0}$
%($\edgewt_\arcvar \equiv \edgewts(\arcvar)$),
$\flowcost( \flowmap ;\edgecosts) = \sum_{\arcvar \in \arcset} \edgewts(\arcvar) \flowmap(\arcvar)$,
then we write $\flowcost( \cdot ; \edgewts ) \equiv \flowcost( \cdot ; \edgecosts )$.
%\begin{equation} \label{eq:flowcost_linear}
%  \flowcost(\flowmap ; \edgewts )
%  :=
%  \sum_{ \arcvar \in \arcset } \edgewt_\arcvar \flowmap(\arcvar).
%\end{equation}
\end{definition}

\subsection{The Earth Mover's Distance---Properties}

When the domain $\env$ is a finite set,
then the EMD is given by the cost of the optimal solution to:
%\begin{equation} \text{
minimize
over all possible mappings $\gamma : \env^2 \to \reals_{\geq 0}$,
such that
$\sum_{j \in \env} \gamma(i,j) = \measone(i)$ for all $i \in \env$ and
$\sum_{i \in \env} \gamma(i,j) = \meastwo(j)$ for all $j \in \env$,
the cost
$\sum_{i,j \in \env} \gamma(i,j) \distance(i,j)$.
%} \end{equation}
%
\begin{remark}[Network flow interpretation of EMD]
\label{remark:network flow interp}
Equivalently, the EMD is the cost of the minimum-cost admissible flow on the \emph{distance network} over $\env$---%
the complete, directed graph on $\env$ where each edge $(i,j)$ has weight $\distance(i,j)$---%
with supplies $\supplyvar(\cdot) := \measone(\cdot) - \meastwo(\cdot)$.
(This interpretation is valid so long as $\distance$ is a true distance metric.)
%i.e., $\distance(i,j) = 0$ if $i=j$.)
\end{remark}
The generalization of such notions to continuous metric spaces (e.g., Euclidean $\reals^d$)
requires measure-theoretic considerations resulting in~\eqref{eq:EMDdef}.
%The equation~\ref{eq:EMDdef} represents an extension of this notion
%to a more general class of domains (e.g., $\env$ continuous).
%

The EMD has a quite general shift-invariance property which will be exploited crucially in this paper:
\begin{prop}[Additive invariance of EMD] \label{lemma:EMD add invariant}
Let $\measone$, $\meastwo$, and $\tilde\measvar$ be three distributions
over a finite domain $\env$.
%satisfying Assumption~\ref{assump:nicemeasures}.
Then $\Wass(\measone+\tilde\measvar,\meastwo+\tilde\measvar) = \Wass(\measone,\meastwo)$.
\end{prop}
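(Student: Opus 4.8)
The plan is to reduce the claim to the network-flow interpretation of the EMD recorded in Remark~\ref{remark:network flow interp}, and then observe that adding the same measure to both arguments leaves the underlying \emph{supply mapping} unchanged. First I would invoke Remark~\ref{remark:network flow interp}: since $\env$ is finite and $\distance$ is a genuine metric, $\Wass(\measone,\meastwo)$ equals the cost of a minimum-cost admissible flow on the distance network over $\env$ (the complete di-graph on $\env$ with edge weights $\distance(i,j)$) equipped with the supply mapping $\supplyvar(\cdot) \doteq \measone(\cdot) - \meastwo(\cdot)$. Applying the same identity to the pair $(\measone+\tilde\measvar,\,\meastwo+\tilde\measvar)$ expresses $\Wass(\measone+\tilde\measvar,\meastwo+\tilde\measvar)$ as the minimum-cost admissible flow on the \emph{same} di-graph with the supply mapping $(\measone+\tilde\measvar)(\cdot) - (\meastwo+\tilde\measvar)(\cdot)$.

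The key step is then the trivial cancellation $(\measone+\tilde\measvar)(\cdot) - (\meastwo+\tilde\measvar)(\cdot) = \measone(\cdot) - \meastwo(\cdot)$, i.e., the two supply mappings are identical. Hence the two flow networks coincide (same graph, same edge weights, same supplies), so their minimum-cost admissible flow values are equal, which is exactly the asserted identity. I would also note in passing that $\measone+\tilde\measvar$ and $\meastwo+\tilde\measvar$ carry equal total mass, so the shifted flow problem is feasible and the EMD on the right-hand side is well defined.

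As a complementary, coupling-level check — and to make the ``$\leq$'' direction fully transparent — I would add the following observation: given any coupling $\gamma \in \Gamma(\measone,\meastwo)$, the map $\gamma'(i,j) \doteq \gamma(i,j) + \tilde\measvar(i)\,\krondelta[i=j]$ is a coupling of $\measone+\tilde\measvar$ and $\meastwo+\tilde\measvar$ with the same transport cost as $\gamma$, because the added diagonal mass travels zero distance ($\distance(i,i)=0$); taking the infimum over $\gamma$ gives $\Wass(\measone+\tilde\measvar,\meastwo+\tilde\measvar) \leq \Wass(\measone,\meastwo)$. The part that is \emph{not} obviously elementary at the level of couplings is the reverse inequality: an optimal coupling of the shifted measures need not decompose along the diagonal, so extracting from it a cheaper-or-equal coupling of $\measone$ and $\meastwo$ is not immediate. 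This is precisely the (mild) obstacle that the flow formulation sidesteps, since there both inequalities drop out at once from the equality of supply mappings; accordingly I would make the flow-based argument the formal proof and keep the coupling construction only as an illustrative aside.
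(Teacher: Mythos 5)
Your proof is correct and takes essentially the same route as the paper: both reduce the claim to the network-flow interpretation of Remark~\ref{remark:network flow interp} and observe that the supply mapping $\measone(\cdot)-\meastwo(\cdot)$ is unchanged by adding $\tilde\measvar$ to both arguments. The coupling-level aside is a harmless (and accurate) embellishment, but the core argument matches the paper's one-line proof.
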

\begin{proof}
The proof is simply by Remark~\ref{remark:network flow interp}
and observing that the supply mapping $\supplyvar(\cdot) = \measone(\cdot) - \meastwo(\cdot)$
is invariant to the addition.
\end{proof}
Proposition~\ref{lemma:EMD add invariant} formalizes the intuitive notion that adding
the same ``offset'' to two histograms should not affect the cost of transforming one into the other.
%the amount of dirt is that adding (or subtracting) equal mass,
%in both measures and at the same location,
%should not affect the total cost of transportation;
%i.e., it neither alleviates nor incurs additional transport costs.
%
Now let the symbol $\preceq$ denote a \emph{vector inequality},
such that in finite domains $\env$, $\measvar' \preceq \measvar$ means that
$\measvar'(i) \leq \measvar(i)$ for all $i \in \env$.
(Such inequality generalized readily.)
\begin{corollary}[Subtractive invariance of EMD] \label{lemma:EMD subtract invariant}
Let $\measone$, $\meastwo$, and $\tilde\measvar$ be three distributions
over a finite domain $\env$,
with $\tilde\measvar \preceq \measone$ and $\tilde\measvar \preceq \meastwo$.
%satisfying Assumption~\ref{assump:nicemeasures}.
Then $\Wass(\measone - \tilde\measvar,\meastwo - \tilde\measvar) = \Wass(\measone,\meastwo)$.
\end{corollary}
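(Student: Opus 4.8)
The plan is to derive Corollary~\ref{lemma:EMD subtract invariant} directly from Proposition~\ref{lemma:EMD add invariant} by a change of variables, choosing the quantities so that the ``subtraction'' becomes an ``addition.'' Concretely, I would set $\measone' \doteq \measone - \tilde\measvar$ and $\meastwo' \doteq \meastwo - \tilde\measvar$. The hypotheses $\tilde\measvar \preceq \measone$ and $\tilde\measvar \preceq \meastwo$ are exactly what is needed to ensure $\measone'$ and $\meastwo'$ are again nonnegative mappings on $\env$ (and they clearly still have equal total mass, so they are legitimate distributions of the sort handled by the earlier results).

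With this notation, I would apply Proposition~\ref{lemma:EMD add invariant} to the triple $\measone'$, $\meastwo'$, and the offset $\tilde\measvar$: it gives
\begin{equation*}
  \Wass(\measone' + \tilde\measvar, \meastwo' + \tilde\measvar) = \Wass(\measone', \meastwo').
\end{equation*}
Since $\measone' + \tilde\measvar = \measone$ and $\meastwo' + \tilde\measvar = \meastwo$ by construction, the left-hand side is $\Wass(\measone,\meastwo)$ and the right-hand side is $\Wass(\measone - \tilde\measvar, \meastwo - \tilde\measvar)$, which is precisely the claimed identity.

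The only real subtlety — and it is minor — is bookkeeping about what kind of object $\tilde\measvar$, $\measone'$, and $\meastwo'$ are: Proposition~\ref{lemma:EMD add invariant} is stated for ``distributions'' over $\env$, so I would note that the nonnegativity guaranteed by $\tilde\measvar \preceq \measone$, $\tilde\measvar \preceq \meastwo$ is what lets me invoke it with $\measone'$ and $\meastwo'$ in the role of the two distributions. No estimate or optimization argument is needed; the result is a formal consequence of the additive version together with the constraint that keeps the subtracted measures in range. I do not expect any genuine obstacle here — the corollary is essentially a restatement of Proposition~\ref{lemma:EMD add invariant} read in the reverse direction.
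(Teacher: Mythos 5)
Your proposal is correct and is essentially identical to the paper's own proof: both rewrite $\measone = (\measone-\tilde\measvar)+\tilde\measvar$ and $\meastwo = (\meastwo-\tilde\measvar)+\tilde\measvar$ and invoke Proposition~\ref{lemma:EMD add invariant} with the subtracted measures playing the role of the two distributions and $\tilde\measvar$ as the offset. Your added remark that the hypotheses $\tilde\measvar \preceq \measone$, $\tilde\measvar \preceq \meastwo$ are exactly what keeps the differences nonnegative is the same (implicit) justification the paper relies on.
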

\begin{proof}
The proof is simply by observing that since
$\tilde\measvar \preceq \measone$ and $\tilde\measvar \preceq \meastwo$, then
$\Wass(\measone,\meastwo)
	= \Wass( (\measone-\tilde\measvar)+\tilde\measvar, (\meastwo-\tilde\measvar)+\tilde\measvar )$.
Applying Prop~\ref{lemma:EMD add invariant} obtains the corollary.
\end{proof}
Prop.~\ref{lemma:EMD add invariant} and Corollary~\ref{lemma:EMD subtract invariant}
generalize fully, but the proofs are beyond the scope of this paper.
The finite-version proofs have been presented for the sake of intuition.

\section{The Geometry of Road Networks}
\label{sec:roadnet geometry}

\newcommand{\coorddist}{d}
\newcommand{\roadobj}{{\bf x}}

A roadmap can be described in terms of a set of lines or curves
%(perhaps embedded in Euclidean $\reals^2$)
connected together into a particular pattern by their endpoints;
the distance between points on a roadmap is the minimum distance by which
a particle (or vehicle) could reach one point from the other
while constrained to travel on the curves, or \emph{roads}.
It is common practice, e.g., by modern postal services,
to represent the topology of a roadmap using an undirected weighted graph or multi-graph $(\roadverts,\roadset)$,
possibly with loops,
where the edges $\roadset$ correspond to roads in the roadmap and are labeled with \emph{lengths}, and
the vertices $\roadverts$ describe their interconnections.
Another common practice is to attach to such graph a coordinate system:
Given a fixed orientation of the roadmap graph,
%and a road $\roadvar \in \roadset$,
every point on the roadmap continuum can be described unambiguously by a tuple, or \emph{address} $(\roadvar,y)$,
of a road $\roadvar \in \roadset$ and a real-valued coordinate $\coordvar$ between zero and the length $\roadlen_\roadvar$ of $\roadvar$.
%
%Such \emph{addresses} are a ubiquitous method in practice for describing points on real roadmaps.
There is an intuitive notion of ``roadmap distance'' between points described by such addresses,
arising from two basic assertions:
(i) there is a path between any two points on the same road, of length equal to
the difference between their address coordinates;
(ii) there is a special point for every roadmap vertex $u \in \roadverts$ which is on all the roads adjacent to $u$ simultaneously .
% simthere are $|\roadverts|$ points, one at each $u \in \roadverts$any ``extrema'' addresses which correspond to the same roadmap vertex---where
%$(\roadvar,0)$ corresponds to $\roadvar^-$ and
%$(\roadvar,\roadlen_\roadvar)$ corresponds to $\roadvar^+$---
%are all addresses for the same point.
%

In this paper, we assume an orientation of the road system has been fixed, so that $\roadset$ is directed.
If an address $\coordvar$ refers to a road $\roadvar$, then
we say $\coordvar \in \roadvar$.
If the coordinate of $\coordvar$ is $y=0$ or $y=\roadlen_\roadvar$, then the coordinate also corresponds to a road endpoint
(the tail or the head, respectively):
if $y = 0$, then we say $\coordvar \in \roadvar^-$;
if $y = \roadlen_\roadvar$, then we say $\coordvar \in \roadvar^+$.

\newcommand{\coordToPoint}{\hat\roadnet}

\begin{definition}[Road Network]
A \emph{road network} is a metric space $(\roadnet,\distance)$,
with point set $\roadnet \doteq \roadverts \cup \{ (\roadvar,y) \ : \ \roadvar\in\roadset, 0 < y < \roadlen_\roadvar \}$
for some representation $(\roadverts,\roadset,\roadlen)$, 
such that
%\end{equation}
for every pair of points $(\pointvar_1,\pointvar_2) \in \roadnet^2$,
the distance $\distance(\pointvar_1,\pointvar_2)$ is equal to the shortest roadmap distance
between addresses of $\pointvar_1$ and $\pointvar_2$, respectively.
%
%Given a metric space $(\roadnet,\distance_\roadnet)$,
%if there exists a surjective mapping $\coordToPoint : \coordset \to \roadnet$ such that,
%for every pair of coordinates $\coordvar,\coordvar' \in \coordset$,
%$\distance_\roadnet( \coordToPoint(\coordvar), \coordToPoint(\coordvar') ) = \coorddist^*(\coordvar,\coordvar')$,
%then $(\roadnet,\distance_\roadnet)$ is a \emph{road network}
%with representation $(\roadverts,\roadset)$ and $\roadlen$,
%and coordinate system $\coordset$.
%%If there exists a surjective (onto) mapping $\coordmap(\cdot) : \coordset \to \roadnet$,
%%such that $\distfunc{\coordmap(\coordvar),\coordmap(\coordvar')}_\roadnet = \coorddist^*(\coordvar,\coordvar')$ for all $\coordvar,\coordvar' \in \coordset$,
%%then we say that $\roadnet$ is a \emph{road network} having \emph{coordinate system} $(\roadverts,\roadset,\roadlen_\roadvar)$.
\end{definition}

%\begin{remark}[A Construction from Representation]
%\label{remark:roadnet construction}
%A roadmap metric space can be construct directly from a representation,
%by the surjection
%$\coordToPoint(\coordvar) := \roadvert$ if $\coordvar \in \roadvert$ for some $\roadvert \in \roadverts$,
%otherwise $\coordvar$.
%This function maps $\coordset$ to the point set
%i.e., it eliminates the multiplicity of coordinates for $\roadverts$.
%\end{remark}

%The distance metric $\distance$ on a roadmap $\roadnet$ simply computes the length of the shortest path between points.

%

%\begin{definition}[Canonical road network]
%%Any metric space satisfying the previous definition can serve as the mathematical object  road network.
%Perhaps the most straightforward road network realization,
%which we will call the \emph{canonical road network},
%is the one induced by the surjection
%\end{definition}
%

\begin{prop} \label{prop:roadnet_complete_separable}
Road networks are complete and separable.
%Let $\roadnet = (\roadverts,\roadset,\roadlen_\roadvar)$ be a road network
%with distance metric $\distfunc{\cdot,\cdot}_\roadnet$.
%Then $(\roadnet,\distfunc{\cdot,\cdot}_\roadnet)$
%is a complete and separable metric space.
\end{prop}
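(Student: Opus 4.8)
The plan is to prove the stronger fact that a finite road network $(\roadnet,\distance)$ is \emph{compact}; completeness and separability are then immediate, since a compact metric space is complete (every Cauchy sequence has a convergent subsequence and hence converges) and separable (it is totally bounded, so a countable union of finite $\frac{1}{n}$-nets is dense). Thus it suffices to exhibit $\roadnet$ as a finite union of compact subsets.

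For each road $\roadvar \in \roadset$, let $\iota_\roadvar : [0,\roadlen_\roadvar] \to \roadnet$ be the natural parametrization: $\iota_\roadvar(y) = (\roadvar,y)$ for $0 < y < \roadlen_\roadvar$, $\iota_\roadvar(0) = \roadvar^-$, and $\iota_\roadvar(\roadlen_\roadvar) = \roadvar^+$; let $\overline{\roadvar} := \iota_\roadvar([0,\roadlen_\roadvar])$ be the corresponding \emph{closed road}. The key step is to verify that $\iota_\roadvar$ is $1$-Lipschitz from the usual metric on $[0,\roadlen_\roadvar]$ into $(\roadnet,\distance)$: by assertion~(i) in the definition of roadmap distance there is a path along $\roadvar$ of length exactly $|y-y'|$ between the points with coordinates $y$ and $y'$, and by assertion~(ii) the endpoint vertices $\roadvar^-$ and $\roadvar^+$ themselves carry the addresses $(\roadvar,0)$ and $(\roadvar,\roadlen_\roadvar)$; hence $\distance(\iota_\roadvar(a),\iota_\roadvar(b)) \leq |a-b|$ for all $a,b$. (The reverse inequality can fail — a shortcut through the rest of the network may make two points on a long road closer than their coordinate gap — but it is not needed.) Being continuous, $\iota_\roadvar$ carries the compact interval $[0,\roadlen_\roadvar]$ onto a compact subset $\overline{\roadvar}$ of $(\roadnet,\distance)$.

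To finish, observe that $\roadnet = \bigcup_{\roadvar \in \roadset}\overline{\roadvar}$: each interior address $(\roadvar,y)$ lies in $\overline{\roadvar}$, and each vertex, being an endpoint of some road, lies in that road's closed copy (using the standing assumption that every vertex is an endpoint of at least one road). Since $\roadset$ is finite, $\roadnet$ is a finite union of compact sets, hence compact, and the proposition follows. The only non-routine point is the $1$-Lipschitz estimate — checking that the abstract shortest-path metric is controlled along a single road by the obvious arc-length coordinate; the remaining degeneracies, chiefly loop edges (for which $\iota_\roadvar$ fails to be injective but stays continuous, so $\overline{\roadvar}$ is still compact), require only minor care. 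If one prefers to avoid invoking compactness, the same decomposition yields a direct argument: pigeonhole places a subsequence of any Cauchy sequence inside a single $\overline{\roadvar}$, whose sequential compactness produces a convergent sub-subsequence (so the Cauchy sequence itself converges), while for separability one takes the rational-coordinate points on each road together with the vertices.
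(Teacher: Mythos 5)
Your proof is correct, and it takes a genuinely different route from the paper's. The paper argues completeness directly: it decomposes the point set into the disjoint open road intervals plus the finite vertex set $\roadverts$, observes that the only limit points missing from the intervals are their boundaries, which are supplied by the vertices, and concludes separability from the fact that $\roadnet$ is a finite union of separable components. You instead prove the strictly stronger statement that $(\roadnet,\distance)$ is \emph{compact}, by exhibiting it as a finite union of closed roads, each the continuous image of a compact interval $[0,\roadlen_\roadvar]$ under the $1$-Lipschitz parametrization $\iota_\roadvar$. Your version buys rigor and uniformity: the one nontrivial analytic fact (the Lipschitz estimate $\distance(\iota_\roadvar(a),\iota_\roadvar(b))\le|a-b|$, which follows from assertion~(i) of the roadmap-distance definition) is isolated and verified, and compactness then dispatches both conclusions at once --- including the case of Cauchy sequences that hop between different roads near a shared vertex, which the paper's limit-point sketch glosses over. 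It also hands you a stronger conclusion that is independently useful (e.g., for existence of optimal couplings in~\eqref{eq:EMDdef}). What the paper's argument buys in exchange is brevity and slightly weaker hypotheses in spirit: its separability claim (finite, or even countable, union of separable pieces) would survive settings where compactness fails, such as roads of infinite length, though no such setting arises here since the paper restricts to finite road networks with finite road lengths. Your parenthetical cautions --- that the reverse Lipschitz inequality can fail because of shortcuts, that loop edges make $\iota_\roadvar$ non-injective but still continuous, and that isolated vertices would need to be covered separately --- are exactly the right degeneracies to flag, and none of them damages the argument.
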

\begin{proof}
%Any road network with representation $(\roadverts,\roadset)$ and $\roadlen$ is isomorphic to
%the construction in Remark~\ref{remark:roadnet construction};
The point set of a road network is composed of
(i) a set of open intervals, all disjoint (the roads), and
%, each homeomorphic to $(0,1) \subset \reals$), and
(ii) another \emph{finite} point set, i.e. $\roadverts$.
The only limit points missing from the collection of roads are the interval boundaries,
which are finite in number and are ``filled in'' by (ii).
Injection of the finite set of points $\roadverts$ cannot introduce \emph{new} limit points, therefore $\roadnet$ is complete.
$\roadnet$ is separable because it is a finite union of separable components.
% is separable, and $\roadnet$ is their finite union.
\end{proof}
%
%\begin{corollary}
%The pair $(\roadnet,\sigfield)$,
%of a road network $\roadnet$
%and its Borel $\sigma$-algebra $\sigfield$ is a measurable space.
%\end{corollary}
%Note that since where
%$\sigfield_\roadnet$ denotes the Borel $\sigma$-algebra
%generated by $(\roadnet,\distance_\roadnet)$.
\begin{corollary}
The Earth Mover's distance~\eqref{eq:EMDdef} is well defined on any road network $(\roadnet,\distance)$, $\roadnet =: \env$.
%$(\roadnet,\sigfield_\roadnet)$.
\end{corollary}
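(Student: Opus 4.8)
The plan is to observe that this is an immediate consequence of Proposition~\ref{prop:roadnet_complete_separable} together with the hypotheses under which~\eqref{eq:EMDdef} was stated. Recall that the EMD $\Wass(\measone,\meastwo)$ in~\eqref{eq:EMDdef} is \emph{defined} for any complete and separable metric space $(\env,\distance)$; the only ingredients needed for the definition to make sense are (i) that $(\env,\distance)$ is a metric space, (ii) that it is complete, and (iii) that it is separable. A road network $(\roadnet,\distance)$ is a metric space by definition, and Proposition~\ref{prop:roadnet_complete_separable} supplies exactly (ii) and (iii). Taking $\env := \roadnet$ therefore places us verbatim in the setting of~\eqref{eq:EMDdef}, so $\Wass(\measone,\meastwo)$ is well defined for any pair of (Borel probability) measures $\measone,\meastwo$ on $\roadnet$.

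If one wishes to be fully careful about what ``well defined'' entails, I would add two brief remarks. First, the search space $\Gamma(\measone,\meastwo)$ is nonempty: the product measure $\measone\otimes\meastwo$ is always a coupling, so the infimum in~\eqref{eq:EMDdef} is taken over a nonempty set. Second, the integrand $\distance(\pointvar,\pointvar')$ is continuous, hence Borel measurable, on $\roadnet\times\roadnet$ (continuity of the metric is automatic on any metric space), so the integral $\int_\env \distance(\pointvar,\pointvar')\,d\gamma(\pointvar,\pointvar')$ is a well-defined element of $[0,+\infty]$ for every $\gamma\in\Gamma(\measone,\meastwo)$, and the infimum of a nonempty subset of $[0,+\infty]$ exists. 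Thus $\Wass(\measone,\meastwo)\in[0,+\infty]$ is unambiguously determined.

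I do not anticipate any real obstacle here: the content of the corollary has already been done in Proposition~\ref{prop:roadnet_complete_separable}, and everything else is a matter of matching hypotheses. The only thing worth being explicit about is that the measure-theoretic machinery invoked in passing after~\eqref{eq:EMDdef} (existence and properties of couplings, measurability of the cost) requires precisely the Polish-space property — completeness plus separability — which is why Proposition~\ref{prop:roadnet_complete_separable} is the crucial prerequisite and this statement is phrased as a corollary to it.
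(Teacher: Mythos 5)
Your proposal is correct and rests on the same pillar as the paper's proof, namely Proposition~\ref{prop:roadnet_complete_separable}. The only difference is in how the final step is discharged: the paper notes that a complete and separable metric space is Polish, hence a Radon space, and then cites the general theory (Ambrosio et al., Ch.~7) for the fact that the $1$-Wasserstein distance is defined on all Radon spaces; you instead verify well-definedness from first principles --- nonemptiness of $\Gamma(\measone,\meastwo)$ via the product coupling $\measone\otimes\meastwo$, Borel measurability of the continuous integrand $\distance$, and existence of the infimum of a nonempty subset of $[0,+\infty]$. Your route is more self-contained and elementary; the paper's citation buys, in addition, the standard structural facts about the Wasserstein distance on Polish spaces (e.g., attainment of the infimum and the metric axioms) that go beyond mere well-definedness of the expression. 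Either argument is acceptable here.
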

\begin{proof}
A road network $(\roadnet,\distance)$, being a complete and separable metric space,
is therefore a Polish metric space, and also a Radon space.
The Earth Mover's distance is the same as the $1$-Wasserstein distance,
which is defined for all Radon spaces~\cite[Ch.7]{ambrosio2005gradient}.
\end{proof}
%
%The proof of Proposition~\ref{prop:roadnet_complete_separable} is given also in
%\ktmargin{Appendix~\ref{}}{Put it there, shouldn't be a hard proof. It's obvious enough.}.

\subsection{Probability and Road Networks}

\newcommand{\outcomes}{\Omega}
\newcommand{\borelfield}{{\mathcal B}}
\newcommand{\Lebmeas}{\lambda}

\newcommand{\OutcomeMap}{\Delta}

\newcommand{\denset}{{\boldsymbol\den}}		% can be made regular if we want
\newcommand{\densetone}{\denset^\srcTag}
\newcommand{\densettwo}{\denset^\trgTag}

%Let $\outcomes_\roadvar \doteq (0,\roadlen_\roadvar)$, and
%let $\borelfield_\roadvar$ denote the Borel sets on $\outcomes_\roadvar$,
%for each $\roadvar \in \roadset$;
%%
%let $\Lebmeas_\roadvar$ denote the Lebesgue measure on $\borelfield_\roadvar$.
%%
%Let $\tilde\outcomes$ denote the Cartesian product space $\prod_{\roadvar \in \roadset} \outcomes_\roadvar$, and
%let $\tilde\borelfield$ denote the product space of $\borelfield_\roadvar$ for all $\roadvar \in \roadset$, i.e.,
%the Borel sets on $\tilde\outcomes$ (a $\sigma$-algebra product, not a Cartesian one).
%
%%
%Let $\OutcomeMap$ denote 
%
%\begin{equation}
%\OutcomeMap( B )
%	\doteq	
%	\{ \bigcup_{\roadvar \in \roadset} b_\roadvar :
%\end{equation}
%
%%comeback
%Let $\borelfield_\roadverts \doteq 2^\roadverts$.

%\begin{definition}[Measure Spaces on Roadmaps]
Given a road network metric space $(\roadnet,\distance)$,
let $\borelfield$ denote the \emph{Borel sets} ($\sigma$-algebra)
generated by all the open sets in the topology defined on $\roadnet$ by $\distance$.
Let $\sigfield$ denote the corresponding \emph{Lebesgue measurable sets}.
%Let $\sigfield_\roadverts$ denote the power set $2^\roadverts$.
%%
%For each $\roadvar \in \roadset$,
%let $\sigfield_\roadvar$ denote the Borel sets of the interval $(0,\roadlen_\roadvar)$;
%%
%We form the set
%$\tilde\sigfield_\roadnet \doteq \sigfield_\roadverts \times \prod_{\roadvar \in \roadset} \sigfield_\roadvar$.
%%
%%	\{ U \times \prod_{\roadvar \in \roadset} A_\roadvar : $
%%	$\ U \in \sigfield_\roadverts$, $A_\roadvar \in \sigfield_\roadvar$ for each $\roadvar \in \roadset \}$.
%%%
%For each $\roadvar \in \roadset$, and any $A \in \sigfield_\roadvar$,
%let $\roadvar(A) \doteq \{ (\roadvar,y) : \ y \in A \}$.
%%
%Then we call the set
%$\sigfield_\roadnet \doteq
%	\{ U \cup \, \bigcup_{\roadvar \in \roadset} \roadvar(A_\roadvar) :$
%for each $\ U \times \prod_{\roadvar \in \roadset} A_\roadvar \in \tilde\sigfield_\roadnet \}$
%%$\ U \in \sigfield_\roadverts$, $A_\roadvar \in \sigfield_\roadvar$ for each $\roadvar \in \roadset \}$
%the \emph{Borel sets} of $\roadnet$.
%\end{definition}

\begin{definition}[Absolute continuity of measure]
A measure $\measvar$ over a measurable roadmap $(\roadnet,\sigfield)$ is
\emph{absolutely continuous} if there exists a Lebesgue measurable mapping
$\den_\measvar$
such that $\measvar(A) = \int_A \den_\measvar(\pointvar) \ d\pointvar$
for all $A \in \sigfield$;
equivalently, if there exists a set of mappings 
$\denset =: \{ \den_\roadvar : \reals \to \reals \}_{ \roadvar \in \roadset }$
such that
\begin{equation}
  \measvar(A) =
    \sum_{ \roadvar \in \roadset }
    \int_{ y \ : \ (\roadvar,y) \in A }
      \den_\roadvar(y) \ dy.
\end{equation}
We call the components of $\denset$ the \emph{road densities}.
\end{definition}

\begin{assump} \label{assump:nicemeasures}
We restrict our attention to finite, absolutely continuous probability distributions (unity total measure) on road networks,
%which are absolutely continuous,
with Lipschitz road densities.
%whose road densities are all Lipschitzpiece-wise continuous, 
%with \emph{finitely}-many uniformly bounded and Reimann-integrable pieces.

%with finiteuniformly bounded, and Reimann-integrable.
%  piece-wise continuous and uniformly bounded,
%with finitely-many pieces.
%%mappings $\den_\roadvar : \reals_{\geq 0} \to \reals_{\geq 0}$
%%having finitely-many pieces,
%with piece-wise continuous density functions of 
\end{assump}

In this paper,
we will denote by
$\densetone = \{ \denone_\roadvar \}_{\roadvar \in \roadset}$ and
$\densettwo = \{ \dentwo_\roadvar \}_{\roadvar \in \roadset}$
the densities of distributions $\measone$ and $\meastwo$, respectively.

\begin{definition}[Cumulative density function]
\label{def:cumden}
Given a Lipschitz density function $\den : [0,\roadlen] \to \reals_{\geq 0}$,
% over a finite interval $[0,\roadlen]$,
let
\[
  \cumden( y ; \den ) \doteq \int_0^{ y } \den(y') \ dy'.
\]
$\cumden( \cdot ; \den )$ is called the cumulative density function (cdf) of $\den$, and
for $\den$ Lipschitz, $\cumden$ is continuous and non-decreasing.
Let $\invcumden( x ; \den ) \doteq \inf \{ y : \ \cumden( y ; \den ) \geq x \}$.
$\invcumden( \cdot ; \den )$ is called the \emph{inverse} cumulative density function, because
% \leq \lim_{y\to\infty} \cumden(y;\den)$.
$\cumden( \invcumden( x ; \den ) ; \den ) = x$ for all $x \in [0,\roadlen]$.
%$\cumden$ is monotonic and by Assumption~\ref{assump:nicemeasures} continuous,
%therefore it can be verified that $\invcumden$ is defined \emph{almost everywhere} (a.e.),
%e.g., except at finitely-many points of discontinuity.
%since $\cumden
%\equiv \cumden( \cdot ; \den )^{-1}$.
\end{definition}

% \begin{definition}[Total probability]
% Let
% \[
%   \cumden(\den) := \int \den(y) \ dy;
% \]
% i.e., $\cumden( \cdot )$ gives the total mass of the density function argument.
% \end{definition}
%

\section{The Earth Movers Distance on Road Networks}
\label{sec:main result}

\newcommand{\wasstag}{{\rm EXACT}}
\newcommand{\wassnet}{\flownet^\wasstag}
\newcommand{\wassnodes}{\nodeset^\wasstag}
\newcommand{\wassedges}{\arcset^\wasstag}
\newcommand{\wasssupply}{\supplyvar^\wasstag}
\newcommand{\wasscosts}{\edgecosts^\wasstag}

\newcommand{\decisionedges}{\edgeset^{\rm Dec}}
\newcommand{\routingedges}{\edgeset^{\rm Rte}}
\newcommand{\routingwts}{\edgewts^{\rm Rte}}

\newcommand{\supplyset}{S}
\newcommand{\demandset}{D}
\newcommand{\xshipset}{T}

\newcommand{\costform}{q}

\newcommand{\reverseden}{\chi}
\newcommand{\reversedenset}{ {\boldsymbol\reverseden} }

\newcommand{\shortestpath}{{\rm Dijkstra}}

\subsection{Formulation}
\label{sec:main result construction}

A network optimization problem instance is the pair $(\flownet,\edgecosts)$
of a flow network $\flownet$ and edge costs $\edgecosts$.
In this section, we provide a method
to construct a finite-dimensional, convex problem instance
whose optimal solution has cost equal to $\Wass(\measone,\meastwo)$, where
$\measone$ and $\meastwo$ are input distributions
over a roadmap $\roadnet$ described by $(\roadverts,\roadset)$.
We will refer to our particular construction of $\flownet$ as the \emph{Wasserstein network}.
%likewise, we will refer to the network $\flownet$ as a \emph{Wasserstein network}.
%between $\measone$ and $\meastwo$;
%That is, the cost of the optimal solution is equal to the infimum~\eqref{eq:EMDdef} with $\env = \roadnet$;
%the problem is \emph{convex} and has a finite-dimensional search space.

\subsubsection{Technical Assumptions}
%\emph{Technical Assumptions}

\newcommand{\denmin}{\underline\den}
\newcommand{\newdenone}{{\tilde\den}^\srcTag}
\newcommand{\newdentwo}{{\tilde\den}^\trgTag}

\begin{assump} \label{assump:pointwise_posmutex}
For technical reasons, we assume that the supports of $\measone$ and $\meastwo$ are disjoint;
e.g., it holds that $\denone_\roadvar(y) \times \dentwo_\roadvar(y) = 0$
for all $\roadvar\in\roadset$, $y \in [ 0, \roadlen_\roadvar ]$.
\end{assump}
Assumption~\ref{assump:pointwise_posmutex} is actually without loss of generality,
since one may subtract the $\min$ of $\measone$ and $\meastwo$ without altering the EMD
(Corollary~\ref{lemma:EMD subtract invariant}, generalized).
%remove the as evidenced by Proposition~\ref{prop:justify_road_posmutex} below.
%
%\begin{prop} \label{prop:justify_road_posmutex}
%Suppose measures $\measone$ and $\meastwo$ do not satisfy Assumption~\ref{assump:pointwise_posmutex}.
%%
%Let $\hat\roadset$ denote the set of offending roads.
%For each road $\roadvar\in\hat\roadset$,
%%with density functions $\denone_\roadvar$ and $\dentwo_\roadvar$,
%let $\denmin_\roadvar(y) := \min \{ \ \denone_\roadvar(y), \dentwo_\roadvar(y) \ \}$,
%and let $\newdenone_\roadvar := \denone_\roadvar - \denmin_\roadvar$ and
%$\newdentwo_\roadvar := \dentwo_\roadvar - \denmin_\roadvar$.
%%
%Let $\newmeasone$ and $\newmeastwo$ be the measures that result from replacing
%$\denone_\roadvar$ and $\dentwo_\roadvar$ by
%$\newdenone_\roadvar$ and $\newdentwo_\roadvar$, respectively,
%for each road $\roadvar \in \hat\roadset$.
%%
%Then $\newmeasone$ and $\newmeastwo$ satisfy Assumption~\ref{assump:pointwise_posmutex},
%and $\Wass(\newmeasone,\newmeastwo) = \Wass(\measone,\meastwo)$.
%\end{prop}
%%
%Proposition~\ref{prop:justify_road_posmutex} formalizes the intuitive notion that adding
%the same pile of ``dirt'' to two profiles should not affect the cost of transforming one
%profile into the other;
%%the amount of dirt is that adding (or subtracting) equal mass,
%%in both measures and at the same location,
%%should not affect the total cost of transportation;
%i.e., it neither alleviates nor incurs additional transport costs.
%We defer the proof until Section~\ref{}.
% because it requires some results we have not established yet.
%
\begin{assump} \label{assump:road_posmutex}
%In addition to Assumption~\ref{assump:pointwise_posmutex},
Let $\measvar(\roadvar)$ denote the total probability of road $\roadvar$ under distribution $\measvar$.
We assume that $\measone(\roadvar) \times \meastwo(\roadvar) = 0$ for all $\roadvar\in\roadset$;
that is, only one of the input distributions may be positive on any given road.
\end{assump}
Assumption~\ref{assump:road_posmutex} supercedes
Assumption~\ref{assump:pointwise_posmutex}, but it is also quite benign.
Roads satisfying Assumptions~\ref{assump:nicemeasures} and~\ref{assump:pointwise_posmutex}
but not~\ref{assump:road_posmutex}
can be ``cracked''---by injecting additional vertices---%
such that Assumption~\ref{assump:road_posmutex} becomes satisfied.
%at the boundaries of the regions where $\denone_\roadvar$ and $\dentwo_\roadvar$ are positive;
%%\ktmargin{(see figure)}{make one};
Such insertions do not alter the essential structure of the road network,
e.g., shortest-path distances are preserved.

\subsubsection{Instance Construction}

In order to distinguish our main (exact) construction from others in the paper,
we will denote the flow network
$\wassnet =: ( (\wassnodes,\wassedges), \wasssupply )$
and the edge costs $\wasscosts$.
%We will call our construction the \emph{Wasserstein network}.
%following instance we will call
%Let $(\wassnodes,\wassedges)$ denote the di-graph of $\wassnet$.
The construction of the network $\wassnet$ is as follows:
We begin with both $\wassnodes$ and $\wassedges$ empty.
Then, we insert into $\wassnodes$ the whole collection of roads and interchanges
% in $\roadnet$,
$\roadset \cup \roadverts$.
While the roads in $\roadset$ are \emph{edges} of the roadmap,
they are treated simply as vertices in $\wassnet$.
Let $\supplyvar(\roadvar) := \measone(\roadvar) - \meastwo(\roadvar)$ be called the \emph{surplus} of road $\roadvar$.
The supplies associated with $\wassnodes$ will be
\begin{equation}
	\wasssupply(u) := %\{
	%\begin{cases}
		\text{
			$\supplyvar(u)$
			for $u \in \roadset$; %\supplyset \cup \demandset$
			$0$ for $u \in \roadverts$.
			%otherwise $0$.
		}
%	\end{cases}
\end{equation}

Let us create a partition of the set of roads $\roadset$.
For any road $\roadvar$, if $\measone(\roadvar) > 0$, then we call it a \emph{supply road};
if $\meastwo(\roadvar) > 0$, then we call it a \emph{demand road}.
According to Assumption~\ref{assump:road_posmutex},
a road may be \emph{either} a supply road or a demand road, but not both;
if it is neither, i.e., $\measone(\roadvar) = \meastwo(\roadvar) = 0$,
then we call it a \emph{transshipment road}.
We can write the set of supply roads as $\supplyset := \{ \roadvar \in \roadset : \ \supplyvar(\roadvar) > 0 \}$,
demand roads as $\demandset := \{ \roadvar \in \roadset : \ \supplyvar(\roadvar) < 0 \}$,
and transshipment roads as $\xshipset := \{ \roadvar \in \roadset : \ \supplyvar(\roadvar) = 0 \}$.

For each supply road $\roadvar \in \supplyset$,
we insert directed edges $(\roadvar,\roadvar^-)$ and $(\roadvar,\roadvar^+)$
into $\wassedges$.
%i.e., from $\roadvar$, one edge to each of its endpoints in $\roadnet$.
Even in the case $\roadvar^- = \roadvar^+$,
%for some $u \in \roadverts$,
these notations will denote two separate and distinct edges
(though, in such case, with the same endpoints);
therefore, note that $\wassnet$ could be a multi-graph.
%Occasionally, 
We will use the alias $\roadconnleft$
to refer to $(\roadvar,\roadvar^-)$ and
$\roadconnright$ to refer to $(\roadvar,\roadvar^+)$.
For each demand road $\roadvar \in \demandset$,
we add the edges $(\roadvar^-,\roadvar)$
%($\roadconnleft$)
and $(\roadvar^+,\roadvar)$
%($\roadconnright$)
into $\wassnodes$;
%i.e., \emph{to} $\roadvar$, from each of its endpoints in $\roadnet$;
such edges are also always distinct,
and are also given aliases $\roadconnleft$ and $\roadconnright$ (respectively),
though they have the opposite direction.
$\wassedges$ now contains the \emph{decision edges};
let us denote this set $\decisionedges$.

The costs on the decision edges are as follows.
Let
\[
	\den_\roadvar \doteq
	\begin{cases}
		\denone_\roadvar,		& \text{if $\roadvar \in \supplyset$}		\\
		\dentwo_\roadvar,		& \text{if $\roadvar \in \demandset$},
	\end{cases}	
	\qquad \text{and } \qquad
	\reverseden_\roadvar( x ) \doteq \den_\roadvar( \roadlen_\roadvar - x )
	\qquad \text{for all $\roadvar \in \supplyset \cup \demandset$}.
\]
%Note that by Assumption~\ref{assump:road_posmutex},either $\denone_\roadvar$ or $\dentwo_\roadvar$ is zero for each 
%Let $\den_\roadvar$ denote $\denone_\roadvar$ if $\roadvar \in \supplyset$, or $\dentwo_\roadvar$ if $\roadvar \in \demandset$.
%Let $\reverseden_\roadvar$ denote the mapping $y \mapsto \den_\roadvar( \roadlen_\roadvar - y )$.
Let
\begin{equation} \label{eq:costform}
  \costform(x; \den ) \doteq \int_{y=0}^{ \invcumden( x; \den ) } \den(y) \ y \ dy
  \qquad \text{for any $\den$}.
\end{equation}
Then
\begin{align}
%\begin{aligned}
\wasscosts( \, \cdot \, ; \roadconnleft ) &:= \costform( \, \cdot \, ; \den_\roadvar ),
	& & \text{and}	
	\label{eq:tail conn cost}		\\			%\quad \text{and, } \quad
\wasscosts( \, \cdot \, ; \roadconnright ) &:= \costform( \, \cdot \, ; \reverseden_\roadvar ),
	& &
\text{for all $\roadvar \in \supplyset \cup \demandset$.}
	\label{eq:head conn cost}
%\end{aligned}
\end{align}

%
%\begin{equation}
%\edgecosts_\arcvar( x ) :=
%\begin{cases}
%	\costform( x ; \den_\roadvar )		
%		& \text{if $\arcvar = \roadconnleft$ for $\roadvar \in \roadset$,}		\\
%	\costform( x ; \reverseden_\roadvar )
%		& \text{if $\arcvar = \roadconnright$ for $\roadvar \in \roadset$,}		\\
%	x \times \shortestpath_\roadnet( \roadvert, \roadvert' )
%		& \text{if $\arcvar = (\roadvert,\roadvert') \in \routingedges$.}		\\
%\end{cases}
%\end{equation}
%
%

Now let $\routingedges$ denote a set of \emph{routing edges}:
$\routingedges$ contains
one edge in each direction
between any pair $u,v \in\roadverts$,
if $u \neq v$ and they are connected by some $\roadvar\in\roadset$;
% the edge $(u,v)$
%%for each ordered pair $u,v \in \roadverts$
%%if $\roadnet$ contains a road with those endpoints;
%with those there exists a path on $\roadnet$
%from $u$ to $v$ which does not use any other vertex in $\roadverts$;
such edge has linear cost
%edgecost_{(u,v)}$
with weight
$\routingwts( (u,v) )$
%$\edgewt_{(u,v)}$
equal to the length of the shortest such road.
%\emph{Usually} this means that each road $\roadvar \in \roadset$ contributes the edges
%$\{ (\roadvar^-,\roadvar^+), (\roadvar^+,\roadvar^-) \}$,
%each of length $\roadlen$;
%however, that statement does not generalize \emph{quite} as easily,
%e.g., to roadmaps which are multi-graphs.
We insert all of the routing edges into $\wassedges$.
%(For example, see Figure~\ref{}.)
% the edges of a complete bipartite graph between
% $\topcopy{\roadverts}$ and $\bottomcopy{\roadverts}$, i.e.,
% $\{ (\topcopy{u}, \bottomcopy{v}) \ : \ u,v \in \roadverts \}$.
%%
%\begin{figure}[h!]
%\includegraphics[width=.5\linewidth]{wass-network-construction}
%\caption{Wasserstein Network Construction Example}
%\label{fig:network_construction_example}
%\end{figure}
%%

Figure~\ref{fig:wassnet_construction_example} shows a simple road network
with roads ``north'' (N), ``east'' (E), ``south'' (S), and ``west'' (W),
and Figure~\ref{fig:wassnet_construction_network} shows the corresponding Wasserstein network.
\begin{figure}[h!]
\centering
\hfill
\subfigure[A square road network with roads: North (N), East (E), South (S), and West (W).]{
	%\usepackage{tikz}
%\usetikzlibrary{arrows,shapes,snakes,automata,backgrounds,petri}

\begin{tikzpicture}
%[node distance=1.3cm,>=stealth',bend angle=45,auto]
  \tikzstyle{vertex}=[circle,thick,draw=blue!75,fill=blue!20,minimum size=6mm]
  \tikzstyle{roadedge}=[thick]
 % \tikzstyle{red place}=[place,draw=red!75,fill=red!20]
  \tikzstyle{roadnode}=[rectangle,thick,draw=black!75,
  			  fill=black!20,minimum size=4mm]

  %\tikzstyle{every label}=[red]
  
  \newcommand{\myscale}{1}
  \node [vertex] (1) at (-\myscale,\myscale) {$1$} ;
  \node [vertex] (2) at (\myscale,\myscale) {$2$} ;
  \node [vertex] (3) at (\myscale,-\myscale) {$3$} ;
  \node [vertex] (4) at (-\myscale,-\myscale) {$4$} ;
  
  \path
  	(1) edge node [above] {N $\in \demandset$} (2)
  	(2) edge node [right] {E $\in \supplyset$} (3)
  	(3) edge node [below] {S $\in \supplyset$} (4)
  	(4) edge node [left]  {W $\in \demandset$}	(1);
\end{tikzpicture} 
	\label{fig:wassnet_construction_example}
}
\hfill
\subfigure[The Wasserstein network resulting from the roadmap in Figure~\ref{fig:wassnet_construction_example}.]{
	\begin{tikzpicture}[scale=.8]
%[node distance=1.3cm,>=stealth',bend angle=45,auto]
  \tikzstyle{vertex}=[circle,thick,draw=blue!75,fill=blue!20,minimum size=6mm]
 % \tikzstyle{red place}=[place,draw=red!75,fill=red!20]
  \tikzstyle{roadnode}=[rectangle,thick,draw=black!75,
  			  fill=black!20,minimum size=4mm]
  			  
  \tikzstyle{decisionedge}=[very thick,->]
  \tikzstyle{routingedge}=[->,bend right]
  \tikzstyle{inuse}=[thick]
  
  % add the interchange vertices
  \newcommand{\myscale}{1}
  \node [vertex] (1) at (-\myscale,\myscale) {1} ;
  \node [vertex] (2) at (\myscale,\myscale) {2} ;
  \node [vertex] (3) at (\myscale,-\myscale) {3} ;
  \node [vertex] (4) at (-\myscale,-\myscale) {4} ;
  % add the road nodes
  \newcommand{\rfactor}{2.5}
  \node [roadnode] (N) at (0,\rfactor*\myscale) {N} ;
  \node [roadnode] (E) at (\rfactor*\myscale,0) {E} ;
  \node [roadnode] (S) at (0,-\rfactor*\myscale) {S} ;
  \node [roadnode] (W) at (-\rfactor*\myscale,0) {W} ;
  
  % add the decision edges
  \newcommand{\labelstyle}{\tiny}
  \path
  	(1) edge [decisionedge] (N)
  	(2) edge [decisionedge] node [above right] {} (N)
  	(E) edge [decisionedge] node [above right] {} (2)
  	(E) edge [decisionedge] node [below right] {} (3) 
  	(S) edge [decisionedge] (3) 
  	(S) edge [decisionedge] node [below left] {} (4) 
  	(1) edge [decisionedge] node [above left] {} (W) 
  	(4) edge [decisionedge] node [below left] {} (W) ;
  	%
  	
  % add the routing edges
  \path
  	(1) edge [routingedge] (2)
  	(2) edge [routingedge] node [above] {} (1)
  	(2) edge [routingedge] (3)
  	(3) edge [routingedge] (2)
  	(3) edge [routingedge] node [above] {} (4)
  	(4) edge [routingedge] (3)
  	(4) edge [routingedge] (1)
  	(1) edge [routingedge] (4) ;
\end{tikzpicture}
	\label{fig:wassnet_construction_network}
}
\hfill\null
\caption{A simple road network and the resulting ``Wasserstein'' flow network.}
\label{fig:wassnet_construction}
\end{figure}
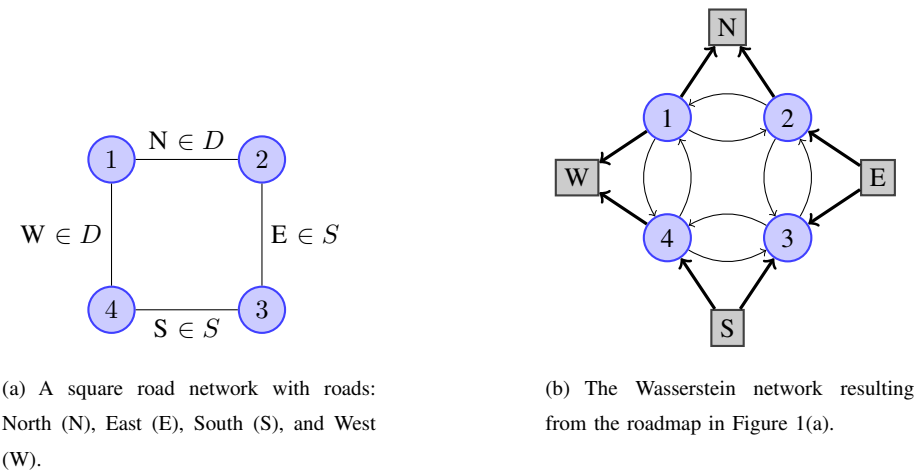
E and S are supply roads and N and W are demand roads;
therefore, notice that the decision edges---%
shown by thick lines---%
point \emph{out of} E and S,
while they point \emph{into} N and W.
%In Figure~\ref{fig:wassnet_construction_network}, 
Each road also contributes a pair of routing edges to the network.

\subsubsection{Main Result}

The main result of the paper is Theorem~\ref{thm:roadnet emd equation} below:
\begin{theorem}
\label{thm:roadnet emd equation}
Let $(\roadnet,\distance)$ be a road network environment
described by $(\roadverts,\roadset)$,
and let $\measone$ and $\meastwo$ be two finite measures over $\roadnet$,
satisfying Assumptions~\ref{assump:nicemeasures} and~\ref{assump:road_posmutex},
with \emph{equal} total measure.
Then
\begin{equation} \label{eq:roadnet emd equation}
	\min_{ \flowmap \in \wassnet } \ \flowcost( \flowmap ; \wasscosts )
		=
	 \Wass(\measone,\meastwo).
\end{equation}
\end{theorem}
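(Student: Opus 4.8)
The plan is to prove the identity by establishing the two inequalities separately, each via a cost‑non‑increasing transformation between admissible flows of $\wassnet$ and couplings of $\measone$ and $\meastwo$. The geometric fact that drives everything is that $\distance$ restricted to a single road is isometric to a sub‑interval of $\reals^1$, so the cheapest way to evacuate a prescribed amount of mass from a road through one of its two endpoints (or to deliver mass to a road from one endpoint) is to move the \emph{closest} portion of that mass, and the resulting cost is exactly $\costform(\,\cdot\,;\den)$ of~\eqref{eq:costform}; this is the ``greedy/monotone rearrangement'' identity $\costform(x;\den)=\min\{\int y\,d\nu:\ \nu\le\den,\ |\nu|=x\}$, with the minimizer supported on $[0,\invcumden(x;\den)]$.

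\textbf{Direction $\min_{\flowmap\in\wassnet}\flowcost(\flowmap;\wasscosts)\ge \Wass(\measone,\meastwo)$.} Given any admissible $\flowmap$, I would build a coupling $\gamma_\flowmap$ in three stages. (1) On each supply road $\roadvar\in\supplyset$, conservation at node $\roadvar$ gives $\flowmap(\roadconnleft)+\flowmap(\roadconnright)=\measone(\roadvar)$; send the leftmost $\flowmap(\roadconnleft)$ units (coordinates in $[0,\invcumden(\flowmap(\roadconnleft);\denone_\roadvar)]$) to the endpoint $\roadvar^-$ and the rest to $\roadvar^+$, at cost exactly the decision‑edge cost at $\roadvar$ by~\eqref{eq:tail conn cost}--\eqref{eq:head conn cost}. (2) The routing‑edge flow is a flow among interchanges; after discarding flow cycles (which only lowers cost, all $\routingwts\ge0$) decompose it into interchange‑to‑interchange paths and carry the deposited mass along the corresponding shortest roads, at realized cost at most $\sum_{\arcvar\in\routingedges}\routingwts(\arcvar)\flowmap(\arcvar)$. (3) On each demand road, symmetrically deliver the incoming mass greedily from its two endpoints, at cost equal to the decision‑edge cost. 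The amounts match interchange‑by‑interchange by conservation in $\flowmap$, so the gluing lemma produces a coupling of $\measone$ and $\meastwo$; by the triangle inequality for $\distance$ its transport cost is at most the sum of the three stage costs, i.e.\ $\flowcost(\flowmap;\wasscosts)$. Hence $\Wass(\measone,\meastwo)\le\flowcost(\flowmap;\wasscosts)$ for all admissible $\flowmap$.

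\textbf{Direction $\min_{\flowmap\in\wassnet}\flowcost(\flowmap;\wasscosts)\le \Wass(\measone,\meastwo)$.} Take an optimal coupling $\gamma^*$ (it exists, as a road network is Polish). For $\gamma^*$‑a.e.\ pair $(\pointvar,\altpointvar)$ with $\pointvar$ on a supply road $\roadvar$ and $\altpointvar$ on a demand road $\roadvar'$, fix a shortest $\pointvar$‑$\altpointvar$ path; it exits $\roadvar$ at a definite endpoint $e(\roadvar)$, enters $\roadvar'$ at $e'(\roadvar')$, and its interchange portion is a shortest path in $(\roadverts,\routingedges)$, so $\distance(\pointvar,\altpointvar)$ splits as (within‑$\roadvar$ distance) $+$ (interchange distance) $+$ (within‑$\roadvar'$ distance). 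Define $\flowmap$ on decision edges by the total mass routed through each endpoint and on routing edges by routing each bit's interchange portion along its shortest path; conservation then holds node‑by‑node, so $\flowmap\in\wassnet$. For each supply road, $\costform(\flowmap(\roadconnleft);\denone_\roadvar)$ lower‑bounds $\int_{\pointvar\in\roadvar,\,e(\roadvar)=\roadvar^-}(\text{within-}\roadvar\text{ distance})\,d\gamma^*$ by the monotone rearrangement identity, and likewise on the right; summing over $\supplyset$ and over $\demandset$ bounds the two decision‑cost sums by $\int(\text{within-source-road distance})\,d\gamma^*$ and $\int(\text{within-destination-road distance})\,d\gamma^*$, while the routing cost equals $\int(\text{interchange distance})\,d\gamma^*$. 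Adding, $\flowcost(\flowmap;\wasscosts)\le\int\distance\,d\gamma^*=\Wass(\measone,\meastwo)$.

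\textbf{Main obstacle.} The combinatorial skeleton is short; the real work is measure‑theoretic: existence and measurable selection of shortest paths (and tie‑breaking on the null set of interchanges equidistant via both endpoints), disintegration of $\gamma^*$ over roads, the gluing lemma for composing the three stages, and justifying the greedy rearrangement identity for $\costform$ under Assumption~\ref{assump:nicemeasures}. Rather than carry these out in the continuum, the cleanest route---and the one developed in Sections~\ref{sec:general purpose approx}--\ref{sec:roadnet approx}---is to prove the identity first on discretized instances, where $\Wass$ is a finite transportation LP whose correspondence with $\wassnet$ is purely combinatorial (cf.\ Remark~\ref{remark:network flow interp} and Proposition~\ref{lemma:EMD add invariant}), and then pass to the limit using continuity of both the EMD and the min‑cost‑flow value under refinement, the Lipschitz hypothesis on the road densities controlling the approximation error.
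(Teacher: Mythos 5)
Your proposal is correct in outline, but it takes a genuinely different route from the paper. You argue directly in the continuum: one inequality by turning any admissible flow on $\wassnet$ into a three-stage coupling (greedy evacuation, interchange routing, greedy delivery) and invoking the triangle inequality, the other by disintegrating an optimal coupling along the exit/entry endpoints of shortest paths and invoking the monotone-rearrangement identity $\costform(x;\den)=\min\{\int y\,d\nu:\nu\le\den,\ |\nu|=x\}$. The paper instead never touches couplings of $\measone,\meastwo$ at the level of $\wassnet$: it sandwiches $\Wass$ between min-cost flows on an $\epsilon$-tessellation network $\approxnet$ (Propositions~\ref{prop:approx_contains_wass} and~\ref{prop:approx_squeezes}, where the coupling-to-flow correspondence is done once, combinatorially, in the appendix), replaces $\approxnet$ by the path-based network $\pathnet$ via the shortest-path-preserving equivalence of Lemma~\ref{lemma:equivalent networks}, shows via the parting Lemma~\ref{lemma:threshold} that optimal flows on $\pathnet$ have a one-dimensional ``threshold'' structure on each road device, and evaluates the resulting Riemann sums (Lemmas~\ref{lemma:devicecost} and~\ref{lemma:devicecost readable}) to recover the costs $\costform(\argholder;\den_\roadvar)$ on the decision edges, finishing with a two-sided $o(1)$ comparison between $\min_{\pathnet}$ and $\min_{\wassnet}$. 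Your route, if completed, is more self-contained and exposes the structural fact (optimal plans are greedy within each road) directly rather than as a limit; the paper's route buys a clean localization of all measure theory into the discrete approximation step, plus the approximation algorithms of Sections~\ref{sec:general purpose approx}--\ref{sec:roadnet approx} as reusable by-products. Two cautions: first, the rearrangement identity is the load-bearing step in both of your inequalities (it gives equality of stage-1 cost with the decision-edge cost in one direction and the lower bound in the other) and you assert it without proof --- it is true and elementary (the minimizer is $\den\cdot\mathbb{I}_{[0,\invcumden(x;\den)]}$, by a swapping argument), but it must be established, since the paper only obtains it implicitly through the Riemann-sum computation; second, your closing paragraph defers the measure-theoretic work to ``the discretization route,'' which is precisely the paper's proof --- as written, neither argument is carried to completion, so you should commit to one and supply the missing details (measurable selection of exit endpoints, which is unproblematic here because ties occur on a $\gamma^*$-null set under Assumption~\ref{assump:nicemeasures}, and the gluing of the three stages).
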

Note that the left-hand side of~\eqref{eq:roadnet emd equation} is a finite-dimensional optimization problem with only \emph{linear} equality and inequality constraints.
The proof of Theorem~\ref{thm:roadnet emd equation} requires many intermediate results that we have not yet established;
we defer the proof until the end of the paper.

% \begin{prop}
% \label{prop:polytope}
% The set $\optset$ is a finite-dimensional, convex polytope.
% \end{prop}
% \begin{proof}
% $\optset$ is obviously finite-dimensional.
% %Since~\eqref{} are all linear constraints, $\optset$ is certainly a polyhedron, but possibly unbounded.
% Since~\eqref{} are all linear constraints, $\optset$ is a polyhedron.
% %We will not prove boundedness, because it is not relevant to our result; however we are confident that $\optset$ is bounded.
% \end{proof}

\subsection{Convexity of the EMP Objective}

The next result is crucial to show that the formulation of Theorem~\ref{thm:roadnet emd equation} is explicit.
%shall be used in combination with .
%Theorem~\ref{thm:convexity_of_objective}
%provides the related second result of the paper.
\begin{theorem} \label{thm:convexity_of_objective}
The objective function $\flowcost( \argholder ;\wasscosts)$ is convex over $\flowmap \in \wassnet$.
% in its argument.
%$\flowmap$.
\end{theorem}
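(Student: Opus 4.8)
The plan is to reduce the claimed convexity to a one‑dimensional statement about the decision‑edge cost functions and then to prove that statement.

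First note that $\flowcost(\flowmap;\wasscosts) = \sum_{\arcvar \in \wassedges} \wasscosts(\flowmap(\arcvar);\arcvar)$: each summand is a scalar function of $\flowmap(\arcvar)$, i.e., a scalar function composed with the linear coordinate map $\flowmap \mapsto \flowmap(\arcvar)$. Since a convex function composed with a linear map is convex and a finite sum of convex functions is convex, it suffices to show that every edge cost $\wasscosts(\argholder;\arcvar)$ is a convex function of its scalar argument on $\reals_{\geq 0}$; this gives convexity on all of $\reals_{\geq 0}^{\wassedges}$, hence a fortiori on the (convex, polyhedral) set $\wassnet$ of admissible flows. For the routing edges in $\routingedges$ convexity is immediate, since their cost is linear with non‑negative weight $\routingwts$. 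The decision edges carry costs of the form $\costform(\argholder;\den)$ with $\den$ equal to one of the non‑negative Lipschitz road densities $\den_\roadvar$ or its reflection $\reverseden_\roadvar$, so the theorem reduces to showing: for every non‑negative Lipschitz density $\den$ on an interval $[0,\roadlen]$, the map $x \mapsto \costform(x;\den) = \int_0^{\invcumden(x;\den)} \den(y)\,y\,dy$ is convex.

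To prove this, I would first establish the identity $\costform(x;\den) = \int_0^x \invcumden(t;\den)\,dt$. Integration by parts gives $\int_0^{\invcumden(x;\den)}\den(y)\,y\,dy = x\,\invcumden(x;\den) - \int_0^{\invcumden(x;\den)}\cumden(y;\den)\,dy$, using $\cumden(\invcumden(x;\den);\den) = x$; combining this with the classical ``area'' identity $\int_0^{\invcumden(x;\den)}\cumden(y;\den)\,dy + \int_0^{x}\invcumden(t;\den)\,dt = x\,\invcumden(x;\den)$ relating a non‑decreasing function and its generalized inverse yields the claim (equivalently, the identity is the change of variables $t = \cumden(y;\den)$). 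Given the identity, convexity follows at once: $\invcumden(\argholder;\den)$ is non‑decreasing, being the generalized inverse of the non‑decreasing $\cumden(\argholder;\den)$, and the integral of a bounded non‑decreasing function is convex. If a self‑contained argument is preferred: $\costform(\argholder;\den)$ is continuous as the integral of a bounded function, and it is midpoint convex because for $x_1 \leq x_2$ with midpoint $m$ one has $\costform(x_2) - \costform(m) = \int_m^{x_2}\invcumden(t;\den)\,dt \geq \int_{x_1}^{m}\invcumden(t;\den)\,dt = \costform(m) - \costform(x_1)$, since translating the interval of integration rightward cannot decrease the integral of a non‑decreasing integrand; continuity plus midpoint convexity gives full convexity.

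The main obstacle is the ``area'' identity in the regime where $\den$ vanishes on subintervals: there $\cumden$ is flat and $\invcumden$ jumps, so $\invcumden$ is not a genuine inverse and the naive substitution rule fails. This is handled by the classical Fubini argument that integrates over the subgraph of $\cumden$ and uses only its monotonicity, not invertibility. Alternatively, one may bypass the identity entirely and show directly that $\costform(\argholder;\den)$ is absolutely continuous with almost‑everywhere derivative equal to the non‑decreasing function $\invcumden(\argholder;\den)$ --- computing, on the support of $\den$, $\frac{d}{dx}\invcumden(x;\den) = 1/\den(\invcumden(x;\den))$ and observing that the zero‑set of $\den$ contributes nothing to the integral defining $\costform$ --- whence convexity. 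Everything else in the proof is routine bookkeeping.
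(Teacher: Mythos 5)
Your proposal is correct, and its overall structure matches the paper's: both reduce the theorem to the one--dimensional statement that $\costform(\argholder;\den)$ is convex for each (possibly reflected) road density, via the observation that $\flowcost(\argholder;\wasscosts)$ is a sum of scalar edge costs composed with coordinate projections, with the routing edges contributing only linear terms. Where you diverge is in how that one--dimensional fact is proved. The paper's Proposition~\ref{prop:costform convex} verifies directly the supporting-line inequality $\costform(x;\den) \geq \costform(x_0;\den) + \invcumden(x_0;\den)\,[x-x_0]$ by the two-line estimate $\int_{\invcumden(x_0;\den)}^{\invcumden(x;\den)} \den(y)\,y\,dy \geq \invcumden(x_0;\den)\int_{\invcumden(x_0;\den)}^{\invcumden(x;\den)}\den(y)\,dy = \invcumden(x_0;\den)[x-x_0]$, which sidesteps entirely the flat-parts-of-$\cumden$ issue you spend a paragraph handling; a function admitting an affine minorant touching it at every point is convex. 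You instead establish the exact representation $\costform(x;\den)=\int_0^x\invcumden(t;\den)\,dt$ via integration by parts plus the Young/Fubini area identity for generalized inverses, and conclude convexity from monotonicity of the integrand. Your route is heavier (the degenerate case where $\den$ vanishes on subintervals genuinely requires the care you flag), but it buys the explicit antiderivative formula, from which Lipschitz continuity (constant $\roadlen$) and the subgradient characterization --- which the paper also records, the former in the proposition statement and the latter in the discussion following the proof --- fall out immediately. One small point: convexity need only be argued on $[0,\cumden(\roadlen;\den)]$ per decision edge, as the paper does, since $\costform(\argholder;\den)$ is not defined beyond that value and the flow conservation constraints confine admissible flows to that range; your claim of convexity on all of $\reals_{\geq 0}^{\wassedges}$ is an overreach of domain, though harmless for the theorem as stated.
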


%---and what is by far the most gratifying finding of the present work---%
Theorem~\ref{thm:convexity_of_objective} follows as an easy consequence of the next proposition, since
sums of convex functions are convex.
%simple properties of $\costform$.
%(\aggoflow,\optvar^\stagetwotag,\aggiflow)$.
\begin{prop} \label{prop:costform convex}
For every Lipschitz function $\den : [0,\roadlen] \to \reals_{\geq 0}$,
$\costform( \argholder ;\den)$ is Lipschitz continuous and convex over the interval $[0,\cumden(\roadlen;\den)]$.
%satisfying Assumption~\ref{assump:nicemeasures}
\end{prop}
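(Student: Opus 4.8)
The plan is to understand $q(x;\varphi) = \int_{y=0}^{\Psi(x;\varphi)} \varphi(y)\,y\,dy$ as a function of $x \in [0,\Phi(\roadlen;\varphi)]$. The key observation is that $\Psi(\,\cdot\,;\varphi)$ is the (generalized) inverse of the non-decreasing, continuous function $\Phi(\,\cdot\,;\varphi)$, and that the integrand of $q$ is $\varphi(y)\,y\,dy = y\,d\Phi(y;\varphi)$. So I would first perform the change of variables $u = \Phi(y;\varphi)$, i.e., $y = \Psi(u;\varphi)$, to rewrite
\[
  q(x;\varphi) = \int_{u=0}^{x} \Psi(u;\varphi)\,du.
\]
This identity is the crux: it exhibits $q(\,\cdot\,;\varphi)$ as an antiderivative of $\Psi(\,\cdot\,;\varphi)$. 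Care is needed on the flat parts of $\Phi$ (intervals where $\varphi = 0$), where $\Psi$ jumps; but on such intervals $d\Phi = 0$ contributes nothing to $q$, and the formula $q(x;\varphi) = \int_0^x \Psi(u;\varphi)\,du$ remains valid because the integrand changes only on a set that is still Lebesgue-integrable (indeed $\Psi$ is monotone, hence has at most countably many jumps and is Riemann integrable on $[0,\Phi(\roadlen;\varphi)]$).

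**Deducing convexity and Lipschitz continuity.**

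Once $q(x;\varphi) = \int_0^x \Psi(u;\varphi)\,du$ is established, convexity is immediate: $\Psi(\,\cdot\,;\varphi)$ is non-decreasing (it is a generalized inverse of a non-decreasing function), and an integral of a non-decreasing function is convex. For Lipschitz continuity on $[0,\Phi(\roadlen;\varphi)]$, I would bound $|\Psi(u;\varphi)| \le \roadlen$ for all $u$ in the domain, since $\Psi$ takes values in $[0,\roadlen]$ by Definition~\ref{def:cumden}. Hence for $x_1 \le x_2$ in the domain,
\[
  |q(x_2;\varphi) - q(x_1;\varphi)| = \int_{x_1}^{x_2} \Psi(u;\varphi)\,du \le \roadlen\,(x_2 - x_1),
\]
so $q(\,\cdot\,;\varphi)$ is $\roadlen$-Lipschitz.

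**Expected main obstacle.**

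The routine part is the convexity/Lipschitz deduction; the only genuinely delicate point is justifying the change of variables $u = \Phi(y;\varphi)$ rigorously when $\Phi$ is merely continuous and non-decreasing rather than strictly increasing. I would handle this by noting that $\Phi$ is absolutely continuous (as the integral of the Lipschitz, hence locally bounded, density $\varphi$ under Assumption~\ref{assump:nicemeasures}), invoking the change-of-variables theorem for absolutely continuous monotone functions, and checking that the contributions from the flat portions of $\Phi$ vanish on both sides. An alternative, fully elementary route that sidesteps measure-theoretic subtleties is to first prove the claim for piecewise-linear $\Phi$ (piecewise-constant $\varphi$), where $\Psi$ is an explicit piecewise-linear function and $q$ a piecewise-quadratic convex spline, and then pass to the Lipschitz case by uniform approximation — but the change-of-variables argument is cleaner and I would present that as the main proof.
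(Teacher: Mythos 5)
Your proposal is correct, and it reaches the conclusion by a route that is recognizably different from the paper's, even though both rest on the same underlying fact that $\invcumden(\argholder;\den)$ acts as the derivative of $\costform(\argholder;\den)$. You first establish the representation $\costform(x;\den)=\int_0^x \invcumden(u;\den)\,du$ by a change of variables and then read off both claims: convexity because the integrand is non-decreasing, and $\roadlen$-Lipschitz continuity because the integrand is bounded by $\roadlen$. The paper never performs that change of variables. For Lipschitz continuity it bounds $\left|\costform(x';\den)-\costform(x;\den)\right| = \left|\int_{\invcumden(x;\den)}^{\invcumden(x';\den)}\den(y)\,y\,dy\right| \leq \roadlen\,|x'-x|$ directly, using $y\leq\roadlen$ on the integration range together with the identity $\int_{\invcumden(x;\den)}^{\invcumden(x';\den)}\den(y)\,dy=x'-x$; for convexity it verifies the supporting-line inequality $\costform(x;\den)\geq\costform(x_0;\den)+\invcumden(x_0;\den)\,[x-x_0]$ at every $x_0$ by the analogous lower bound $y\geq\invcumden(x_0;\den)$ on the integration range, and invokes the fact that a function admitting an affine minorant touching it at every point is convex. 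The trade-off is exactly the one you identify: your antiderivative identity is the cleaner and stronger statement (the paper only gestures at it afterward, remarking that $\invcumden$ lies everywhere in the subgradient of $\costform$), but it obliges you to justify the change of variables when $\cumden$ has flat intervals, whereas the paper's two elementary one-sided integral estimates sidestep that measure-theoretic care entirely while still exploiting the same identity $\cumden(\invcumden(x;\den);\den)=x$. Your treatment of the flat intervals --- the discrepancy set where $y\neq\invcumden(\cumden(y;\den);\den)$ carries zero $\den(y)\,dy$-measure --- is the right fix, so your argument is complete.
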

\begin{proof}
% Lipschitz
The absolute difference $\left| \costform(x';\den) - \costform(x;\den) \right|$ can be written as
\[
\left| \costform(x';\den) - \costform(x;\den) \right|
	= \left| \int_{ \invcumden(x;\den) }^{ \invcumden(x';\den) } \den(y) \ y \ dy \right|.
\]
Because
the range of $\invcumden(\cdot;\den)$ is $[0,\roadlen]$,
we have $y \leq \roadlen$ over the whole integral range.
Therefore,
\[
	\left| \int_{ \invcumden(x;\den) }^{ \invcumden(x';\den) } \den(y) \ y \ dy \right|
	\leq
	\roadlen \left| \int_{ \invcumden(x;\den) }^{ \invcumden(x';\den) } \den(y) \ dy \right|
	= \roadlen \left| x' - x \right|.
\]
Then $\costform( \argholder ;\den)$ is Lipschitz, because
for every $x, x' \in [0,\cumden(\roadlen;\den)]$
\[
	\frac{
		\left| \costform(x';\den) - \costform(x;\den) \right|
	}{
		| x' - x |
	}
	\leq \roadlen.
\]

% convexity
To show that $\costform(\argholder;\den)$ is convex, we observe that
for all $x_0,x \in [0,\cumden(\roadlen;\den)]$ it holds
\begin{equation} \label{eq:costform tangents}
\costform(x;\den) \geq \costform(x_0;\den) + \invcumden(x_0;\den) [ x - x_0 ],
\end{equation}
i.e.,
there is a tangent line at every point $x_0 \in [0,\cumden(\roadlen;\den)]$
with $\costform(\argholder;\den)$ lying entirely above it.
Such functions are known to be convex, e.g.,
by convexity of epigraphs which can be expressed as the intersection of many linear epigraphs.
To verify~\eqref{eq:costform tangents}, one can write
\[
\begin{aligned}
	\costform(x;\den) - \costform(x_0;\den)
		&= \int_{ \invcumden(x_0;\den) }^{ \invcumden(x;\den) } \den(y) \ y \ dy	\\
%	\int_{ \invcumden(x_0;\den) }^{ \invcumden(x;\den) } \den(y) \ y \ dy
		&\geq \invcumden(x_0;\den) \int_{ \invcumden(x_0;\den) }^{ \invcumden(x;\den) } \den(y) \ dy
		= \invcumden(x_0;\den) [ x - x_0 ].
\end{aligned}
\]
%
%
%Using \eqref{eq:costform tangents}, one can show that the epigraph of $\costform(\cdot;\den)$ is 
%everywhere
%by merit of the following facts:
%(i) a function is convex if its epigraph is a convex set;
%(ii) the (continuous) intersection of tangent epigraphs describes the function epigraph;
%(iii) linear epigraphs are convex;
%(iv) intersections of convex sets are convex.

\end{proof}

While $\costform$ may be difficult to obtain in analytical form, except in special cases,
\eqref{eq:costform tangents} demonstrates that $\invcumden$ is everywhere in its subgradient.
% can be derived from the inverse cdf $\invcumden$.
Gradient and subgradient methods are at the heart of modern algorithms
for constrained optimization of general convex functions,
and Theorem~\ref{thm:convexity_of_objective} provides a certificate that $\costform( \argholder ;\den)$ is convex
regardless of the density function $\den$.
Therefore, provided one has access to an evaluable expression (or ``\emph{circuit}'') for $\invcumden$,
then our formulation is highly amenable to modern convex optimization techniques.
%even when $\costform$ cannot itself be apprehended.

%
%\newcommand{\pnom}{\operatorname{poly}}
%
%\noindent {\it Segment-wise Polynomial Densities:}
%%In the proof of Theorem~\ref{thm:convexity_of_objective} we argued that for $J$ twice differentiable, that $Y'(x) = \den(Y(x))^{-1}$.
%Consider the case that $\den$ is polynomial in $y$, which we will indicate as $\den(y) = \pnom(y)$.
%%We can solve for $\invcumden$ and $\flowcost$ as follows:
%Differentiating the previous identity again, we can write
%$dx = \den( \invcumden ) \ d\invcumden = \pnom(\invcumden) \ d\invcumden$.
%Together with initial condition $x(0)=0$,
%we can solve such differential equation for $x(\invcumden)$ (a ``polynomial in $\invcumden$'').
%If one of the inverses of $x$ is monotonically increasing,
%then we may take it as $\invcumden$.
%(Solving $d\costform/dx = \invcumden$, $\costform(0) = 0$ obtains $\costform$.)

\subsubsection{Road-wise Uniform Density}
\providecommand{\realdensity}{\rho}
%\noindent {\it Segment-wise Uniform Densities:}
In the special case that all of the road densities are uniform, then
%$\den$ is constant over each road.
%of $\den = \pnom(x)$ is the case $\den(\cdot) = |\measvar| / \roadlen_\roadvar$,
%f(e) / \distfunc{e}$,
%i.e. constant density over the road.
we obtain
$\invcumden( x ; \den ) = x / \realdensity$ and
$\costform( x ; \den_\roadvar ) = \onehalf x^2 / \realdensity$
%\left[ \measvar(\roadvar) / \roadlen_\roadvar \right]^{-1}$,
%\measvar(\roadvar) / \roadlen_\roadvar \right]^{-1}$, and
%$\costform( x ; \den_\roadvar ) = \onehalf x^2 \left[ \measvar(\roadvar) / \roadlen_\roadvar \right]^{-1}$,
for each $\roadvar \in \roadset$,
where $\realdensity$ is the constant level of $\den$, or, abusing terminology, its ``density''.
%\[
%  \invcumden( x ; \den_\roadvar ) = x \left[ \frac{ \measvar(\roadvar) }{ \roadlen_\roadvar } \right]^{-1},
%  \qquad \text{and} \quad %\qquad
%  \costform( x ; \den_\roadvar ) = \onehalf x^2 \left[ \frac{ \measvar(\roadvar) }{ \roadlen_\roadvar } \right]^{-1},
%  \quad \text{for each $\roadvar \in \roadset$.}
%\]
Thus, if the density functions are uniform over all segments,
then the decision edge costs are all \emph{convex quadratic} in $\flowmap$.
The resulting class of network optimization problems can be solved by way of
%a straight-forward reduction to 
\emph{quadratic programming} (QP),
a well-studied approach to optimization problems
with convex quadratic objective and linear constraints~\cite[p.152]{Boyd:2004}.

\subsection{Discussion}

Our fully rigorous proof of Theorem~\ref{thm:roadnet emd equation} is quite technical, and
requires several sections of supporting analysis.
However, in the present section
we provide an informal interpretation of the result,
%insight into its derivation in the present section,
based on the previous ``pile of dirt'' analogy.
%The novelty of our formulation is in the inclusion of the decision edges.
% and their associated costs.
%, which arguably form the most enigmatic component
%of the formulation (certainly, the novel one).
%

Consider a single road $\roadvar$ of length $\roadlen$ (see Figure~\ref{fig:decision edge intuition}).
%\begin{figure}[h!]
%\includegraphics[width=.5\linewidth]{intuition-decisionedges}
%\caption{}
%\label{fig:intuition_decisionedges}
%\end{figure}
%
\begin{figure}[h!]
\centering
\begin{tikzpicture}
\tikzstyle{roadstyle}=[very thick]
\tikzstyle{roadvertex}=[circle,inner sep=2,fill]
\tikzstyle{dqstyle}=[very thick,black!50]
% draw the road
%\newcommand{\roadleft}{0}
\newcommand{\roadright}{6}
\newcommand{\roadlevel}{-1}
\draw[roadstyle]
	(0,\roadlevel) node [roadvertex] (roadleft) {}
		node [anchor=east] {$\roadvar^-$}
		-- (\roadright,\roadlevel) node [roadvertex] (roadright) {}
		node [anchor=west] {$\roadvar^+$} ;	% was $\roadvar$
\draw [thick,decoration={brace,amplitude=5,raise=5,mirror},decorate]
	(0,\roadlevel) -- node [below=10,anchor=north west] {$\roadvar$} (\roadright,\roadlevel) ;

% draw the axes
\draw[->] (0,0) -- (\roadright,0) node [right] {$y$} ;
\foreach \x in {0,.5,...,5.5} \draw (\x,-.1) -- (\x,.1) ;
\draw[->] (0,0) -- (0,3) ;
\foreach \y in {0,.5,...,2.5} \draw (-.1,\y) -- (.1,\y) ;

% draw the function (just any function)
\newcommand{\thefunc}[1]{ {( 1.5 + sin(50*#1) )} }
\newcommand{\xsplit}{2.5}

\draw (0,0) plot [domain=0:6] (\x,\thefunc{\x})
	node (density end) [anchor=west] {$\den(y)$} ;	% the top curve

% draw the split
\draw[dashed] (\xsplit,-.25) -- (\xsplit,3)
	node [above right] {$y^* = \invcumden(x;\den)$} ;

% fill the left side
\fill[blue,opacity=.1] (0,0) -- plot [domain=0:\xsplit] (\x,\thefunc{\x}) -- (\xsplit,0) -- cycle ;
\draw node at (1.75,1) {$x$} ;

% fill the right side
\fill[green,opacity=.1] (\xsplit,0) -- plot [domain=\xsplit:6] (\x,\thefunc{\x}) -- (6,0) -- cycle ;

% show cost deltas
%\newcommand{\dwidth}{.2}
\newcommand{\xshiftL}{1}
\newcommand{\xshiftR}{4}
\newcommand{\spanheight}{1.75}
% left
\draw[dqstyle] (\xshiftL,0)
	node [black,below=2] {$y_1 \leq y^*$}
	-- (\xshiftL,\thefunc{\xshiftL}) coordinate (top) ; % quantity
\draw[thick,dashed] (top) -- (\xshiftL,3) node [above=2] {$\den(y_1) dy$} ; % travel distance
\draw[thick,dashed,<-] (0,\spanheight) -- node [below] {$y_1$} (\xshiftL,\spanheight) ;
% right
\draw[dqstyle] (\xshiftR,0)
	node [black,below=2] {$y_2 \geq y^*$}
	-- (\xshiftR,\thefunc{\xshiftR}) coordinate (top) ; % quantity
\draw[thick,dashed] (top) -- (\xshiftR,2) node [above=2] {$\den(y_2) dy$} ; % travel distance
\draw[thick,dashed,->] (\xshiftR,\spanheight) -- node [below] {$\roadlen-y_2$} (6,\spanheight) ;

\end{tikzpicture}
\caption{
A supply road $\roadvar \in \supplyset$.
The area $x$, under the curve to the left of $y^*$, is transported to $\roadvar^-$.
The area $\measone(\roadvar) - x$ to the right of $y^*$ is transported to $\roadvar^+$.
}
\label{fig:decision edge intuition}
\end{figure}
Suppose that $\roadvar$ is a supply road, whose distribution of commodity has density function $\den$.
Since $\roadvar$ is a supply road, all of the demand is elsewhere in the network.
Therefore, all the available commodity must leave $\roadvar$ by one of its endpoints.
%away from the road interior.
Suppose we wish to transport a quantity $x$ of the commodity via $\roadvar^-$, and
the remaining $\measone(\roadvar)-x$ commodity
%$\cumden(\roadlen_\roadvar;\den)-x$
via $\roadvar^+$.
%, in the same direction as $\roadvar$, to accumulate at the head vertex.
%(All such commodity must eventually be used to satisfy demand on other roads;
%for now, though, we focus only on moving it off of the supply road.)
If the cost of transportation is proportional to distance traveled,
it is easy to argue that moving the left-most $x$ commodity to $\roadvar^-$
and the remainder to $\roadvar^+$
is optimal (see Figure~\ref{fig:decision edge intuition}).
The boundary separating the left-most $x$ commodity from the remainder lies at coordinate $y^* := \invcumden(x;\den)$.
Applying basic calculus, the cost of this strategy is determined to be
\begin{equation}
\label{eq:roadcost derived}
  \int_0^{y^*} \den(y) \ y \ dy
  + \int_{y^*}^{ \roadlen } \den(y) \ [ \roadlen - y ] \ dy.
\end{equation}
The first and second terms of~\eqref{eq:roadcost derived}
provide the cost functions~\eqref{eq:tail conn cost} and~\eqref{eq:head conn cost}, respectively,
% and~\eqref{eq:roadcost right connection}, respectively,
if one interprets $x$ as the flow on decision edge $\roadconnleft$ (i.e., $\roadTL$) and
%$\cumden(\roadlen_\roadvar;\den)-x$
$\measone(\roadvar)-x$
as the flow on decision edge $\roadconnright$ (i.e., $\roadTR$).
Note that the real-valued quantity $x$ is left as one of the dimensions
of the finite-dimensional optimization problem~\eqref{eq:roadnet emd equation}.
A symmetrical argument can be used to obtain the same cost for transporting commodity \emph{into} a demand road.
% to obtain~\eqref{} and~\eqref{eq:cost_roadfr}.
%(There, the objective is to transport some amount of commodity at the road's endpoints \emph{into} its interior, to satisfy a particular demand distribution.)

It may not be possible for all commodity which leaves some road by one of its endpoints
to supply demand on the interior of an adjacent road.
%some commodity may have to travel further through the network.
%Such need would arise, 
For example, if the total supply on one road exceeds the total demand of its immediate neighborhood, then
some supply must be assigned outside of this neighborhood.
%Yet because the total supply and demand are equal, a perfect match can be found between supply and demand.
%
However, let us consider a ``strategy'' in three phases:
First, commodity will be ``accumulated'' at vertices as previously described.
The third phase will be exactly opposite in the sense that
commodity will be ``dispersed'' from the vertices to satisfy demand in the interiors of adjacent roads.
During the middle phase, however,
commodity may be ``re-distributed'', but \emph{strictly} on the vertex set $\roadverts$.
The problem of finding the minimum cost re-distribution schedule
given the two ``vertex-only'' distributions of commodity
(i.e.,
the one immediately after accumulation and the one immediately before dispersion),
can be cast as a traditional minimum-cost flow problem
on the routing edges $\routingedges$ with weights $\routingwts$.
The flow conservation constraints~\eqref{eq:conservation} on $\wassnet$
account for the flow conservation requirements of all three phases simultaneously.
%~\eqref{eq:routingcosts}.
%
It turns out that the \emph{optimal} strategy of this three-phase type is at least as good as any other strategy.
%
%What is left is to determine the specific \emph{amounts} of commodity
%undergoing the various types of transport above.
%on our network formulation
%serve to encode the many couplings between these decision variables,
%and the optimal solution can be obtained by standard optimization techniques.
%

\subsection{Numerical Example}
\label{sec:main result example}

Let us re-visit the example network in Figure~\ref{fig:wassnet_construction_example} and
assign specific distributions.
%Figure~\ref{fig:example_network} shows an example road network.%
%The example network is composed of four roads, labeled \emph{north} (N), \emph{east} (E), \emph{south} (S), and \emph{west} (W);
Suppose each road is of unit length, and has probability given in Figure~\ref{fig:road example measures}.
\begin{figure}
\centering
%\usepackage{tikz}
%\usetikzlibrary{arrows,shapes,snakes,automata,backgrounds,petri}

%\usepackage{subfloat}

%\subfigure[A tikz picture]{
\begin{tikzpicture}
%[node distance=1.3cm,>=stealth',bend angle=45,auto]
  \tikzstyle{vertex}=[circle,thick,draw=blue!75,fill=blue!20,minimum size=6mm]
  \tikzstyle{roadedge}=[thick]
 % \tikzstyle{red place}=[place,draw=red!75,fill=red!20]
  \tikzstyle{roadnode}=[rectangle,thick,draw=black!75,
  			  fill=black!20,minimum size=4mm]

  \tikzstyle{every label}=[red]
  
%  \coord

  \newcommand{\myscale}{1.1}
  \node [vertex] (1) at (-\myscale,\myscale) {$1$} ;
  \node [vertex] (2) at (\myscale,\myscale) {$2$} ;
  \node [vertex] (3) at (\myscale,-\myscale) {$3$} ;
  \node [vertex] (4) at (-\myscale,-\myscale) {$4$} ;
  
  \path
  	(1) edge node [above] {$\meastwo({\rm N})=\frac{1}{5}$ } (2)
  	(2) edge node [right] {$\measone({\rm E})=\frac{2}{5}$ } (3)
  	(3) edge node [below] {$\measone({\rm S})=\frac{3}{5}$ } (4)
  	(4) edge node [left]  {$\meastwo({\rm W})=\frac{4}{5}$ }	(1);
%  	(1) edge node [above] {N, $\meastwo({\rm N})=\frac{1}{5}$ } (2)
%  	(2) edge node [right] {E, $\measone({\rm E})=\frac{2}{5}$ } (3)
%  	(3) edge node [below] {S, $\measone({\rm S})=\frac{3}{5}$ } (4)
%  	(4) edge node [left] {W, $\meastwo({\rm W})=\frac{4}{5}$ }	(1);
\end{tikzpicture}
%}
%
%\subfigure[A tabular]{
%\begin{tabular}{r|c|c}
%\hline
%Road		& Length $\roadlen$	& Surplus, $\supplyvar$		\\
%\hline\hline
%N				& 1					& $-1/5$			\\
%E				& 1					& $2/5$			\\
%S				& 1					& $3/5$			\\
%W				& 1					& $-4/5$			\\
%\end{tabular}
%}
\caption{
The roadmap of Figure~\ref{fig:wassnet_construction_example}
labeled with measures $\measone$ and $\meastwo$.
}
\label{fig:road example measures}
\end{figure}
The supply or demand of each road is distributed uniformly over its length.
%
%The east road (E) and the south road (S) have $2/5$ and $3/5$ total units of supply, respectively;
%%, distributed uniformly over the road;
%%the south road (S) has $3$ units of supply;
%the west (W) and north (N) roads have $4/5$ and $1/5$ units of \emph{demands}, respectively.

Figure~\ref{fig:roadnet example solution} shows two new copies of the flow network $\wassnet$
first shown in Figure~\ref{fig:wassnet_construction_network}.
%as it is generated using the framework of Section~\ref{}.
%
\begin{figure}
\newcommand{\figurescale}{.8}
\centering
\hfill
\subfigure[
Wasserstein network of Figure~\ref{fig:wassnet_construction_network}
labeled with optimal flows under $\measone$ and $\meastwo$.
%ptimal flows on the Wasserstein network
]{
\begin{tikzpicture}[scale=\figurescale]
%[node distance=1.3cm,>=stealth',bend angle=45,auto]
  \tikzstyle{vertex}=[circle,thick,draw=blue!75,fill=blue!20,minimum size=6mm]
 % \tikzstyle{red place}=[place,draw=red!75,fill=red!20]
  \tikzstyle{roadnode}=[rectangle,thick,draw=black!75,
  			  fill=black!20,minimum size=4mm]
  			  
  \tikzstyle{decisionedge}=[->]
  \tikzstyle{routingedge}=[->,bend right]
  \tikzstyle{inuse}=[thick]
  
  % add the interchange vertices
  \newcommand{\myscale}{1}
  \node [vertex] (1) at (-\myscale,\myscale) {1} ;
  \node [vertex] (2) at (\myscale,\myscale) {2} ;
  \node [vertex] (3) at (\myscale,-\myscale) {3} ;
  \node [vertex] (4) at (-\myscale,-\myscale) {4} ;
  % add the road nodes
  \newcommand{\rfactor}{2.5}
  \node [roadnode] (N) at (0,\rfactor*\myscale) {N} ;
  \node [roadnode] (E) at (\rfactor*\myscale,0) {E} ;
  \node [roadnode] (S) at (0,-\rfactor*\myscale) {S} ;
  \node [roadnode] (W) at (-\rfactor*\myscale,0) {W} ;
  
  % add the decision edges
  \newcommand{\labelstyle}{\tiny}
  \path
  	(1) edge [decisionedge] (N)
  	(2) edge [decisionedge,inuse] node [above right] {\labelstyle $1/5$} (N)
  	(E) edge [decisionedge,inuse] node [above right] {\labelstyle $1/3$} (2)
  	(E) edge [decisionedge,inuse] node [below right] {\labelstyle $1/15$} (3) 
  	(S) edge [decisionedge] (3) 
  	(S) edge [decisionedge,inuse] node [below left] {\labelstyle $3/5$} (4) 
  	(1) edge [decisionedge,inuse] node [above left] {\labelstyle $2/15$} (W) 
  	(4) edge [decisionedge,inuse] node [below left] {\labelstyle $2/3$} (W) ;
  	%

  % add the routing edges
  \path
  	(1) edge [routingedge] (2)
  	(2) edge [routingedge,inuse] node [above] {\labelstyle $2/15$} (1)
  	(2) edge [routingedge] (3)
  	(3) edge [routingedge] (2)
  	(3) edge [routingedge,inuse] node [above] {\labelstyle $1/15$} (4)
  	(4) edge [routingedge] (3)
  	(4) edge [routingedge] (1)
  	(1) edge [routingedge] (4) ;
%  	(1) edge node [above,text width=4cm] { \centering road N \\ $\roadlen_{\rm N}=1$, $\supplyvar({\rm N})=-\frac{1}{5}$ } (2)
%  	(2) edge node [right,text width=4cm] { \centering road E \\ $\roadlen_{\rm E}=1$, $\supplyvar({\rm E})=\frac{2}{5}$ } (3)
%  	(3) edge node [below,text width=4cm] { \centering road S \\ $\roadlen_{\rm S}=1$, $\supplyvar({\rm S})=\frac{3}{5}$} (4)
%  	(4) edge node [left,text width=4cm] { \centering road W \\ $\roadlen_{\rm W}=1$, $\supplyvar({\rm W})=-\frac{4}{5}$} (1) ;
\end{tikzpicture}
\label{fig:roadnet example solution flow}
}
\hfill
% don't start new paragraph
%
\subfigure[
Wasserstein network of Figure~\ref{fig:wassnet_construction_network}
labeled with cost per edge of optimal flows under $\measone$ and $\meastwo$.
]{
\begin{tikzpicture}[scale=\figurescale]
%[node distance=1.3cm,>=stealth',bend angle=45,auto]
  \tikzstyle{vertex}=[circle,thick,draw=blue!75,fill=blue!20,minimum size=6mm]
 % \tikzstyle{red place}=[place,draw=red!75,fill=red!20]
  \tikzstyle{roadnode}=[rectangle,thick,draw=black!75,
  			  fill=black!20,minimum size=4mm]
  			  
  \tikzstyle{decisionedge}=[->]
  \tikzstyle{routingedge}=[->,bend right]
  \tikzstyle{inuse}=[thick]
  
  % add the interchange vertices
  \newcommand{\myscale}{1}
  \node [vertex] (1) at (-\myscale,\myscale) {1} ;
  \node [vertex] (2) at (\myscale,\myscale) {2} ;
  \node [vertex] (3) at (\myscale,-\myscale) {3} ;
  \node [vertex] (4) at (-\myscale,-\myscale) {4} ;
  % add the road nodes
  \newcommand{\rfactor}{2.5}
  \node [roadnode] (N) at (0,\rfactor*\myscale) {N} ;
  \node [roadnode] (E) at (\rfactor*\myscale,0) {E} ;
  \node [roadnode] (S) at (0,-\rfactor*\myscale) {S} ;
  \node [roadnode] (W) at (-\rfactor*\myscale,0) {W} ;
  
  % add the decision edges
  \newcommand{\labelstyle}{\tiny}
  \path
  	(1) edge [decisionedge] (N)
  	(2) edge [decisionedge,inuse] node [above right] {\labelstyle $1/10$} (N)
  	(E) edge [decisionedge,inuse] node [above right] {\labelstyle $5/36$} (2)
  	(E) edge [decisionedge,inuse] node [below right] {\labelstyle $1/180$} (3) 
  	(S) edge [decisionedge] (3) 
  	(S) edge [decisionedge,inuse] node [below left] {\labelstyle $3/10$} (4) 
  	(1) edge [decisionedge,inuse] node [above left] {\labelstyle $1/90$} (W) 
  	(4) edge [decisionedge,inuse] node [below left] {\labelstyle $5/18$} (W) ;
  	%

  % add the routing edges
  \path
  	(1) edge [routingedge] (2)
  	(2) edge [routingedge,inuse] node [above] {\labelstyle $2/15$} (1)
  	(2) edge [routingedge] (3)
  	(3) edge [routingedge] (2)
  	(3) edge [routingedge,inuse] node [above] {\labelstyle $1/15$} (4)
  	(4) edge [routingedge] (3)
  	(4) edge [routingedge] (1)
  	(1) edge [routingedge] (4) ;
%  	(1) edge node [above,text width=4cm] { \centering road N \\ $\roadlen_{\rm N}=1$, $\supplyvar({\rm N})=-\frac{1}{5}$ } (2)
%  	(2) edge node [right,text width=4cm] { \centering road E \\ $\roadlen_{\rm E}=1$, $\supplyvar({\rm E})=\frac{2}{5}$ } (3)
%  	(3) edge node [below,text width=4cm] { \centering road S \\ $\roadlen_{\rm S}=1$, $\supplyvar({\rm S})=\frac{3}{5}$} (4)
%  	(4) edge node [left,text width=4cm] { \centering road W \\ $\roadlen_{\rm W}=1$, $\supplyvar({\rm W})=-\frac{4}{5}$} (1) ;

	\node at (3*\myscale,2.25*\myscale) {\labelstyle {\bf Total Cost:} $31/30$} ;
	
\end{tikzpicture}
\label{fig:roadnet example solution cost}
}
\hfill\null
\caption{}
\label{fig:roadnet example solution}
\end{figure}
The network in Figure~\ref{fig:roadnet example solution flow} is labeled with the flows of the optimal network flow solution (obtained by solving a quadratic program).
The network in Figure~\ref{fig:roadnet example solution cost} is labeled with the costs incurred on each edge by the optimal network flow.
%which corresponds to the EMD.
%Each edge is labeled with a tuple $(f,c)$ of the flow $f$ supported by the edge
%and the associated cost $c$.
The optimal solution has cost equal to $31/30$,
which is therefore the Earth Mover's distance between $\measone$ and $\meastwo$.
%the supply and demand measures.

Examining the optimal flow provides qualitative insight in addition to the value of the EMD.
In particular, we can observe the following facts:
First, the demand of the north road (N) is supplied entirely by the east road (E).
Second, all of the supply of the south road (S) goes to the west road (W).
Finally, the east road (E) supplies the remaining demand of the west road (W),
however, $1/15$ unit of supply from E reaches W via the clockwise path (E-3-4-W),
% reaches E in the clockwise direction destined for W,
while the remaining $2/15$ unit of supply reaches W via the counter clockwise path (E-2-1-W).
%while the (and all reaches W),
%while the remaining $5/3$ units of supply leaves in the counter-clockwise direction;
%$1$ unit is consumed by N, and the remaining $2/3$ unit reaches W.
%
%\begin{figure}[h!]
%\includegraphics[width=.75\linewidth]{roadnet-emd-example}
%\caption{}
%\label{fig:example_network}
%\end{figure}

\section{Simulation Study}
\label{sec:simulation}

In this section we present a simulation study motivated by the work in~\cite{treleaven2013},
% which
demonstrating the role of the EMD
%and the methods to compute it)
in predicting the
%maximum theoretical 
throughput of vehicle sharing systems modeled by
stochastic and dynamic Pickup-and-Delivery problems.

%We begin by introducing the Dynamic Pickup and Delivery problem (DPDP), a spatial queuing model
%\subsection{Background---Stochastic and Dynamic Pickup and Delivery Problems}
\subsection{Background}

We consider the Dynamic Pickup and Delivery problems (DPDP)
with \emph{stochastic} demands, studied
e.g., in~\cite{Swihart:1999,Waisanen:,pavone2010fundamental,treleaven2013}.
(A survey on the general DPDP can be found in~\cite{berbeglia2010dynamic}.)
%
%While information about future demands is often known as a probability distribution in DPDPs,
%%---and Dynamic Vehicle Routing problems (DVRPs) in general---%
%studies of the stochastic versions have remained scarce,
%also, the studies consider Euclidean workspaces almost exclusively.
%
%, and
%is based on a spatial queueing model
%capturing the uncertain evolution of a demand-responsive transportation scenario.
% be a distance function between points in $\workspace$;
% workspaces, some compact subset of a complete and separable metric space $\env$,
%with a distance function $\distfunc{}$; e.g., a road network $\roadnet$.
A number $\numveh$ of service vehicles travel in a geometric \emph{workspace} $\workspace$
with unit maximum speed;
the distance between points is measured by a distance function $\distance$.
% $\vehspeed$;
The vehicles have unlimited range but \emph{unit} capacity, i.e., they can transport at most one object at a time.
Demands arrive randomly into the workspace,
generated according to a time-invariant Poisson process with time intensity $\lambda \in \reals_{>0}$.
A newly arrived demand has an associated pickup location $\randpt$ and an associated delivery location $\altrandpt$, where
the demand data $(\randpt,\altrandpt)$ is independently, identically distributed (i.i.d.)
according to a joint probability distribution $\measvar$.
Each demand must be transported from its pickup location to its delivery location---%
i.e., an empty vehicle must visit the pickup location, followed immediately by the delivery location---%
then it is removed from the system.

\cite{treleaven2013} studied the DPDP in
Euclidean workspaces $\workspace \subset \reals^d$, $d \geq 2$, with
distributions $\measvar$ having
absolutely continuous marginal distributions $\measone$ and $\meastwo$ for $\randpt$ and $\altrandpt$, respectively.
It was shown that
under any ``stabilizing'' routing policy---i.e., one where
the number of demands in the system remains uniformly bounded for all time---%
the average vehicle time dedicated to any demand
% under a stabilizing policy
satisfies a lower bound
%\begin{equation} 
\[
	\liminf_{ t \to +\infty } \numserved_t / t
	\geq
		\overline\servicetime.
%			\doteq \expect_\measvar \left\{ \distance( \randpt, \altrandpt ) \right\} + W( \measone, \meastwo )
\]
Here,
$\numserved_t$ denotes the total number of demands serviced by time $t$ and
\begin{equation} \label{eq:defn servicetime}
	\overline\servicetime
		\doteq \expect_\measvar \left\{ \distance( \randpt, \altrandpt ) \right\} + W( \measone, \meastwo ).
\end{equation}
%
%the term $\Wass(\measone,\meastwo)$ is the \emph{Earth Mover's distance} (EMD)
%between the marginal distributions of $\measvar$.
% which
%we will discuss shortly.
%
Using this result, the authors proved Theorem~\ref{thm:stability condition}.
%the following result.
\begin{theorem}[Stability of the DPDP]
\label{thm:stability condition}
Defining the \emph{system utilization} (a fraction) as
$\utilization \doteq \arrivalrate \overline\servicetime / \numveh$,
the condition $\utilization < 1$ is both necessary and sufficient
to ensure the existence of 
a vehicle-routing policy by which
the expected number of demands in the system remains uniformly bounded for all time, i.e.,
it does not grow unbounded.
\end{theorem}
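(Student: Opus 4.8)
The plan is to prove the two directions separately: that $\utilization < 1$ is \emph{necessary} for the existence of a stabilizing policy, and that it is \emph{sufficient}. The necessity direction is essentially a bookkeeping consequence of the per-demand lower bound stated just above the theorem; the sufficiency direction requires constructing an explicit policy and bounding its queue dynamics.

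\textbf{Necessity.} Suppose a routing policy exists under which $\expect[B_t] \le C$ for all $t$, where $B_t$ is the number of in-system demands at time $t$. Writing $A_t$ for the number of arrivals and $\numserved_t$ for the number of completed demands, we have $\numserved_t = A_t - B_t$, so $\expect[\numserved_t]/t = \arrivalrate - \expect[B_t]/t \to \arrivalrate$: a stabilizing policy must serve demands at exactly the arrival rate in the long run. On the other hand, the cited lower bound says the long-run total vehicle-time expended per demand is at least $\overline\servicetime$ (the term $\expect_\measvar\{\distance(\randpt,\altrandpt)\}$ being unavoidable \emph{loaded} travel and $\Wass(\measone,\meastwo)$ the unavoidable \emph{empty} repositioning between a delivery and the next pickup). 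Since the $\numveh$ vehicles supply at most $\numveh t$ units of vehicle-time on $[0,t]$, we obtain $\numserved_t\,\overline\servicetime \lesssim \numveh t$ in the limit, hence $\arrivalrate\overline\servicetime \le \numveh$, i.e. $\utilization \le 1$. Excluding the boundary case $\utilization = 1$ then requires a heavy-traffic argument in the spirit of the critical M/G/1 queue: with zero slack between work arriving and work the servers can clear, a null-recurrence / random-walk comparison on an embedded sequence of server-busy epochs shows the number in system cannot stay bounded.

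\textbf{Sufficiency.} Assume $\utilization < 1$. I would exhibit a \emph{batching} (receding-horizon) policy: repeatedly collect the outstanding demands into a batch of size $\approx \batchsize$, compute a near-optimal joint route serving the whole batch — solve the minimum-cost bipartite matching pairing each delivery location with the pickup of the demand served next, stitch the resulting pickup--delivery legs into a tour, and split the work across the $\numveh$ vehicles — execute it, and recurse. The engine of the argument is the Beardwood--Halton--Hammersley-type limit theorem for the Stacker Crane problem from~\cite{treleaven2013}: for $\batchsize$ i.i.d.\ demands drawn from $\measvar$, the expected length of such a route is $\batchsize\,\overline\servicetime\,(1+o(1))$ as $\batchsize\to\infty$, with the $\Wass(\measone,\meastwo)$ term capturing precisely the asymptotic empty-repositioning overhead. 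Pick $\batchsize$ large enough that the expected time to clear a batch of size $\batchsize$ is at most $(1-\delta)\,\batchsize/\arrivalrate$ for some $\delta > 0$; this is possible exactly because $\arrivalrate\overline\servicetime/\numveh = \utilization < 1$. Observed at the (random) batch-completion epochs, the number of demands in the system is then a Markov chain with uniformly negative drift whenever the count is large, so a Foster--Lyapunov argument (with a linear function of the count as the Lyapunov function) yields a uniform bound on its expectation; since the count changes by only an expected $O(\batchsize)$ between consecutive epochs, $\expect[B_t]$ is uniformly bounded for all $t$.

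\textbf{Main obstacle.} The crux is the sufficiency direction, and within it the paper-specific ingredient: that a near-optimal route serving $\batchsize$ random demands has per-demand cost converging to $\expect_\measvar\{\distance(\randpt,\altrandpt)\} + \Wass(\measone,\meastwo)$ — i.e. that the empty-travel overhead is governed by the EMD and nothing larger. Showing the achievable upper bound matches the lower bound is where the Earth Mover's distance genuinely enters, and it is also where care is needed to handle transient/boundary effects of random batch sizes and to upgrade a ``finite for each $t$'' bound into a \emph{uniform} one.
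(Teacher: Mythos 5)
This theorem is not proved in the paper at all: it is reproduced as background from~\cite{treleaven2013}, and the authors explicitly decline to re-derive it (``We will not re-derive~\eqref{eq:defn servicetime} or Theorem~\ref{thm:stability condition} \dots Doing so involves a trivial retracing of the logic in~\cite{treleaven2013}''). So there is no in-paper argument to compare yours against; the only fair comparison is with the strategy of the cited source, and on that score your outline is the standard one and matches it. Necessity via work conservation --- a stabilizing policy must complete demands at rate $\arrivalrate$, each completed demand consumes at least $\overline\servicetime$ units of vehicle-time in the long run (loaded travel plus EMD-governed empty repositioning), and the fleet supplies only $\numveh t$ units on $[0,t]$, giving $\arrivalrate\overline\servicetime\le\numveh$ --- is exactly the bookkeeping argument built on the displayed per-demand lower bound. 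Sufficiency via a batching policy whose per-batch route length is controlled by the BHH-type asymptotic for the Stacker Crane problem, with the $\Wass(\measone,\meastwo)$ term capturing the empty-travel overhead, followed by a negative-drift (Foster--Lyapunov) argument at batch-completion epochs, is likewise the mechanism in~\cite{treleaven2013}.

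Two caveats, both of which you partly anticipate. First, the exclusion of the boundary case $\utilization=1$ from the ``necessary'' direction is genuinely delicate: the work-conservation argument only yields $\utilization\le 1$, and the null-recurrence comparison you gesture at needs the second-moment structure of the batch service times to be controlled; as stated, the theorem's claim that $\utilization<1$ is \emph{necessary} is really the contrapositive ``$\utilization\ge 1$ implies instability,'' and your sketch does not yet close that at equality. Second, in the sufficiency direction the step from ``expected number at embedded epochs is bounded'' to ``$\mathbb{E}$ of the count is uniformly bounded for \emph{all} $t$'' needs the inter-epoch increments to have uniformly bounded expectation, which in turn requires a uniform bound on batch durations; this is where the receding-horizon bookkeeping has to be done carefully. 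Neither caveat is a wrong turn --- they are the places where the cited reference does the technical work --- but they are the parts of your outline that are sketches rather than proofs.
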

%
%The stability condition, by giving structural insights into DRT system,
%can provide system designers with essential information
%to build business and strategic planning models regarding, e.g., fleet sizing.

Similar results had been proved previously in~\cite{Swihart:1999} and~\cite{pavone2010fundamental}, but
mysteriously \emph{without} the EMD term.
The reason is that in every previous study it was either implicit or assumed that $\measone = \meastwo$,
in which case it happens that $\Wass(\measone,\meastwo) = 0$.
However, in any case when the marginal distributions of $\measvar$ are different---even \emph{slightly}---%
the stability condition reveals the additional Earth Mover's distance term.

%
%(This result leads to the stability conditions already discussed;
%a similar condition in~\cite{pavone2011load},
%for the stability of equilibria of fluid systems modeling Mobility-on-Demand scenarios,
%can be expressed in terms of the discrete EMD.)

\subsection{Experiment Design}
\providecommand{\simhorizon}{T}

Our experiment is similar to one in~\cite{treleaven2013}, which 
measures the critical arrival rate $\arrivalrate^*$
separating \emph{stabilizable} arrival rates from \emph{unstabilizable} ones,
given a fixed setting of the other system parameters.
We will not re-derive~\eqref{eq:defn servicetime} or~Theorem~\ref{thm:stability condition} for roadmaps in this paper.
Doing so involves a trivial retracing of the logic in~\cite{treleaven2013}, and
yields little or no new insight.
%It is not within the scope of the paper to re-prove Theorem~\ref{thm:stability condition} or
%to show that~ holds
%when the workspace $\workspace$ is not strictly Euclidean, but is a finite roadmap instead.
%
The main insight of the experiment is as follows:
Let $\pi$ be a routing policy for a $\numveh$-vehicle DPDP which
is stabilizing for all $\arrivalrate < \arrivalrate^*$
(and satisfies some technical ``fairness'' conditions).
%guarantees that every demand is serviced eventually.
Then we run the DPDP system with arrival rate $\arrivalrate > \arrivalrate^*$ and
operating under $\pi$.
Since $\arrivalrate > \arrivalrate^*$, the number of outstanding demands in the system grows unbounded.
However, we can expect the policy to service demands at an average rate approaching $\arrivalrate^*$
(i.e., the fastest rate under $\pi$)
as demands build up in the workspace.
Thus, we can estimate $\arrivalrate^*$, e.g.,
by computing $\numserved_\simhorizon / \simhorizon$
after time $\simhorizon$ sufficiently large.
%where $\numserved_t$ denotes the total number of demands serviced after $t$ time.
%Thus, the number of outstanding demands should increase at an average rate of $\arrivalrate - \arrivalrate^*$.
%Since we can control $\arrivalrate$ in simulation, 
%Wwe can use this insight to estimate $\arrivalrate^*$,
%e.g., by the simple calculation $\arrivalrate - \numdem_\simhorizon / \simhorizon$
%after sufficiently large time $\simhorizon$, where $\numdem_\simhorizon$ is
%the number of outstanding demands at time $\simhorizon$.
%We focus our discussion on the single-vehicle setting, but results for multiple vehicle systems have been equally positive.

Our simulations are of a ``gated'', multi-vehicle, nearest-neighbor policy (gated $\numveh$-NN).
A gated policy is one that completes in order a sequence of demand ``batches'', where
each batch consists of all the demands that arrived while
the previous batch was being worked on.
%between the formation of the previous batch
%and the time when its last demand was serviced.
Within a particular batch, a vehicle $i$'s $k$th demand is the one---%
among all demands not yet assigned to any vehicle at the time when $i$'s $(k-1)$th demand was delivered---%
whose pickup location was nearest to the location of $i$.
Although a proof that such policy is stabilizing for all $\arrivalrate < \arrivalrate^*$ is currently not available,
it has been observed that nearest neighbor policies have good performance for a variety of vehicle routing problems.
%since nearest neighbor searches are very efficient, it also runs quickly,
%even with \emph{many} outstanding demands.

\subsection{Results and Discussion}

The simulation experiment was repeated for fifty (50) randomly-generated scenarios,
each characterized by
(i) a random, connected roadmap $\roadnet$ of $1$--$10$ roads,
(ii) a random demand distribution $\measvar$ (with random but constant density per pair of roads), and
(iii) a randomly sized fleet of between $1$--$5$ unit speed vehicles.
The minimum average service time $\overline\servicetime$ was predicted using~\eqref{eq:defn servicetime};
$\Wass$ was computed by solving a QP, and
the expected pickup-to-delivery distance
was computed using another method which is the subject of a future paper
(Monte Carlo averaging is a viable option).
The critical rate $\arrivalrate^*$ was computed by $\numveh / \overline\servicetime$.
In each case, the arrival rate simulated was $2\arrivalrate^*$
(exceeding theoretical capacity by 100\%), and
the simulation was run for $\simhorizon=1000$ time.
Figure~\ref{fig:service_time_results} shows a very strong corroboration
between the computed and empirical per-demand average service times $\overline\servicetime$.
\begin{figure}
\centering
\includegraphics[
width=.4\linewidth
]{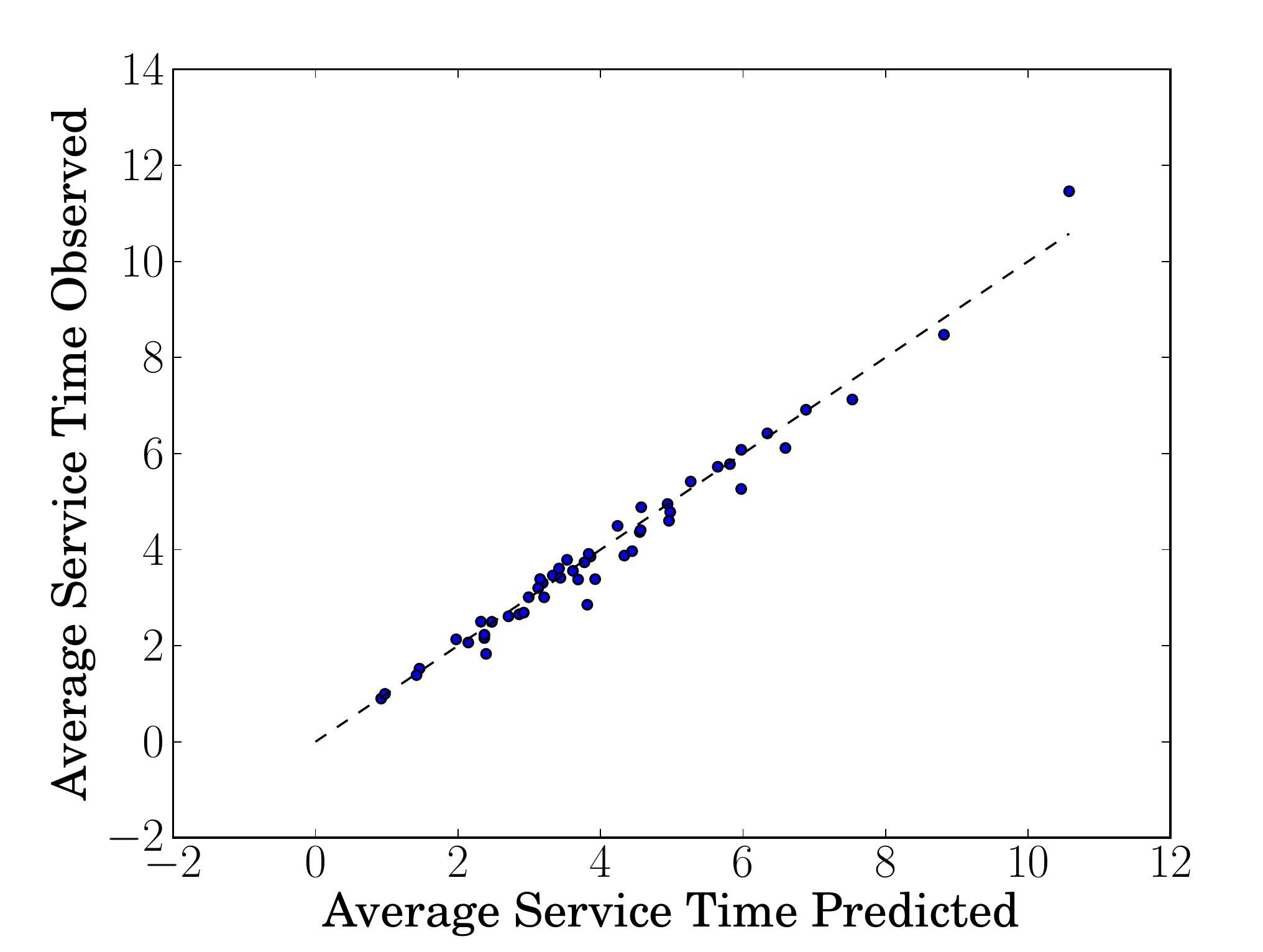}
\caption{Distribution of average service time (observed vs. predicted) over many random scenarios.}
\label{fig:service_time_results}
\end{figure}

In addition to the randomized scenarios,
we also considered again the road network of Section~\ref{sec:main result example},
with $\measvar$ defined by:
(i) with probability given by Table~\ref{tbl:simulation distribution}, $\randpt \in \roadvar_1$ and $\altrandpt \in \roadvar_2$;
(ii) given their road assignments, the coordinates of $\randpt$ and $\altrandpt$
are independent and uniformly distributed on each road interval.
\begin{table}
\centering
\begin{tabular}{c|c|cc|c}
									& 		& \multicolumn{2}{c}{ $\roadvar_2$ }		& $\measone$		\\ \hline
									& 		& E		& S			&					\\ \hline
	\multirow{2}{*}{$\roadvar_1$}	& N		& $1/5$	&			& $1/5$				\\
%									& E		&		&			&					\\		
%									& S		&		&			&					\\		
									& W		& $1/5$	& $3/5$		& $4/5$				\\ \hline
	$\meastwo$						&		& $2/5$	& $3/5$		&
\end{tabular}
\caption{
The probability mass function (pmf) $\measvar(\roadvar_1,\roadvar_2)$.
}
\label{tbl:simulation distribution}
\end{table}
The marginals of this distribution are equal to the input measures in Figure~\ref{fig:road example measures},
and so the EMD is equal to $31/30$;
% \approx 1.033$.
the expected pickup-to-delivery distance
%$\expect_\measvar\left[ \distance( \randpt, \altrandpt )\right]$
is equal to $17/15$, and
% \approx 1.133$.
the sum of the terms is 
the predicted average per-demand service time 
$\overline s = 13/6 \approx 2.167$.
% which
%includes the average time spent enroute to a pickup site.

Figure~\ref{fig:number outstanding demands} shows plots
of the number of outstanding demands
over the duration $\simhorizon=10,000$
of two experiments with different arrival rates:
Figure~\ref{fig:number outstanding demands fast} shows the result
of the experiment with arrival rate $\arrivalrate = \arrivalrate^* + 0.1$.
The number of outstanding demands reaches
$\approx 1,000 = 0.1 \times \simhorizon$
% outstanding demands
by the final time,
showing strong corroboration of our predictions.
Figure~\ref{fig:number outstanding demands slow} shows the result
of the experiment with arrival rate $\arrivalrate = 0.99 \times \arrivalrate^*$,
which is below the stabilizable threshold.
The resulting plot includes several ``renewals'' (times when the system is empty)
and does not exhibit uncontrolled growth in the number of outstanding demands.
\begin{figure}
\hfill
\subfigure[$\arrivalrate = \arrivalrate^* + 0.1$]{
%\begin{tikzpicture}
%\draw 	(-.5,0) [->] -- (3,0) node [right] {$t$} ;
%\draw	(0,-.5) [->] -- (0,3) node [left] {$N_t$};
%\draw	(0,0) [gray] -- node [black,right] {slope = $\arrivalrate - \arrivalrate^*$} (3,3) ;
%\end{tikzpicture}
\includegraphics[width=.4\linewidth]{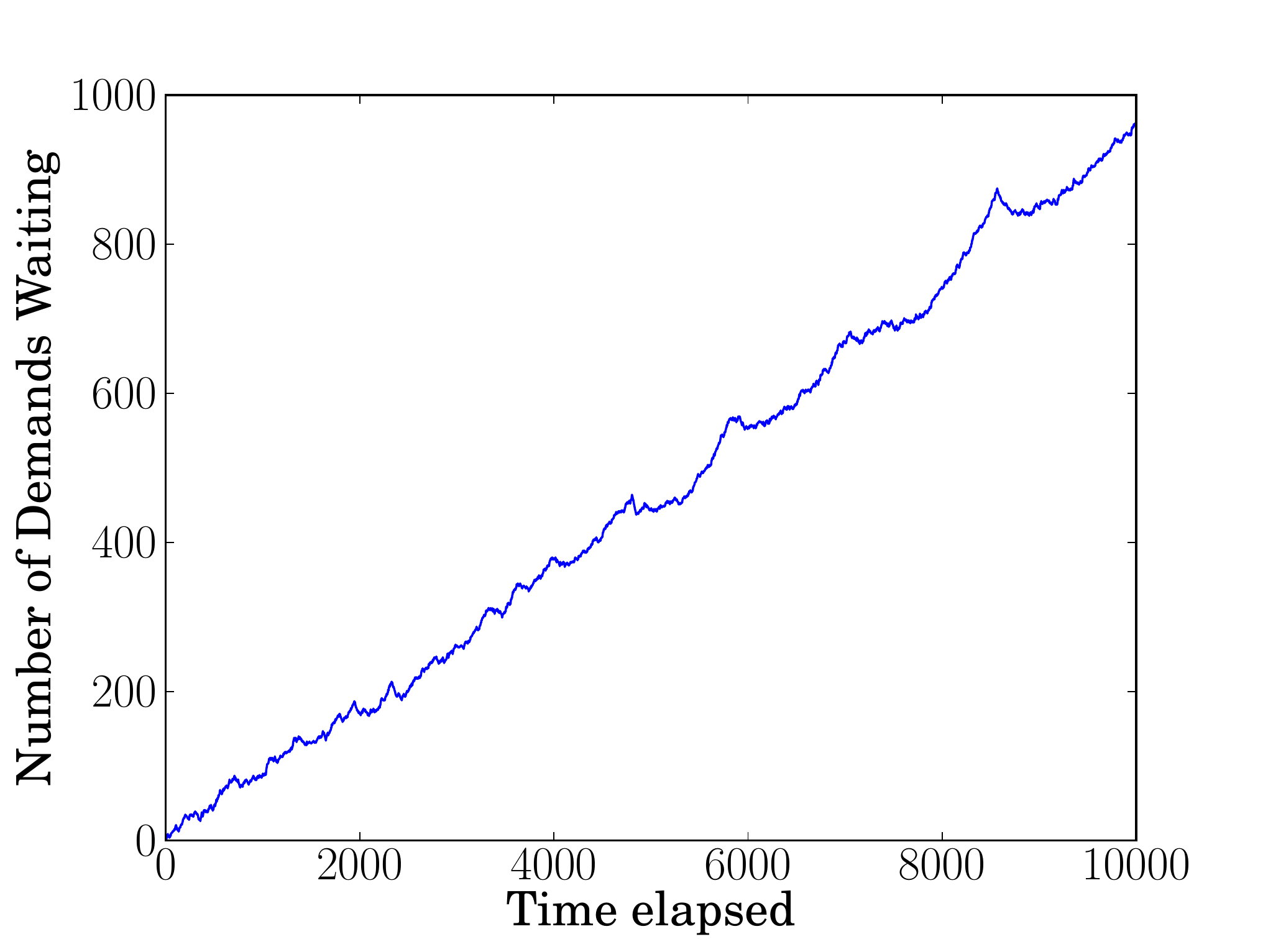}
\label{fig:number outstanding demands fast}
}
\hfill
\subfigure[$\arrivalrate = 0.99 \times \arrivalrate^*$]{
%\begin{tikzpicture}
%\draw 	(-.5,0) [->] -- (3,0) node [right] {$t$} ;
%\draw	(0,-.5) [->] -- (0,3) node [left] {$N_t$};
%\draw	(0,0) [gray] .. controls (.5,1) .. (1.5,1.5) -- (3,1.5) ;
%%-- node [black,right] {slope = $\arrivalrate - \arrivalrate^*$} (3,3) ;
%\end{tikzpicture}
\includegraphics[width=.4\linewidth]{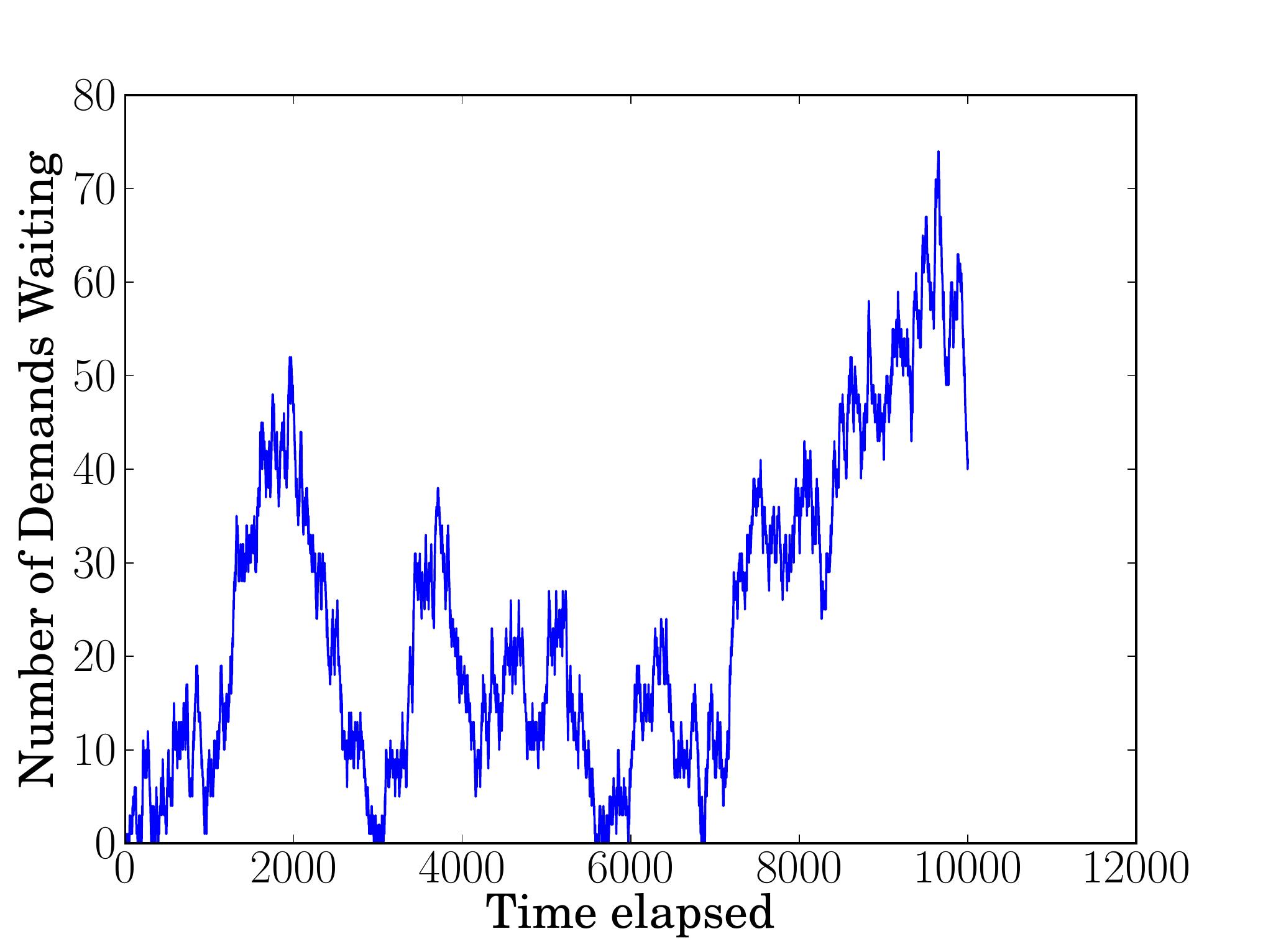}
\label{fig:number outstanding demands slow}
}
\hfill\null
\caption{
Plots of the number of outstanding demands as a function of time, over the duration of the simulation experiment.
%under Replace these with simulation plots. (By the way, as long as all you want is \emph{simple, simple} plots, tikz is pretty good.)
}
\label{fig:number outstanding demands}
\end{figure}

\section{Approximating the Earth Movers Distance by Min-Cost Flow}
\label{sec:general purpose approx}

\newcommand{\numcell}{N}

\newcommand{\approxtag}{{\rm APPROX}}
\newcommand{\approxnet}{\flownet^\approxtag}
\newcommand{\newapproxnet}{{\tilde\flownet}^\approxtag}
\newcommand{\approxnodes}{\nodeset^\approxtag}
\newcommand{\approxedges}{\arcset^\approxtag}
\newcommand{\approxsupply}{\supplyvar^\approxtag}
\newcommand{\approxcosts}{\edgecosts^\approxtag}
\newcommand{\uppertag}{{\rm UPPER}}
\newcommand{\lowertag}{{\rm LOWER}}

\newcommand{\uppercosts}{\edgecosts^\uppertag}
\newcommand{\lowercosts}{\edgecosts^\lowertag}
\newcommand{\upperwts}{\edgewts^\uppertag}
\newcommand{\lowerwts}{\edgewts^\lowertag}
\newcommand{\upperwt}{\edgewt^\uppertag}
\newcommand{\lowerwt}{\edgewt^\lowertag}

\newcommand{\underdist}{\underline\distance}
\newcommand{\overdist}{\overline\distance}

\newcommand{\underapprox}{\Wass^\lowertag}
\newcommand{\overapprox}{\Wass^\uppertag}

\newcommand{\matchvar}{M}
\newcommand{\tmatch}{{\matchvar^\srcTag}}
\newcommand{\bmatch}{{\matchvar^\trgTag}}

The rest of the paper explores a particular method
%---%
%the simplest one we could find---%
to prove Theorem~\ref{thm:roadnet emd equation},
i.e., the correctness of our algorithm.
At a high-level, our approach is to develop an approximation scheme for $\Wass$ (the EMD),
bounding it entirely between an inner- and outer- approximation,
and then showing that the bounds converge (squeeze) to the LHS of~\eqref{eq:roadnet emd equation}.
%Our results rely heavily on the theory of network flows.

%\newcommand{\Wass}{W}
\subsection{The General Purpose Scheme}
\label{sec:gp approx construction}

%Let $\workspace$ be a \emph{workspace}, a compact subset of a complete and separable metric space
%Consider a workspace $\env$, assumed to be a compact subset of a complete and seperable metric space
%with distance function $\distance$.
%the expression~\eqref{eq:EMDdef} of the Earth Movers distance in 
%a compact environment $\env$ with distance function $\distance$,
%assumed to form a complete and separable metric space.
In this section
%our goal is to 
we present a naive, ``general-purpose''
approximation scheme for the Earth Movers distance
%$\Wass$
for a fairly general class of metric domains.
%Specifically, 
%The idea is to recover a finite-dimensional network optimization problem
%, using a discretized version
%(in particular, a tessellation) of the domain $\env$.
%%---%
%%e.g., $(\env,\distance)$ which are complete and separable.
%%defined on $(\env,\distance)$.
Specifically, 
we present a procedure which,
given a particular partition $\cellset$ and argument distributions $\measone$ and $\meastwo$,
generates a matched pair of network optimization problem instances.
%in the style of Remark~\ref{remark:network flow interp}.
The optimal solutions to these instances will bound $\Wass(\measone,\meastwo)$ from both sides.
If one can obtain a \emph{tessellation scheme} for the domain $\env$,
%one can obtain
%The scheme we present has two essential parts 
%a tessellation scheme
capable of tessellating any compact workspace $\workspace \subset \env$
to increasingly high ``resolution'',
then $\Wass$ can be approximated by making such bounds arbitrarily close.
(Such tessellation is easily obtainable, e.g., in Euclidean environments.)
%and let $\workspace \subset \env$ be a compact \emph{workspace} in $\env$.
% between $\measone$ and $\meastwo$.
%

%Working together, the two components can produce
%a sequence of network optimization instances $(\flownet_k,\edgecosts_k)_{ k\to +\infty}$
%whose min-cost flows converge in cost to $\Wass(\measone,\meastwo)$.
%

\emph{Workspace Tesselation}:
The ability to tessellate is generally a property specific to the type of the domain $\env$.
%Clearly, the partition schemThe former component of the scheme is clearly environment-dependent.
%For a concrete example
%we will consider Euclidean environments $\reals^d$, for $d \geq 1$,
%together with the Euclidean distance metric.
%(However, the method is rather generalizable.)
%---which is environment-dependent---%
%---which is not.
%For Euclidean spaces, we let
%%employ the simple tessellation scheme:
%$\cellset = \{ \ \workcell^i \ \}_{i=1}^{N(\epsilon)}$
%denote an arbitrarily-ordered, 
A common tessellation scheme for Euclidean $\reals^d$
%, $d \geq 1$,
is the grid-based partition of $\reals^d$
into [hyper-] cubic cells of side-length $\onehalf \epsilon d^{-1/2}$.
%which generates a cell $\workspace \cap $ for each cube.
%(e.g., see Figure~\ref{});
%here, $N(\epsilon)$ denotes the number of resulting cells.
The key objective of tessellation in this paper
%in \emph{any} domain
% (producing a partition $\cellset$) 
is to ensure that for any $\epsilon > 0$ one can produce a partition $\cellset_\epsilon$ satisfying
\begin{equation}
\label{eq:partition property}
\begin{aligned}
	\max_{ \pointvar \in \workcell, \pointvar' \in \workcell' } \distance(\pointvar,\pointvar')
	-
	\min_{ \pointvar \in \workcell, \pointvar' \in \workcell' } \distance(\pointvar,\pointvar')
	\ \leq \
	\epsilon
	& \qquad \text{for all $(\workcell,\workcell') \in \cellset_\epsilon^2$}.
\end{aligned}
\end{equation}
%
%On the other hand, the procedure to construction the network optimization instance is defined for any partition structure.
%, and is not specific to domain $\env$.
%\noindent 
%\emph{The tessellation scheme}:

%\noindent 
\emph{Instance Construction}:
Let $\cellset$ be a finite partition of a workspace $\workspace \in \env$.
The flow network $\approxnet$ will comprise
a di-graph $(\approxnodes,\approxedges)$
and supplies $\approxsupply$.
We will call $\approxnet$ the \emph{approximation network}.
% defined for the vertices in $\nodeset$.
%(We omit the argument $\epsilon$ where it is implicit.)
%and edge costs $\edgecost$ defined for the edges in $\arcset$.
%
To construct the vertex set $\approxnodes$ we generate two sets $\tverts$ and $\bverts$ of new symbolic vertices;
each set is of cardinality $|\cellset|$.
We assign two such vertices to each cell $\workcell \in \cellset$,
one from the set $\tverts$ and one from the set $\bverts$, where
each vertex is assigned to a single cell only (see Figure~\ref{fig:partition and assignment}).
%The vertices are assigned \emph{without replacement}, so that each is assigned to a single cell
%
\begin{figure}[h!]
\centering
\begin{tikzpicture}
\tikzstyle{groupstyle}=[fill,blue,opacity=.1]
\tikzstyle{gridstyle}=[draw,very thick]

\newcommand{\rectThreeD}[3]		% \x, \y, style
{
\path[#3] (-4+#1,-.5+#2) -- (#1,-.5+#2) -- (4+#1,.5+#2) -- (#1,.5+#2) -- cycle ;
}
\newcommand{\latticeThreeD}[3]	% \x, \y, style
{
\path[#3]
	(-2+#1,-.5+#2) -- (2+#1,.5+#2)
	(-2+#1,#2) -- (2+#1,#2) ;
}

% top nodes
\rectThreeD{0}{1.2}{groupstyle}
% workspace
\rectThreeD{0}{0}{gridstyle}
\latticeThreeD{0}{0}{gridstyle}
% bottom nodes
\rectThreeD{0}{-1.1}{groupstyle,red}

% bound the drawing area, *a posteriori*
\clip (-4.2,-2) rectangle (4.75,2.5) ;	
%\node at (0,0) { \includegraphics[scale=.5]{part-and-assign} } ;
% annotate:
% add workspace label
\node at (4.2,.5) {$\workspace$} ;
% add workcell labels
\node at (1,.75) {$\workcell^1$} ;
\node at (3,.75) {$\workcell^2$} ;
\node at (-1,-.65) {$\workcell^3$} ;
\node at (-3,-.65) {$\workcell^4$} ;
% add group labels
\node at (4.2,1.7) {$\tverts$} ;
\node at (4.2,-.55) {$\bverts$} ;
% add top nodes
\foreach \x/\y/\name in
{ -.25/1.5/$u^1$, 1.5/1.4/$u^2$, .75/1.1/$u^3$, -1.3/1.1/$u^4$  }
\draw (\x,\y)
	[fill] circle (2pt)
	node [above] {\name}
	-- ++(0,-1.25) node {$\times$} ;
% add bottom nodes
\foreach \x/\y/\name in		
	% \name isn't working the way you think it is
{ 0/1.5/$v^1$, 2.1/1.5/$v^2$, .25/1/$v^3$, -2.1/.9/$v^4$ }
\draw (\x,\y-2.35)
	[fill] circle (2pt)
	node [below] {\name}
	-- ++(0,1.25) node {$\times$} ;
\end{tikzpicture}
\caption{
Bipartite assignment of symbolic vertices ($\tverts$ and $\bverts$) to the cells in $\cellset$.
}
\label{fig:partition and assignment}
\end{figure}
%
%none is assigned to multiple cells.
%Each vertex will be assigned to only one cell;
Let bipartite matchings $\tmatch$ (between $\tverts$ and $\cellset$) and $\bmatch$ (between $\bverts$ and $\cellset$) 
denote the respective assignments.
(For example, if $u$ is the vertex in $\tverts$ assigned to $\workcell \in \cellset$,
then $(u,\workcell) \in \tmatch$.)
We define the supplies as
\begin{align}
% %\begin{aligned} 
\approxsupply( u ) := \measone( \workcell )
     && ( (u,\workcell) \in \tmatch ),
\label{eq:supplies supply}				\\
 \approxsupply( v ) := -\meastwo(\workcell)
     && ( (v,\workcell) \in \bmatch ).
\label{eq:supplies demand}
% %\end{aligned}
\end{align}
%
% Consider a set of vertices $U := \{ u^1, u^2, \ldots, u^{N(\epsilon)} \}$, called the ``top layer'',
% and another set of vertices $V := \{ v^1, v^2, \ldots, v^{N(\epsilon)} \}$, called the ``bottom layer'';
% each layer contains one vertex for every cell in $\cellset$.
% Let $s$ and $t$ be two additional vertices in the graph,
% playing the role of \emph{source} and \emph{sink}, respectively, in a flow network.
%
Let $\approxedges$ form the complete bipartite graph between $\tverts$ and $\bverts$,
%Let $\edgeset$ be the edges of the complete bipartite graph between $\tverts$ and $\bverts$,
i.e.,
$\approxedges :=$
$\tverts \times \bverts$.
%$\{ (u,v) :$ $\ u \in \tverts$, $v \in \bverts \}$.
%i.e., the set of all edges between layers $\tverts$ and $\bverts$.
%\begin{figure}[h!]
%\includegraphics[width=.5\linewidth]{semi-general-approx}
%\caption{}
%\label{fig:semigeneral_approx}
%\end{figure}
%
%
Let $\lowerwts =: \{ \underline\edgewt_{(u,v)} \}$
%_{(u,v) \in \approxedges}$
be set the set of edge weights on $\approxedges$ satisfying
%Now let $\lowerwt_\epsilon$ be the set of edge weights such that
\begin{equation} \label{eq:defn lower weights}
  \underline\edgewt_{( u, v )} =
  	\min_{ \pointvar \in \workcell, \pointvar' \in \workcell' }
    \distfunc{\pointvar,\pointvar'}
  \qquad
  \text{for $(u,\workcell) \in \tmatch$, $(v,\workcell') \in \bmatch$},
\end{equation}
%and $\lowerwt(e) = 0$ for all remaining edges.
and let $\upperwts =: \{ \overline\edgewt_{(u,v)} \}$ be the set of edge weights satisfying
\begin{equation}
  \overline\edgewt_{(u,v)} =
  \max_{ \pointvar \in \workcell^i, \pointvar' \in \workcell^j }
    \distfunc{\pointvar,\pointvar'}
  \qquad
  \text{for $(u,\workcell) \in \tmatch$, $(v,\workcell') \in \bmatch$}.
\end{equation}

\subsection{Approximation Bounds}
\label{sec:gp approx bounds}

The network $\approxnet$ captures a hypothetical scenario
(by aggregation of points into a finite number of cells)
where the cost of transportation (distance) from one cell to another is a single constant 
regardless of the particular endpoints.
%targets between  point $\pointvar \in \workcell$ to any point $\pointvar' \in \workcell'$
%is a single constant that depends only on $(\workcell,\workcell')$.
The costs $\lowercosts$ are ``optimistic'', assigning cost to a pair of cells
%the pair $(\workcell,\workcell')$
equal to the minimum distance between endpoints in either cell;
%from a point in $\workcell$ to a point in $\workcell'$;
the costs $\uppercosts$, meanwhile, are ``pessimistic'', assigning cost equal to the maximum such distance.
As the fine-ness of the tesselation increases, in the sense that $\epsilon \to 0^+$ in~\eqref{eq:partition property},
the difference between the optimistic and pessimistic costs will vanish.
%the costs associated with the $\min$ and the $\max$ converge.
%
Such intuition supports the claims of Propositions~\ref{prop:approx_contains_wass} and~\ref{prop:approx_squeezes}, below;
the formal proofs, however, are provided in Appendix~\ref{sec:naive approx correctness}.
\begin{prop}
\label{prop:approx_contains_wass}
For any distributions $\measone$ and $\meastwo$
satisfying Assumptions~\ref{assump:nicemeasures} and \ref{assump:road_posmutex},
any $\epsilon > 0$, and
any partition $\cellset_\epsilon$ of workspace $\workspace \subset \env$
satisfying~\eqref{eq:partition property},
let $\approxnet$ denote the approximation network of Section~\ref{sec:gp approx construction}
having weights $\lowerwts$ and $\upperwts$.
% denote the costs produced for 
Let
\begin{align}
\underapprox &\doteq
	\min_{ \flowmap \in \approxnet } \flowcost(\flowmap ; \lowerwts )	
\label{eq:defn underapprox}		\\
\overapprox &\doteq
	\min_{ \flowmap \in \approxnet } \flowcost(\flowmap ; \upperwts ).
\label{eq:defn overapprox}
\end{align}
Then
$\underapprox \leq \Wass(\measone,\meastwo) \leq \overapprox$.
%for all $\epsilon > 0$.
%$\underline\Wass(\epsilon) \leq \Wass \leq \overline\Wass(\epsilon)$
\end{prop}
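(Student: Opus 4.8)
The plan is to prove the two inequalities separately, each time by exhibiting a feasible object on one side of the network problem whose cost certifies the bound, relying only on the definitions and on the elementary estimate $\min_{\workcell\times\workcell'}\distance \le \distance \le \max_{\workcell\times\workcell'}\distance$ applied cell pair by cell pair. Throughout I take for granted, as is implicit in the construction, that the workspace $\workspace$ contains the supports of $\measone$ and $\meastwo$, so that the partition $\cellset$ accounts for all of the mass of both distributions and of any coupling of them; I would note at the outset that the resolution property~\eqref{eq:partition property} of $\cellset_\epsilon$ plays \emph{no} role here (it enters only in Proposition~\ref{prop:approx_squeezes}), so the argument goes through for an arbitrary finite partition of $\workspace$. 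I would also record that the minima defining $\underapprox$ and $\overapprox$ in~\eqref{eq:defn underapprox} and~\eqref{eq:defn overapprox} are attained: the ``product'' flow $\flowmap((u,v)):=\measone(\workcell)\,\meastwo(\workcell')$ (for $(u,\workcell)\in\tmatch$, $(v,\workcell')\in\bmatch$) is admissible since $\measone,\meastwo$ are probability measures, so the feasible set is a nonempty compact polytope on which the linear objectives attain their minima.

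For the lower bound $\underapprox \le \Wass(\measone,\meastwo)$, I would fix $\delta>0$, choose a coupling $\gamma\in\Gamma(\measone,\meastwo)$ with $\int_\env\distance\,d\gamma \le \Wass(\measone,\meastwo)+\delta$, and push it through the cell map by setting $\flowmap((u,v)):=\gamma(\workcell\times\workcell')$ for $(u,\workcell)\in\tmatch$, $(v,\workcell')\in\bmatch$. Because $\{\workcell\}$ partitions $\workspace$ and $\gamma$ is concentrated on $\workspace\times\workspace$, summing over $v$ gives $\sum_v\flowmap((u,v))=\gamma(\workcell\times\workspace)=\measone(\workcell)=\approxsupply(u)$ and summing over $u$ gives $\meastwo(\workcell')=-\approxsupply(v)$, so $\flowmap$ satisfies the conservation constraints~\eqref{eq:conservation}, i.e.\ $\flowmap\in\approxnet$. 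Using the definition~\eqref{eq:defn lower weights} of $\lowerwts$ and bounding $\min_{\workcell\times\workcell'}\distance$ by $\distance$ pointwise on $\workcell\times\workcell'$,
\begin{align*}
\underapprox \;\le\; \flowcost(\flowmap;\lowerwts)
  &= \sum_{\workcell,\workcell'}\Bigl(\min_{\pointvar\in\workcell,\,\pointvar'\in\workcell'}\distance(\pointvar,\pointvar')\Bigr)\,\gamma(\workcell\times\workcell')\\
  &\le \sum_{\workcell,\workcell'}\int_{\workcell\times\workcell'}\distance(\pointvar,\pointvar')\,d\gamma(\pointvar,\pointvar')
   \;=\; \int_\env\distance\,d\gamma \;\le\; \Wass(\measone,\meastwo)+\delta,
\end{align*}
and letting $\delta\to 0^+$ completes this direction.

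For the upper bound $\Wass(\measone,\meastwo)\le\overapprox$, I would run the construction in reverse: take a minimum-cost admissible flow $\flowmap^{*}$ for $(\approxnet,\upperwts)$ and lift it to a coupling. For each cell $\workcell$ with $\measone(\workcell)>0$ let $\newmeasone_{\workcell}$ be the normalized restriction $A\mapsto\measone(A\cap\workcell)/\measone(\workcell)$, and define $\newmeastwo_{\workcell'}$ analogously when $\meastwo(\workcell')>0$; then set
\[
\gamma \;:=\; \sum_{(u,\workcell)\in\tmatch,\;(v,\workcell')\in\bmatch}\flowmap^{*}((u,v))\,\bigl(\newmeasone_{\workcell}\otimes\newmeastwo_{\workcell'}\bigr).
\]
Conservation at a supply node forces $\flowmap^{*}((u,v))=0$ whenever $\measone(\workcell)=0$, and symmetrically whenever $\meastwo(\workcell')=0$, so every surviving term is well defined. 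Summing over $v$ and using $\sum_{\workcell'}\flowmap^{*}((u,v))=\approxsupply(u)=\measone(\workcell)$ shows the first marginal of $\gamma$ is $\sum_\workcell\measone(\workcell)\,\newmeasone_{\workcell}=\measone$, and symmetrically the second marginal is $\meastwo$, so $\gamma\in\Gamma(\measone,\meastwo)$. Bounding $\distance$ on $\workcell\times\workcell'$ by $\overline\edgewt_{(u,v)}=\max_{\pointvar\in\workcell,\,\pointvar'\in\workcell'}\distance(\pointvar,\pointvar')$ then gives
\[
\Wass(\measone,\meastwo) \;\le\; \int_\env\distance\,d\gamma
  \;=\; \sum_{\workcell,\workcell'}\flowmap^{*}((u,v))\int_{\workcell\times\workcell'}\distance\,d(\newmeasone_{\workcell}\otimes\newmeastwo_{\workcell'})
  \;\le\; \sum_{\workcell,\workcell'}\flowmap^{*}((u,v))\,\overline\edgewt_{(u,v)}
  \;=\; \flowcost(\flowmap^{*};\upperwts) = \overapprox.
\]

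The conservation bookkeeping and the measurability of the finitely many restricted measures $\newmeasone_{\workcell},\newmeastwo_{\workcell'}$ are routine. The one point I expect to require genuine care --- and which I would state and reconcile explicitly with how $\workspace$ and $\cellset$ are used elsewhere in the paper --- is the standing hypothesis $\workspace\supseteq\support(\measone)\cup\support(\meastwo)$: it is what the lower-bound step uses to assert $\gamma(\workcell\times\workspace)=\measone(\workcell)$, whereas in the upper-bound step it is automatic, since the lifted $\gamma$ is concentrated on $\workspace\times\workspace$. With that in place, everything else is a direct application of the definitions.
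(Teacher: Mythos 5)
Your proposal is correct and follows essentially the same route as the paper's own proof: the lower bound by pushing a near-optimal coupling forward to an admissible flow (the paper's Lemma~\ref{lemma:couplingflow}) and comparing against the cell-wise minimum distance, and the upper bound by lifting a minimum-cost flow to a coupling via products of normalized restrictions (the paper's Proposition~\ref{remark:corresp_exists}) and comparing against the cell-wise maximum. You merely inline the two auxiliary constructions rather than stating them as separate lemmas, and your explicit handling of cells with zero mass and of the standing hypothesis on $\workspace$ is a welcome tightening of details the paper leaves implicit.
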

\begin{prop}
\label{prop:approx_squeezes}
Under the same condition as Proposition~\ref{prop:approx_contains_wass},
$\overapprox - \underapprox \leq \epsilon |\measvar|$,
where $|\measvar|$ denotes the constant total measure of either $\measone$ or $\meastwo$.
\end{prop}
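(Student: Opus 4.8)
The plan is to leverage the fact that the two optimization problems \eqref{eq:defn underapprox} and \eqref{eq:defn overapprox} live on the \emph{same} flow network $\approxnet$, differing only in the edge weights $\lowerwts$ versus $\upperwts$; in particular they have an \emph{identical} admissible set. The one structural input I would use is that the partition property \eqref{eq:partition property} translates into a uniform per-edge gap between the two weightings: for every edge $(u,v)\in\approxedges$ with $(u,\workcell)\in\tmatch$ and $(v,\workcell')\in\bmatch$,
\[
0 \le \overline\edgewt_{(u,v)} - \underline\edgewt_{(u,v)} = \Big( \max_{\pointvar\in\workcell,\pointvar'\in\workcell'} \distfunc{\pointvar,\pointvar'} \Big) - \Big( \min_{\pointvar\in\workcell,\pointvar'\in\workcell'} \distfunc{\pointvar,\pointvar'} \Big) \le \epsilon .
\]

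Next I would let $\newflowmap \in \approxnet$ attain the minimum in \eqref{eq:defn underapprox}, so that $\flowcost(\newflowmap;\lowerwts) = \underapprox$ (such a minimizer exists because, the total measures of $\measone$ and $\meastwo$ being equal, the admissible set is a nonempty compact polytope, as already used in the proof of Proposition~\ref{prop:approx_contains_wass}). Since $\newflowmap$ is also admissible for \eqref{eq:defn overapprox}, applying the per-edge bound above term by term gives
\[
\overapprox \le \flowcost(\newflowmap;\upperwts) = \sum_{(u,v)\in\approxedges} \overline\edgewt_{(u,v)}\,\newflowmap(u,v) \le \sum_{(u,v)\in\approxedges} \big( \underline\edgewt_{(u,v)} + \epsilon \big)\,\newflowmap(u,v) = \underapprox + \epsilon \sum_{(u,v)\in\approxedges} \newflowmap(u,v) .
\]

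Finally I would evaluate the total flow $\sum_{(u,v)\in\approxedges}\newflowmap(u,v)$. Because $\approxedges = \tverts\times\bverts$ is bipartite and every edge is directed \emph{out of} $\tverts$, no edge enters any $u\in\tverts$, so the conservation constraint \eqref{eq:conservation} at such a $u$ reads $\sum_{v\in\bverts}\newflowmap(u,v) = \approxsupply(u) = \measone(\workcell)$ for $(u,\workcell)\in\tmatch$; summing over $\tverts$ yields $\sum_{(u,v)\in\approxedges}\newflowmap(u,v) = \sum_{\workcell\in\cellset}\measone(\workcell) = |\measvar|$. Substituting into the previous display gives $\overapprox - \underapprox \le \epsilon|\measvar|$, as claimed. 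The argument is essentially routine; the only points needing any care are that the two problems share a common admissible set (so the lower-optimal flow may be scored against the upper weights) and the total-flow-equals-total-mass identity, which is exactly where the bipartite orientation of $\approxedges$ is used.
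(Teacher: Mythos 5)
Your proposal is correct and follows essentially the same route as the paper's proof: bound the per-edge weight gap by $\epsilon$ via the partition property, evaluate the lower-optimal flow against the upper weights, and use that the total flow equals $|\measvar|$. The only difference is that you spell out the total-flow identity via the bipartite conservation constraints, a step the paper leaves to the reader.
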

Together, Propositions~\ref{prop:approx_contains_wass} and~\ref{prop:approx_squeezes} prove that
$\underapprox \to \Wass(\measone,\meastwo)^-$ and
$\overapprox \to \Wass(\measone,\meastwo)^+$
as $\epsilon \to 0^+$,
i.e., both converge to $\Wass(\measone,\meastwo)$.
%
%The formal proofs of the propositions are provided in Appendix~\ref{}.

\section{Approximating the EMD on Road Networks}
\label{sec:roadnet approx}

\subsection{The General-Purpose Scheme}

%To apply the semi-general approximation scheme on road networks, all we need to do is
%produce an appropriate tessellation scheme for road networks.
%Fortunately, 
Road networks are sufficiently like Euclidean $\reals^1$
%(only \emph{slightly} more complex, topologically)
that a small modification to the grid-based tessellation scheme of Section~\ref{sec:gp approx construction}
obtains the same convergence in the approximation by $\approxnet$
as the grid-based scheme does for $\reals^d$:
%can be used with the general-purpose EMD approximation scheme:
%for Euclidean environments ($\reals^d$) can be modified slightly to obtain a useful one for road networks:
%
For each $\roadvar \in \roadset$,
let $N_\roadvar := \lceil \roadlen_\roadvar / \epsilon \rceil$
and let $\epsilon_\roadvar := \roadlen_\roadvar / \numcell_\roadvar$.
% for all $\roadvar \in \roadset$.
%For each $\roadvar \in \roadset$,
%For each road $\roadvar$, 
Then one can partition each road $\roadvar \in \roadset$ into $N_\roadvar$ segments of length $\epsilon_\roadvar$.
We will refer to such partition as the \emph{$\epsilon$-tesselation} of $\roadnet$.
%Let
%$\workcell_\roadvar^1, \workcell_\roadvar^2, \ldots, \workcell_\roadvar^{ \numcell_\roadvar }$
%denote the sequence of $\numcell_\roadvar$ uniform line segments of length $\epsilon_\roadvar$,
%on the road $\roadvar$, ordered from tail to head.
%%along the forward direction.
%%into  ,
%%for each road $\roadvar \in \roadset$,
%Let $\cellset_\roadvar := \bigcup_{k=1}^{ \numcell_\roadvar } \workcell_\roadvar^k$.
%The whole partition $\cellset$ is equal to $\bigcup_{ \roadvar \in \roadset } \cellset_\roadvar$.
%%
The interval lengths $\{ \epsilon_\roadvar \}_{\roadvar\in\roadset}$ are all smaller than $\epsilon$,
so the resulting partition satisfies~\eqref{eq:partition property} and
Propositions~\ref{prop:approx_contains_wass} and~\ref{prop:approx_squeezes} hold.

% (the order of $\cellset$ may be arbitrary).
%note that $N(\epsilon) = \sum_{\roadvar\in\roadset} N(\roadvar;\epsilon)$.
%
%\begin{figure}[h!]
%\includegraphics[width=.5\linewidth]{semi-general-r1}
%\caption{Wasserstein Network Construction Example}
%\label{fig:network_construction_example}
%\end{figure}

%Recall that our aim in this paper is 
While our pain-staking attention to network flow-based approximation schemes may be mis-leadingly algorithmic,
our interest in them is
\emph{not} to approximate $\Wass$, but
%obtain a computable \emph{exact} expression.
%solely to 
to discover a sequence
%which converges to
%of min-cost flow problems whose values 
$\Wass_k$ which converges to $\Wass$ and has an analytical limit.
% as $k \to +\infty$,
%with the ultimate hope of finding an analyzable expression \emph{of that limit}.
%compute $\Wass = \lim_{\epsilon \to 0^+} \underapprox_\epsilon$.
%Let $\flownet_\epsilon$ denote the flow network constructed on $\cellset$ in such a manner for some $\epsilon > 0$.
%
Unfortunately, the network structure generated by the general-purpose scheme is too general
to reveal any 
%in Section~\ref{} do not belie any 
underlying analytical form of $\Wass$.
%The approximation scheme presented in Section~\ref{} is simply the easiest approximation scheme
%for which to prove convergence to the value $\Wass$;
%it is, in actuality, motivated by the very definition of the integral formulation of $\Wass$~\eqref{eq:EMDdef}.
Fortunately, that scheme is \emph{not} the only network flow-based approximation scheme that we may use.

\subsection{The Path-based Scheme}
\label{sec:roadnet approx construction}

\newcommand{\pathlen}[1]{\left| #1 \right|}

\newcommand{\pathtag}{{\rm PATH}}

\newcommand{\pathnet}{\flownet^\pathtag}
\newcommand{\pathnodes}{\nodeset^\pathtag}
\newcommand{\pathedges}{\arcset^\pathtag}
\newcommand{\pathcosts}{\edgecosts^\pathtag}
\newcommand{\pathwts}{\edgewts^\pathtag}
\newcommand{\pathsupply}{\supplyvar^\pathtag}

\newcommand{\roadnode}[1][k]{ {\nodevar_\roadvar^{#1}} }
\newcommand{\roadcell}[1][k]{ {\workcell_\roadvar^{#1}} }

\newcommand{\roadcost}[1][\roadvar]{ \flowcost_{#1} }

In this section, we present another approximation scheme which leverages the structure of the road network $\roadnet$.
We will call our alternative approximation scheme the ``path-based'' scheme.
An important feature of the scheme is that it uses
the same $\epsilon$-tesselation of $\roadnet$, and
many of the same network vertices (i.e., $\approxnodes$), as the general-purpose one.
%%The networks produced by the semi-general scheme have the property that each pair of cells in the partition,
%%e.g., $(\workcell,\workcell') \in \cellset^2$,
%%is represented by a single edge $(u,v)$---%
%%where $(u,\workcell) \in \tmatch$, $(u,\workcell') \in \bmatch$---%
%%whose weights tightly bound the shortest path distance between those cells in the workspace.
The scheme differs in that we seek an alternative flow network topology.
Our goal is to obtain additional insight into computing the EMD.
Naturally, the new scheme must preserve the cost of the min-cost flow.
(Because either of the squeezing bounds converges to $\Wass$,
we focus only on the lower bound produced by $\lowercosts$.)
%part of the scheme which produces the lower bound.
%Of course in doing so we must be extremely careful that our new procedure preserves .
%The networks generated by the semi-general scheme with another set of networks:
%We seek to bound the shortest path distance between two cells
%\emph{not} by the weights of a single edge between their associated network vertices,
%but by the lengths of \emph{paths} (in particular, \emph{shortest} paths) between those vertices on the network.

%Unfortunately, the construction procedure in Section~\ref{} is overly general, and cannot be used,
%by itself, to obtain an expression of $\Wass$, e.g., on road networks.
%In this section, we provide another, \emph{road network-specific} network construction,
%which equally captures the EMD, but is analyzable.

The ability to produce a meaningful alternative topology is based on two important observations about network flows:
First, while network flows are most commonly represented as mappings from individual edges to flow volume,
they can be represented equally well by mapping from \emph{paths} to flow volume.
%(see Theorem~\ref{thm:path and cycle flows}, below).
%
For example, the network flow in Figure~\ref{fig:roadnet example solution flow}
can be interpreted as a so-called ``path and cycle flow'',
with
$1/5$ unit flow on the path (E-2-N), 
$2/15$ flow on the path (E-2-1-W),
$1/15$ flow on the path (E-3-4-W), and
$3/5$ flow on the path (S-4-W).
The second observation is that in the absense of edge ``capacities'' (which do not arise in this paper),
minimum-cost network flows are supported \emph{entirely} on shortest paths.

%\emph{Path and cycle flows:}
\begin{definition}[Path and cycle flows]
Let $\pathset$ denote the set of simple paths on a (multi-)digraph $G=(\nodeset,\edgeset)$,
and let $\cycleset$ denote the set of cycles.
A \emph{path and cycle flow} is a mapping $\flowmap : \pathset \cup \cycleset \to \reals_{\geq 0}$.
(We will call flows of the former type ($\arcset \to \reals_{\geq 0}$) \emph{arc flows}, or simply flows.)
\end{definition}
Path and cycle flows determine arc flows in a natural way,
such that the flow on an edge is equal to the sum of all flows on paths and cycles that use the edge.
%
%In other words, 
Defining the delta function $\delta_\arcvar(\pathvar)$ for each $\arcvar\in\arcset$---%
equal to $1$ if $\arcvar$ is included in the path or cycle $\pathvar \in \pathset \cup \cycleset$, and $0$ otherwise---%
then the arc flow $\newflowmap$ described by a path and cycle flow $f$ is determined by
\begin{equation} \label{eq:arc_flow_representation}
  \newflowmap(\arcvar)
  =
  \sum_{\pathvar \in \pathset \cup \cycleset} \delta_\arcvar(\pathvar) \flowmap(\pathvar)
  \qquad \text{for all $\arcvar\in\arcset$}.
    %+ \sum_{\cyclevar \in \pathset} \delta_\arcvar(\cyclevar) \flowmap(\cyclevar).
\end{equation}
A path and cycle flow is admissible if its arc flow is admissible.
Letting $\pathlen{ \pathvar }_\edgewts$
denote the total weight of a path $\pathvar$
on a weighted network $(\flownet,\edgewts)$, i.e.,
$\pathlen{\pathvar}_\edgewts \doteq \sum_{\arcvar \in \arcset} \delta_\arcvar(\pathvar) \edgewts(\arcvar)$,
the cost of a path-and-cycle flow can be written
$\flowcost( \flowmap ; \edgewts)
	\doteq \sum_{ \pathvar \in \pathset \cup \cycleset } \flowmap(\pathvar) \pathlen{ \pathvar }_\edgewts$.
A path-and-cycle flow has the same total weight as its arc flow.

\begin{lemma} \label{lemma:equivalent networks}
Let $(\flownet,\edgewts)$ and $(\newflownet,\newedgewts)$ be two weighted flow networks satisfying the following properties:
\begin{enumerate}
\item
\label{property:same_supplies}
Every supply vertex has the same supply in $\flownet$ and $\newflownet$;
\item 
\label{property:same_demands}
Every demand vertex has the same demand in $\flownet$ and $\newflownet$;
\item \label{property:shortest_paths_equal}
The total weight of the weighted shortest path,
from any supply vertex to any demand vertex,
is the same in both networks.
\end{enumerate}
Let $\flowcost^*$ and $\newflowcost^*$
denote the costs of the minimum-cost flows on $\flownet$ and $\newflownet$,
respectively (and with respective weights).
% $\edgewt$ and $\edgewt'$, respectively).
Then $\flowcost^*$ and $\newflowcost^*$ are equal.
\end{lemma}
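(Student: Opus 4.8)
The plan is to prove a stronger, self-contained fact: on \emph{any} weighted flow network with non-negative edge weights, the value of the minimum-cost admissible flow equals the optimum of a finite transportation problem that depends only on the supplies of the supply vertices, the demands of the demand vertices, and the shortest-path weights between them. Once this reduction is in place, hypotheses~\ref{property:same_supplies}--\ref{property:shortest_paths_equal} guarantee that $(\flownet,\edgewts)$ and $(\newflownet,\newedgewts)$ give rise to literally the same transportation problem, and equality of $\flowcost^*$ and $\newflowcost^*$ is immediate.

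Concretely, I would fix $(\flownet,\edgewts)$, write $S = \{u : \supplyvar(u) > 0\}$ and $D = \{u : \supplyvar(u) < 0\}$, and let $\operatorname{dist}_{\edgewts}(s,d)$ denote the minimum of $\pathlen{\pathvar}_{\edgewts}$ over all directed paths $\pathvar$ from $s$ to $d$ (well defined since $\edgewts \geq 0$ precludes negative-weight cycles). I would then show
\[
  \flowcost^* \;=\; \min_{x}\ \sum_{s \in S,\, d \in D} x_{s d}\, \operatorname{dist}_{\edgewts}(s,d),
\]
the minimum ranging over all $x \geq 0$ with $\sum_{d} x_{s d} = \supplyvar(s)$ for every $s \in S$ and $\sum_{s} x_{s d} = -\supplyvar(d)$ for every $d \in D$. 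For ``$\leq$'', take an optimal such $x$ and, for each pair $(s,d)$, superpose a flow of $x_{s d}$ units along a shortest $s$--$d$ path; by~\eqref{eq:arc_flow_representation} these assemble into an admissible arc flow whose cost is exactly $\sum_{s,d} x_{s d}\,\operatorname{dist}_{\edgewts}(s,d)$. For ``$\geq$'', apply the classical flow-decomposition theorem to a minimum-cost admissible flow $\flowmap^*$: it is the arc flow of a path-and-cycle flow in which every path runs from a vertex of $S$ to a vertex of $D$ (transshipment vertices carry no endpoint). Since a path-and-cycle flow has the same total weight as its arc flow, and since every cycle component has non-negative weight, deleting the cyclic part preserves admissibility without increasing cost; each surviving $s$--$d$ path contributes at least $\operatorname{dist}_{\edgewts}(s,d)$ per unit of flow, and the induced pair totals $x_{s d}$ satisfy the transportation constraints, giving the inequality.

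Granting this characterization, the lemma follows at once: hypotheses~\ref{property:same_supplies} and~\ref{property:same_demands} say that $S$, $D$, and the values of $\supplyvar$ on $S \cup D$ are the same for the two networks, and hypothesis~\ref{property:shortest_paths_equal} says $\operatorname{dist}_{\edgewts}(s,d) = \operatorname{dist}_{\newedgewts}(s,d)$ for all $s \in S$, $d \in D$. Hence the two transportation minimizations coincide verbatim, so $\flowcost^* = \newflowcost^*$.

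I expect the main obstacle to be the careful invocation of the flow-decomposition theorem: specifically, establishing that every path in a decomposition of an \emph{admissible} flow joins a positive-supply vertex to a negative-supply vertex (so that only the quantities $\operatorname{dist}_{\edgewts}(s,d)$ enter, and routing edges/transshipment vertices contribute nothing beyond shortest-path weights), and that discarding the cyclic components leaves an admissible flow. Both facts are standard, and the hypothesis $\edgewts \geq 0$ is exactly what licenses the cycle-deletion step and guarantees that the shortest paths $\operatorname{dist}_{\edgewts}(s,d)$ exist; the rest of the argument is routine bookkeeping.
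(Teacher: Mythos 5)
Your argument is correct, and at its core it is the same argument as the paper's: both proofs rest on the flow-decomposition theorem (the paper's Theorem~3.5 of Ahuja et al.) to write a minimum-cost flow as a sum of supply-to-demand path flows, and on the observation that such a path flow can be rerouted onto shortest paths of identical total weight in the other network while preserving admissibility via the matching supplies and demands. The difference is one of packaging, and it buys you something. The paper proves the single inequality $\newflowcost^* \leq \flowcost^*$ directly and then appeals to symmetry; to do so it needs the auxiliary fact that every positive-flow path of a \emph{minimum-cost} flow is itself a shortest path (dispatched with ``a simple substitution argument''), and it silently discards the cycle components of the decomposition. You instead factor both optima through a common intermediate object --- the finite transportation problem on $S \times D$ with costs $\operatorname{dist}(s,d)$ --- which makes the final equality a one-line consequence of hypotheses 1--3, replaces the ``optimal paths are shortest paths'' claim by the weaker and more robust pair of bounds (``any decomposition path costs at least $\operatorname{dist}(s,d)$'' for one direction, ``route along shortest paths'' for the other), and explicitly justifies deleting the cyclic part using non-negativity of the weights. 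The one step you should spell out if you write this up in full is that the pairwise totals $x_{sd}$ extracted from the decomposition really satisfy the transportation constraints $\sum_d x_{sd} = \supplyvar(s)$ and $\sum_s x_{sd} = -\supplyvar(d)$; this follows because cycles and pass-through paths contribute zero net flow at every vertex, so the paths \emph{originating} at $s$ must account for the entire supply $\supplyvar(s)$, and symmetrically at demand vertices. With that detail included, your proof is complete and, if anything, slightly more careful than the one in the paper.
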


By Lemma~\ref{lemma:equivalent networks}, it is possible to substitute an alternative topology
over the network vertices $\approxnodes$,
\emph{without} changing the value of the minimum cost flow,
so long as every \emph{shortest path} from a supply vertex $u$ to a demand vertex $v$ has length
equal to the weight of edge $(u,v)$ in $\approxnet$.
Our proof of the lemma requires elements of the next Theorem, reproduced from~\cite{ahuja1993network}:
\begin{theorem}[Theorem~3.5 of~\cite{ahuja1993network} (annotated)]
\label{thm:path and cycle flows}
Every path and cycle flow has a unique representation as nonnegative arc flows [i.e., \eqref{eq:arc_flow_representation}].
%[$\newflowmap : \arcset \to \reals_{\geq 0}$]
Conversely, every nonnegative arc flow can be represented as a path and cycle flow
(though not necessarily uniquely) with the following two properties:
\begin{enumerate}
\item \label{itm:supply_to_demand}
Every directed path with positive flow connects a [supply] node to [a demand] node.
\item (not needed for our discussion, see~\cite{ahuja1993network} for full text).
\end{enumerate}
\end{theorem}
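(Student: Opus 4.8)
The plan is to treat the two directions of the theorem separately, since the forward direction is essentially definitional while all of the real work lies in the converse.

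For the forward direction, suppose $\flowmap : \pathset \cup \cycleset \to \reals_{\geq 0}$ is a given path-and-cycle flow. Equation~\eqref{eq:arc_flow_representation} prescribes \emph{exactly one} arc flow, obtained by summing, for each arc $\arcvar$, the contributions $\delta_\arcvar(\pathvar)\,\flowmap(\pathvar)$ of every path and cycle passing through $\arcvar$. Uniqueness is therefore immediate: the right-hand side is an explicit function of $\flowmap$, so no two distinct arc flows can arise from the same path-and-cycle flow. Nonnegativity follows because each summand is the product of an indicator $\delta_\arcvar(\pathvar) \in \{0,1\}$ and a value $\flowmap(\pathvar) \geq 0$. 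This disposes of the first sentence of the theorem.

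For the converse, I would argue constructively by a greedy ``peel-off'' procedure that repeatedly extracts a single path or cycle from a given nonnegative arc flow that is admissible for supplies $\supplyvar$. The procedure maintains a residual arc flow (initialized to the given one) and residual supplies (initialized to $\supplyvar$), keeping both conservation-consistent throughout. While some supply node remains, i.e.\ a vertex with positive residual supply, I start a directed walk there along arcs carrying positive residual flow. The crucial observation is that the walk can always be extended: at any supply or transshipment node entered with positive incoming flow, the conservation constraint~\eqref{eq:conservation} forces strictly positive outgoing flow, so a successor arc exists. Since $\nodeset$ is finite, within $|\nodeset|$ steps the walk must either reach a demand node---yielding a directed supply-to-demand path $\pathvar$---or revisit a vertex---exposing a directed cycle $\cyclevar$. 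In the path case, writing $s$ for its source supply node and $t$ for its sink demand node, I set $\delta \doteq \min\{\supplyvar(s),\, -\supplyvar(t),\, \min_{\arcvar \in \pathvar} \flowmap(\arcvar)\}$, assign weight $\delta$ to $\pathvar$ in the path-and-cycle flow under construction, and then decrement the residual flow on each arc of $\pathvar$ by $\delta$ while reducing the source supply and the sink demand by $\delta$. In the cycle case, I set $\delta \doteq \min_{\arcvar \in \cyclevar} \flowmap(\arcvar)$, assign weight $\delta$ to $\cyclevar$, and decrement the residual flow along $\cyclevar$. Each update preserves nonnegativity and conservation of the residual. Once all supplies are exhausted, the residual satisfies conservation with zero supplies everywhere, so every remaining positive arc lies on a directed cycle, and the same walk-and-peel step extracts cycles until the residual is identically zero.

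The step I expect to require the most care is \emph{termination}, together with the accounting behind the suppressed second property of the converse. The governing invariant is that every extraction drives at least one quantity to zero: a path extraction zeroes either a bottleneck arc, the source's residual supply, or the sink's residual demand; a cycle extraction zeroes a bottleneck arc. Because there are finitely many arcs in $\arcset$ and finitely many supply and demand vertices, only finitely many extractions can occur, so the procedure halts and returns a valid path-and-cycle flow whose induced arc flow (via~\eqref{eq:arc_flow_representation}) is the original one. Property~\ref{itm:supply_to_demand} is then automatic, since by construction every emitted path begins at a supply node and ends at a demand node and no path component is produced during the final cycle-cleanup phase. A careful tally of how many extractions can zero an arc versus a node recovers the bounds on the number of paths and cycles comprising the omitted second property; as these are not needed for our application, I would only indicate them rather than prove them in full.
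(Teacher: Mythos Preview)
The paper does not prove this theorem at all; it is quoted verbatim (with annotation) from the textbook~\cite{ahuja1993network} and used as a black box in the proof of Lemma~\ref{lemma:equivalent networks}. Your proposal is the standard flow-decomposition argument from that reference---the greedy walk-and-peel procedure is exactly how Ahuja, Magnanti, and Orlin establish Theorem~3.5---so it is correct and faithful to the cited source, even though there is nothing in the present paper to compare it against.
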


\begin{proof}[Proof of Lemma~\ref{lemma:equivalent networks}]
It is sufficient to prove $\newflowcost^* \leq \flowcost^*$, since the two networks commute in the statement of the lemma.
Let $\flowmap^*$ be the path-and-cycle representation of the minimum-cost flow on $\flownet$.
By Property~1 of Theorem~\ref{thm:path and cycle flows}, every positive-flow path is from a supply node to a demand node.
Each positive-flow path is also a shortest path (this can be proved by a simple substitution argument).
We can construct a path-and-cycle flow $\newflowmap$ on $\newflownet$ by adding the weight of each positive-flow path in $\flowmap^*$
into $\newflowmap$ on the shortest directed path between the same endpoints.
Properties~\ref{property:same_supplies} and~\ref{property:same_demands} of Lemma~\ref{lemma:equivalent networks}
ensure that $\newflowmap \in \newflownet$ (it is admissible).
By Property~\ref{property:shortest_paths_equal}, the latter paths have the same weight as the former ones,
proving the total cost of $\newflowmap$ is the same as that of $\flowmap^*$.
$\newflowcost^*$, by definition, cannot be more.
\end{proof}

\emph{Instance Construction}:
\newcommand{\subgraph}{g}
\newcommand{\roaddevice}[1][\roadvar]{\subgraph_{#1}}
Our construction must satisfy Lemma~\ref{lemma:equivalent networks} with $\approxnet$.
%Properties
Note that Properties~\ref{property:same_supplies} and~\ref{property:same_demands} are quite easy to satisfy,
i.e., by letting $\pathsupply$ equal $\approxsupply$ on $\supplyset \cup \demandset$ and
%e.g., \eqref{eq:supplies supply} and~\eqref{eq:supplies demand}, and
zero anywhere else.
%(Our construction is on the same vertex set $\approxnodes$.)
%
In order to satisfy Property~\ref{property:shortest_paths_equal},
we seek to construct a network
where the shortest path
from $u \in \tverts$ ($(u,\workcell) \in \tmatch$)
to $v \in \bverts$ ($(v,\workcell') \in \bmatch$)
has total weight equal to that given by $\lowerwts$, or
%
%$\lowerwts$ defines the weight of the shortest ``path'' between $u$ and $v$
%(in $\approxnet$, it is a single edge) as
the minimum distance on $\roadnet$ from $\workcell$ to $\workcell'$, i.e.~\eqref{eq:defn lower weights}.
The crucial observation is that any path from $\workcell$ to $\workcell'$ can be decomposed into three parts:
(i) first, a path from $\workcell$ to an endpoint $\roadvar^\pm$
of the road $\roadvar \in \roadset$ for which $\workcell \subset \roadvar$;
(ii) second, a path from that endpoint $\roadvar^\pm$ to an endpoint $\tilde\roadvar^\pm$ of another road $\tilde\roadvar \in \roadset$, $\workcell' \subset \tilde\roadvar$;
(iii) finally, a path from the second endpoint $\tilde\roadvar^\pm$ to the cell $\workcell'$.

To obtain the network $\pathnet$ instance we start with
$\pathnodes := \roadverts$ (the vertices of $\roadnet$) and
$\pathedges := \emptyset$.
%The new network construction procedure begins by
%letting the road network vertices form an initial vertex set ($\roadverts =: \nodeset$),
%and letting the routing edges (Section~\ref{}) form an initial edge set $\routingedges =: \arcset$.
%($G := (\roadverts,\arcset)$.)
Then, for each \emph{non-transshipment} road $\roadvar \in \supplyset \cup \demandset$,
we insert into the graph $(\pathnodes,\pathedges)$ one of two possible ``\emph{road devices}''.
% into the graph $G = (\nodeset,\arcset)$.
If $\roadvar$ is a supply road, i.e., $\roadvar \in \supplyset$,
then we add a ``supply device'', as shown in Figure~\ref{fig:supply device};
\begin{figure}
\centering
\providecommand{\numcell}{N}		% i.e., expect override

\begin{tikzpicture}
% styles defs
\tikzstyle{roadstuff}=[black!25]
\tikzstyle{graphstuff}=[black]
\tikzstyle{graphedge}=[graphstuff,->,very thick]
\tikzstyle{graphloop}=[graphedge,bend right]

\tikzstyle{endconn}=[circle,dashed,inner sep=5,draw]

% draw the partitioned road
\draw [roadstuff,line width=2pt]
	(-4,0) -- (-.8,0)		% left side
	(.8,0) -- (4,0) ;		% right side
\foreach \k in { -4,...,-1, 1,2,...,4 }
	\draw [roadstuff] (\k,-.2) -- (\k,.2) ;
% label road partitions
\foreach \x / \label in {
	-3.5 / $\workcell^1$,
	-2.5/$\workcell^2$,
	-1.5/$\workcell^3$,
	1.5/$\workcell^{\numcell-2}$,
	2.5/$\workcell^{\numcell-1}$,
	3.5/$\workcell^{\numcell}$
%	{{n-2}}/2.5, {{n-1}}/3.5, {{n}}/4.5
} \draw[] (\x,-.5) node {\label} ;

% draw the snakes
\draw[decorate,decoration=snake]
	(-.5,-.5) -- (-.5,3.5)
	(.5,-.5) -- (.5,3.5) ;

% draw the graph nodes and assignments
\newcommand{\nodelevel}{2}
\newcommand{\drawNode}[3]{		% name, x position, label
\draw[graphstuff]
	(#2,\nodelevel)
	node [shape=circle,fill,inner sep=2] (#1) {}
%	;
	node [above of=#1] {#3} ;
\draw[roadstuff,dashed]
	(#1.south) -- (#2,0) node {$\times$} ;
}
\newcommand{\drawLoop}[2]{
\draw[graphloop,->,bend right]
	(#1) edge node [below] {$\roadlen/\numcell$} (#2)
	(#2) edge node [above] {$\roadlen/\numcell$} (#1)  ;
}

\drawNode{node-1}{-3.5}{ $u^1$ }
\drawNode{node-2}{-2.5}{ $u^2$ }
\drawNode{node-3}{-1.5}{ $u^3$ }
\node (node-4) at (-.5,\nodelevel) {} ;
\node (node-n-minus-3) at (.5,\nodelevel) {} ;
\drawNode{node-n-minus-2}{1.5}{ $u^{\numcell-2}$ }
\drawNode{node-n-minus-1}{2.5}{ $u^{\numcell-1}$ }
\drawNode{node-n}{3.5}{ $u^\numcell$ }

% draw the loops
\drawLoop{node-1}{node-2}
\drawLoop{node-2}{node-3}
\drawLoop{node-n-minus-2}{node-n-minus-1}
\drawLoop{node-n-minus-1}{node-n}

% draw the half edges
\begin{scope}
\clip (-3,1) rectangle (-1,3) (1,1) rectangle (3,3) ;
\draw[graphloop]
	(node-3) edge (node-4)
	(node-4) edge (node-3) ;
\draw[graphloop]
	(node-n-minus-2) edge (node-n-minus-3)
	(node-n-minus-3) edge (node-n-minus-2) ;
\end{scope}

% draw the clustering rectangle
\draw (-4,1) [dashed] rectangle (4,3.5) node [right] {$\roadvar \in \supplyset$} ;

% draw the road connections
\draw[graphstuff]
	(-5.5,2) node[endconn] (left-end) {$\roadvar^-$} ;
\draw[graphedge]
	(node-1) edge node[above] {$\roadconnleft$} (left-end) ;
\draw[graphstuff]
	(5.5,2) node[endconn] (right-end) {$\roadvar^+$} ;
\draw[graphedge]
	(node-n) edge node[above] {$\roadconnright$} (right-end) ;

\end{tikzpicture}
\caption{
The device $\roaddevice$ of a supply road $\roadvar \in \supplyset$.
}
\label{fig:supply device}
\end{figure}
The vertices of this device are the ones in $\tverts \subset \approxnodes$
associated with the tessellation of $\roadvar$;
as seen in Figure~\ref{fig:supply device},
they are ordered from $\roadvar^-$ to $\roadvar^+$.
Otherwise, if $\roadvar$ is a demand road ($\roadvar \in \demandset$), then
we add a ``demand device'', which
is like the supply device, except
(i) the vertices are those from $\bverts$, and
(ii) $\roadconnleft$ and $\roadconnright$ have the opposite direction.
(In either case, $\roadconnleft$ has endpoints $\roadnode[1]$ and $\roadvar^-$, and
$\roadconnright$ has endpoints $\roadnode[\numcell]$ and $\roadvar^+$.)
%the left-side edge connects to vertex $\roadvar^- \in \roadverts$,
%and the right-side edge connects to $\roadvar^+ \in \roadverts$.
We denote by $\roaddevice$ the device subgraph belonging to road $\roadvar$.
%
%ordered  in the othe same as those generated by in Sectionin $\approxnodes$.
%
\begin{remark}
The resulting set $\pathnodes$ is not exactly that same set as $\approxnodes$.
We observe, however, that
the symmetric difference set includes only non-supply, non-demand vertices, which 
cannot contribute positive flow paths to a minimum-cost flow;
thus, they do not affect compliance with Lemma~\ref{lemma:equivalent networks}.
\end{remark}

As indicated in Figure~\ref{fig:supply device},
let the weights $\pathwts$ give
$\epsilon_\roadvar = \roadlen_\roadvar / n$ on all the road device edges
except $\roadconnleft$ and $\roadconnright$ which are ``free'' (zero cost).
Such weights are carefully chosen to ensure that:
(i) the shortest path from $u \in \roaddevice$ to either endpoint $\roadvar^\pm$
has total weight equal to the distance on $\roadnet$ from $\workcell$ to $\roadvar^\pm$;
(ii) the shortest path from either endpoint $\tilde\roadvar^\pm$ to $v \in \roaddevice[\tilde\roadvar]$
has total weight equal to the distance from $\tilde\roadvar^\pm$ to $\workcell'$.
Finally, we insert into $\pathedges$ the set of routing edges $\routingedges$ from Section~\ref{sec:main result construction},
with weights $\routingwts$.
These weights are chosen so that the shortest path on $\routingedges$
from $i \in \roadverts$ to $j \in \roadverts$
has total weight equal to $\distance(i,j)$.

%
%
%We let the supplies 
%%\ktmargin{It's not clear that some nodes are the same as in the $\approxtag$ graph.}{}
%% := \approxsupply$.
%
%
%%
%$\routingwts$ on $\routingedges$
%%\eqref{eq:routingcosts} on $\routingedges$
%%Finally, on the edges indicated in Figure~\ref{}, $\edgewts_\pathtag$ gives $\epsilon_\roadvar$;
%%
%anywhere else, the weight is zero.
%%
%$\pathedges := \routingedges$ (the \emph{routing edges}, see Section~\ref{}).
%
%%

\begin{prop} \label{prop:pathnet similar to approxnet}
For any road network $\roadnet$,
argument distributions $\measone$ and $\meastwo$
satisfying Assumptions~\ref{assump:nicemeasures} and~\ref{assump:road_posmutex},
and $\epsilon > 0$,
let $\cellset_\epsilon$ denote the $\epsilon$-tessellation of $\roadnet$,
let $\approxnet$ denote the Wasserstein network generated by Section~\ref{sec:gp approx construction},
with weights $\lowerwts$, and
let $\pathnet$ denote the network generated by Section~\ref{sec:roadnet approx construction}
with weights $\pathwts$.
$(\approxnet,\lowerwts)$ and $(\pathnet,\pathwts)$
are equivalent in the sense of Lemma~\ref{lemma:equivalent networks}.
\end{prop}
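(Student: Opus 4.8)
The plan is to verify that the pair $(\approxnet,\lowerwts)$ and $(\pathnet,\pathwts)$ satisfies the three hypotheses of Lemma~\ref{lemma:equivalent networks}; its conclusion then yields equality of the two min-cost flow values. Properties~\ref{property:same_supplies} and~\ref{property:same_demands} are immediate: $\pathsupply$ was \emph{defined} to agree with $\approxsupply$ on every cell of a supply or demand road, and — as noted in the Remark preceding this Proposition — every vertex in the symmetric difference of $\pathnodes$ and $\approxnodes$ is a transshipment (zero-supply) vertex, irrelevant to these two properties. So the whole content is Property~\ref{property:shortest_paths_equal}: that the weighted shortest-path length from any supply vertex to any demand vertex is the same in both networks.

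First I would reduce Property~\ref{property:shortest_paths_equal} to a local statement. A supply vertex $u$ of $\approxnet$ is necessarily a device vertex sitting in the supply device $\roaddevice$ of some road $\roadvar\in\supplyset$ and corresponding to a cell $\workcell\subset\roadvar$; likewise $v$ sits in the demand device $\roaddevice[\tilde\roadvar]$ of some $\tilde\roadvar\in\demandset$ and corresponds to $\workcell'\subset\tilde\roadvar$, and by Assumption~\ref{assump:road_posmutex} we have $\roadvar\neq\tilde\roadvar$. Since $\approxedges$ is the complete bipartite graph $\tverts\times\bverts$, the shortest $u$--$v$ path in $\approxnet$ is just the single edge $(u,v)$, of weight $\underline\edgewt_{(u,v)}=\min_{\pointvar\in\workcell,\pointvar'\in\workcell'}\distance(\pointvar,\pointvar')$ by~\eqref{eq:defn lower weights}. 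Hence it suffices to show that the weighted shortest path from $u$ to $v$ in $(\pathnet,\pathwts)$ has exactly this value.

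The argument then has two ingredients. The \emph{first} is a structural observation about the topology of $\pathnet$: the only edges joining a supply device to the rest of the network are the \emph{outgoing} connectors $\roadconnleft,\roadconnright$, and the only edges joining a demand device to the rest are \emph{incoming}; hence a directed walk can leave a supply device only through one of its road endpoints $\roadvar^\pm$, can enter a demand device only through one of $\tilde\roadvar^\pm$, and once inside a demand device can never escape it. Consequently every directed $u$--$v$ walk factors into three consecutive phases — inside $\roaddevice$ up to an exit endpoint $a\in\{\roadvar^-,\roadvar^+\}$; then along the routing edges $\routingedges$ (which live entirely on $\roadverts$) to some endpoint $b\in\{\tilde\roadvar^-,\tilde\roadvar^+\}$; then inside $\roaddevice[\tilde\roadvar]$ to $v$ — so that
\[
	\mathrm{sp}_{\pathnet}(u,v)\;=\;\min_{a,b}\Bigl[\,\mathrm{sp}_{\pathnet}(u,a)+\mathrm{sp}_{\pathnet}(a,b)+\mathrm{sp}_{\pathnet}(b,v)\,\Bigr],
\]
the minimum over the (at most four) endpoint pairs.

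The \emph{second} ingredient — the step I expect to be the main obstacle — is the calibration of these three terms against the three-part decomposition of roadmap distance recorded in Section~\ref{sec:roadnet approx construction}. I would show: (i) $\mathrm{sp}_{\pathnet}(a,b)=\distance(a,b)$, because the routing weights $\routingwts$ were chosen precisely so that shortest $\routingedges$-paths among vertices of $\roadverts$ realize $\distance$, and by the structural step no device edge can shorten an endpoint-to-endpoint walk; and (ii) $\mathrm{sp}_{\pathnet}(u,a)=\min_{\pointvar\in\workcell}\distance(\pointvar,a)$, and symmetrically $\mathrm{sp}_{\pathnet}(b,v)=\min_{\pointvar'\in\workcell'}\distance(b,\pointvar')$. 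Claim (ii) is where the care is needed: with the device's uniform segment weights $\epsilon_\roadvar=\roadlen_\roadvar/N_\roadvar$ and the free connectors, a walk from the $k$-th device vertex of $\roaddevice$ costs at least $(k-1)\epsilon_\roadvar$ to reach $\roadvar^-$ and $(N_\roadvar-k)\epsilon_\roadvar$ to reach $\roadvar^+$ directly, but the routing edges \emph{also} make available the detour through the far endpoint; minimizing over the two options reproduces exactly $\min_{\pointvar\in\workcell}\distance(\pointvar,a)$, since a point of $\workcell=\roadcell$ at coordinate $y$ has $\distance(\pointvar,a)=\min\bigl(y+\distance(\roadvar^-,a),\,(\roadlen_\roadvar-y)+\distance(\roadvar^+,a)\bigr)$ and minimizing over $y$ in the $k$-th segment gives the matching expression. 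Substituting (i) and (ii) into the displayed formula and pulling the two cell-minima outside the endpoint minimum,
\[
	\mathrm{sp}_{\pathnet}(u,v)\;=\;\min_{\pointvar\in\workcell,\ \pointvar'\in\workcell'}\ \min_{a,b}\bigl[\distance(\pointvar,a)+\distance(a,b)+\distance(b,\pointvar')\bigr]\;=\;\min_{\pointvar\in\workcell,\ \pointvar'\in\workcell'}\distance(\pointvar,\pointvar'),
\]
the last equality being the roadmap three-part decomposition (any $\pointvar$--$\pointvar'$ path with $\pointvar,\pointvar'$ on the distinct roads $\roadvar\neq\tilde\roadvar$ must leave $\roadvar$ and enter $\tilde\roadvar$ through endpoints). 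This equals $\underline\edgewt_{(u,v)}$, which establishes Property~\ref{property:shortest_paths_equal} and finishes the proof. The bookkeeping in claim (ii) — keeping the device-internal over-estimate from leaking into the overall shortest path, which is exactly what the routing-edge detour prevents — is the only genuinely delicate point; the remaining steps are routine.
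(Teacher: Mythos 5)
Your proof is correct and follows exactly the route the paper intends: the paper itself omits a formal proof (``the reasoning behind the proposition is the same as that of the construction''), and that reasoning is precisely your three-part decomposition of supply-vertex-to-demand-vertex walks through road endpoints, together with the calibration of the device weights $\epsilon_\roadvar$ and routing weights $\routingwts$ against roadmap distance. Your careful handling of the detour through the far endpoint in claim (ii) is the one point the paper glosses over, and you resolve it correctly.
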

The reasoning behind the proposition is the same as that of the construction.
We omit the redundant formal proof.

%
%It can be verified that the instances $(\pathnet,\pathcosts)$ and $(\approxnet,\lowercosts)$
%satisfy Lemma~\ref{} together,
%and therefore have the same optimal value.
%%with edge weights $\lowerwts$.
%%with the semi-general network it derives from (with lower-bound weights).
%%, the shortest path
%%from any supply vertex $u \in \tverts$
%%to any demand vertex $v \in \bverts$
%%is equal to $\lowerwt(u,v)$.
%However, the proof is tedious and un-insightful, and so is omitted in the interest of brevity.

Combining Proposition~\ref{prop:pathnet similar to approxnet} and Lemma~\ref{lemma:equivalent networks}
shows that
%one can prove that the cost
$\min_{\flowmap \in \pathnet }
	\flowcost( \flowmap ; \pathwts ) = \underapprox$,
and so proves its convergence to $\Wass(\measone,\meastwo)$ from below as $\epsilon \to 0^+$.

%\begin{figure}
%\includegraphics[width=.75\linewidth]{road-devices}
%\end{figure}
%

%\input{fig/supply-device.tex}

%An interesting consequence of the first property is that we may silently reject all non-supply vertices from $\tverts$
%and all non-demand vertices from $\bverts$;
%any paths departing non-supply nodes or terminating at non-demand nodes will carry zero measure anyway.
%Therefore, let us retain only
%
%
%
%Recall that positivity of $\measone$ and $\meastwo$ are road-wise mutually exclusive,
%thus, each road contributes wholly to $\tverts$ \emph{or} $\bverts$, or neither.
%
%

\section{Analysis of Exact and Approximation Algorithms}
\label{sec:algorithm analysis}

In this section we analyze the complexity of construction of
the three networks $\wassnet$, $\approxnet$, and $\pathnet$.
In particular, we consider the way that the sizes of the instance graphs relate to
(i) the size of the road network $\roadnet$
(both the size of its graphical representation and its \emph{physical} size as determined by the lengths of roads); and
(ii) the fine-ness $\epsilon$ of the input tessellation (in the case of approximation).
%as a function of the road network $\roadnet$ and the fine-ness $\epsilon$ of the workspace tessellation.
Finally, we present a numerical study of graph sizes, approximation quality,
and the runtime of a standard QP-based algorithm to compute each solution
for the example network of Figure~\ref{fig:road example measures}.

\subsection{Complexity}
\providecommand{\lcard}{\left|}
\providecommand{\rcard}{\right|}
\providecommand{\cardgroup}[1]{\lcard #1 \rcard}

The remarkable feature of $\wassnet$ is that it depends only on
the size of the \emph{representation} of $\roadnet$, and not on its \emph{physical} size.
$\wassnodes$ has size equal to $\cardgroup{\roadverts} + \cardgroup{\roadset}$,
and $\wassedges$ has size bounded by $4 \cardgroup{\roadset}$;
there are exactly two decision edges and as many as two routing edges per road $\roadvar \in \roadset$.
The size of $\approxnet$, on the other hand,
depends on the physical size of the network and on the approximation parameter $\epsilon$.
$\approxnodes$ has size equal to $2\cardgroup{\cellset_\epsilon}$
or $2 \sum_{\roadvar \in \roadset} \numcell_\roadvar$, which goes as $\Theta(1/\epsilon)$.
$\approxedges$ has size equal to $\cardgroup{\cellset_\epsilon}^2$, which has dominating complexity $\Theta(1/\epsilon^2)$.
Note that such growth of $\approxnet$ may be quite impractical to approximate the EMD with realistic road networks
with hundreds or even thousands of miles of streets.
$\pathnet$ leverages the structure of the road network to reduce the space complexity of approximation to $\Theta(1/\epsilon)$.
$\pathnodes$ has size equal to $\cardgroup{\roadverts} + \cardgroup{\cellset_\epsilon}$
and $\pathedges$ has size bounded by $2\cardgroup{\roadset} + 2\cardgroup{\cellset_\epsilon}$.
Note that the size of $\pathnet$ depends on \emph{both} the physical size of the road network and the size of its representation.

%
%%
%\begin{tabular}{cc}
%Component						& Size	\\ \hline
%$\cardgroup{\wassnodes}$		& $\cardgroup{\roadverts} + \cardgroup{\roadset}$		\\
%$\decisionedges$				& $2 \cardgroup{\roadset}$					\\
%$\routingedges$				& $2 \cardgroup{ \operatorname{skel}(\roadset) }$		\\
%$\cardgroup{\wassedges}$ 		& $\cardgroup{\decisionedges} + \cardgroup{\routingedges}$ 	\\
%\hline
%$\cardgroup{\approxnodes}$		& $2 \cardgroup{\cellset_\epsilon}$
%										$\approx (2/\epsilon) \sum_{\roadvar \in \roadset} \roadlen_\roadvar$				\\
%$\cardgroup{\approxedges}$		& $\cardgroup{\cellset_\epsilon}^2$
%										$\approx \left[ (1/\epsilon) \sum_{\roadvar \in \roadset} \roadlen_\roadvar \right]^2$	\\
%\hline
%$\cardgroup{\pathnodes}$		& $\cardgroup{\roadverts} + \cardgroup{\cellset_\epsilon}$		\\
%$\cardgroup{\pathedges}$		& $\cardgroup{\routingedges} + 2 \cardgroup{\cellset_\epsilon}$		\\										
%\end{tabular}
%

\subsection{Numerical Study}

Figure~\ref{fig:numerical vertices} shows a plot of the number of vertices instantiated in
$\wassnet$, $\approxnet$, and $\pathnet$, as a function of $\epsilon$, for
the EMD problem discussed in Section~\ref{sec:main result example} (Figure~\ref{fig:road example measures}).
Figure~\ref{fig:numerical edges} shows a plot of the number of edges instantiated.
$\wassnet$ exhibits a flat response to $\epsilon$ in both plots, since it does not depend on the parameter.
As expected, both approximation schemes exhibit the same rate of growth ($\Omega(1/\epsilon)$) in the number of vertices instantiated,
while $\approxnet$ has a factor $1/\epsilon$ greater growth in the rate of edges instantiated.
\begin{figure}
\hfill
\subfigure[
The number of vertices instantiated.
]{
\includegraphics[width=.45\linewidth]{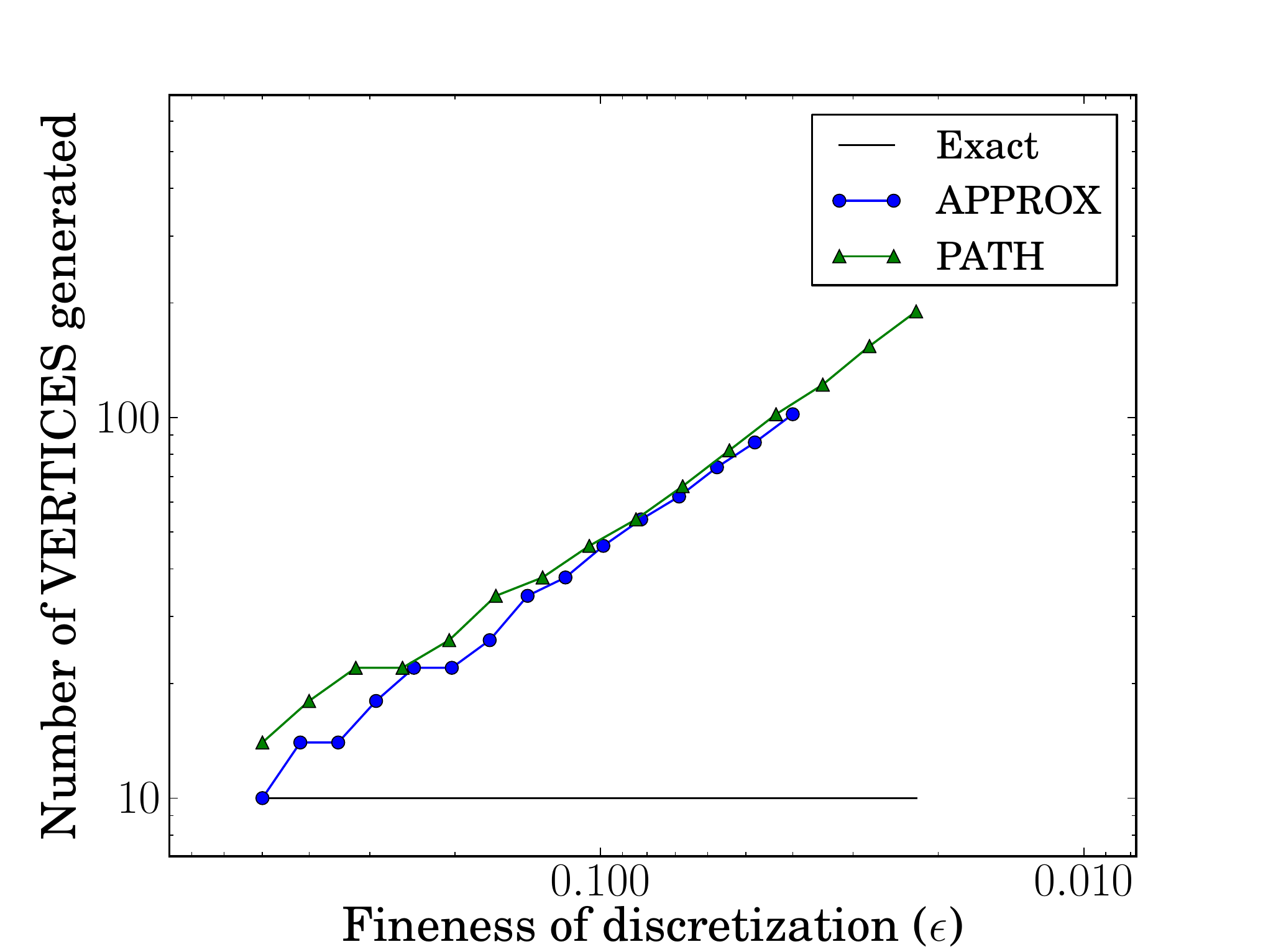}
\label{fig:numerical vertices}
}
\hfill
\subfigure[
The number of edges instantiated.
]{
\includegraphics[width=.45\linewidth]{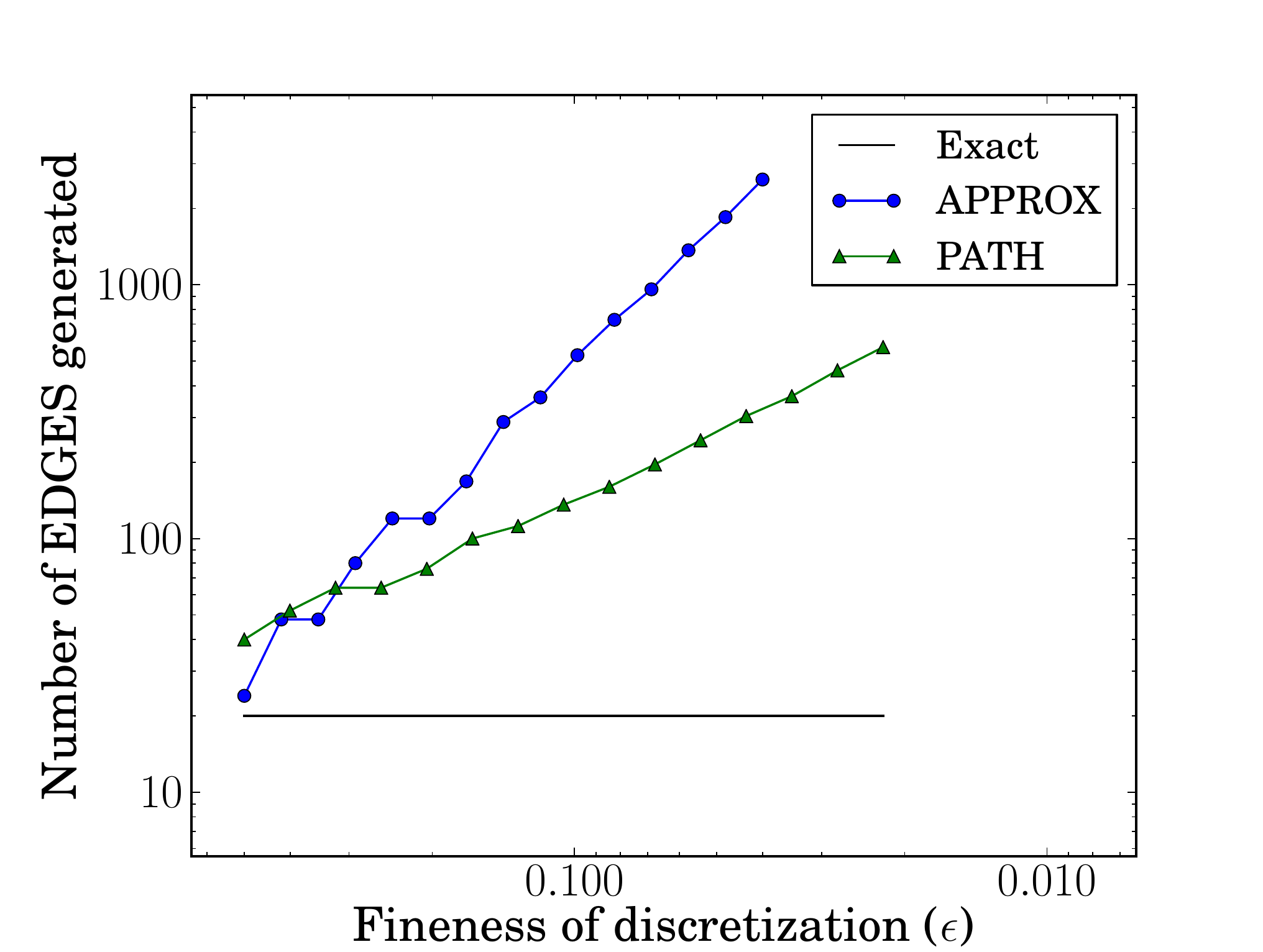}
\label{fig:numerical edges}
}
\hfill\null
\caption{
Number of objects instantiated, in the Wasserstein network,
as a function of the fine-ness parameter $\epsilon$
of the $\epsilon$-tessellation of the roadmap in Figure~\ref{fig:road example measures}.
}
\end{figure}

Figure~\ref{fig:numerical closeness} shows a plot of the quality of approximation
of the methods in Sections~\ref{sec:gp approx construction} and~\ref{sec:roadnet approx construction},
respectively, for values of the resolution parameter $\epsilon$ as small as possible under space and runtime considerations
(e.g., producing less than $100,000$ graph objects, and running in minutes on an Intel i5 processor with 4 CPUs and 4GB of RAM).
The dashed center line marks the solution obtained by the exact algorithm, i.e., optimization over the flow network in Figure~\ref{fig:roadnet example solution}.
The plot shows convergence of the approximation bounds to the value predicted by $\wassnet$.
\begin{figure}
\hfill
\subfigure[
EMD approximation bounds.
]{
\includegraphics[width=.45\linewidth]{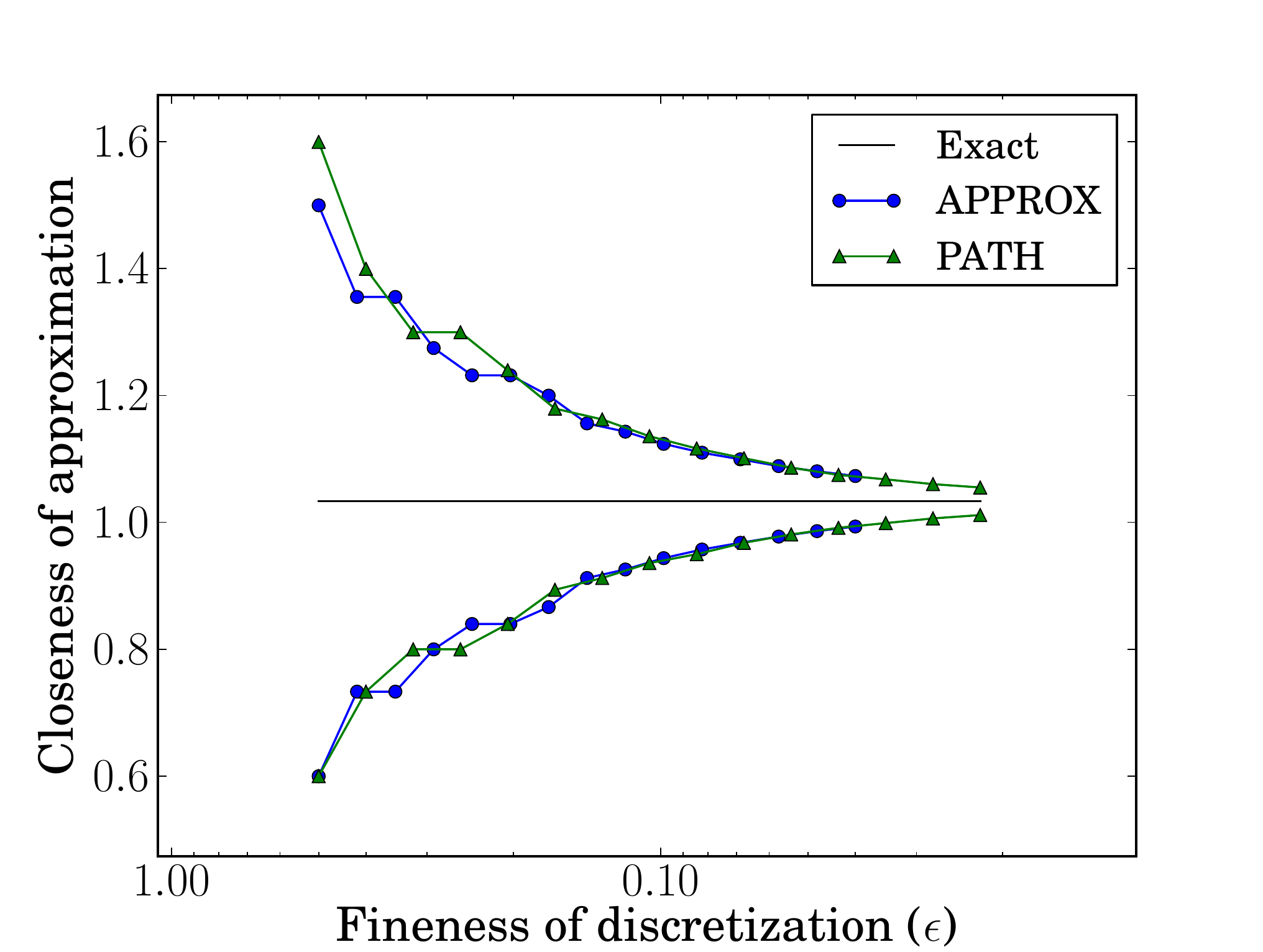}
\label{fig:numerical closeness}
}
\hfill
\subfigure[
Runtime of the network flow-based algorithm.
]{
\includegraphics[width=.45\linewidth]{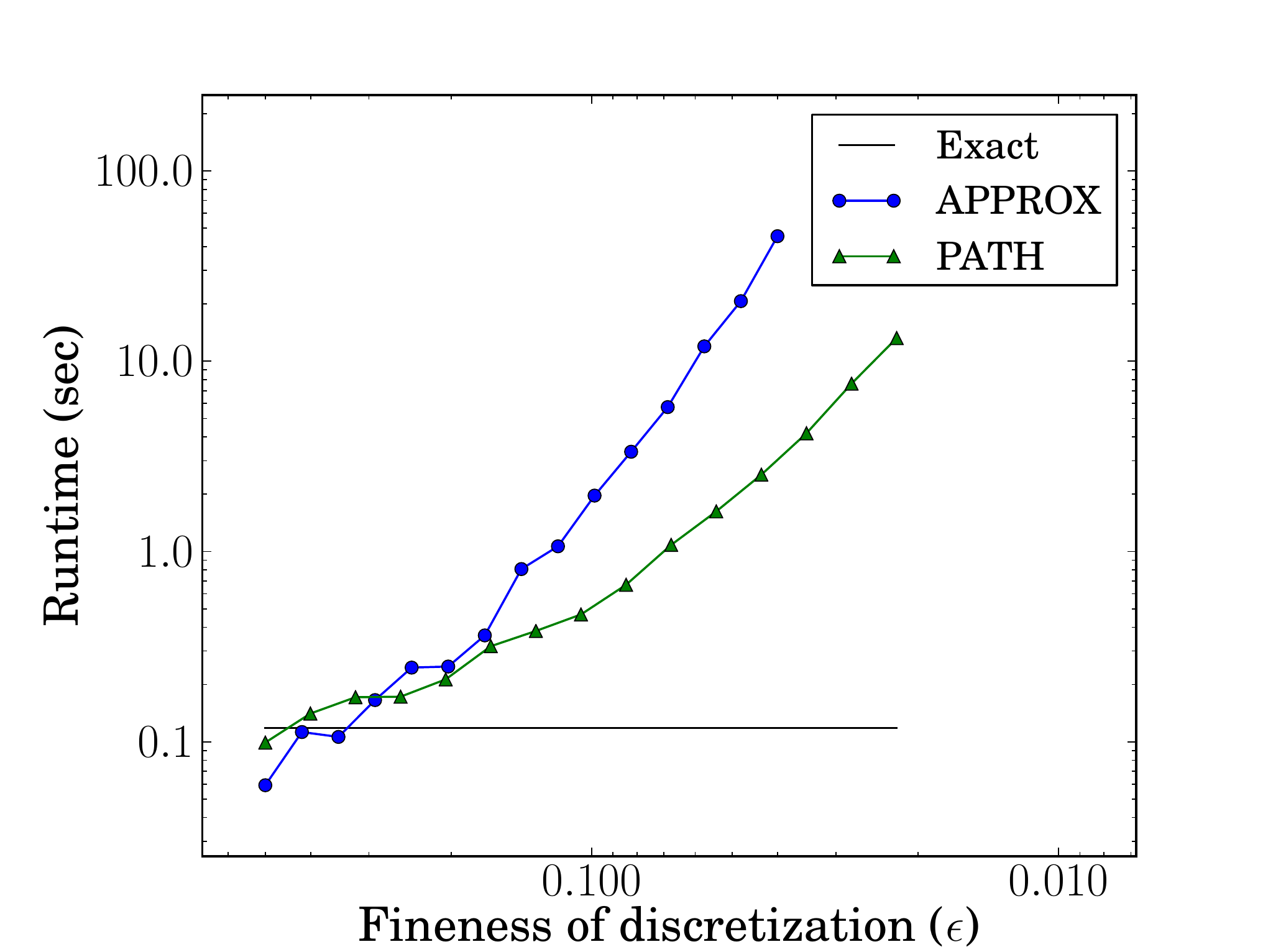}
\label{fig:numerical runtime}
}
\hfill\null
\caption{
Quality of approximation and runtime
% as a function of 
%Complexity of a network flow-based algorithm for each Wasserstein network,
as a function of the fine-ness parameter $\epsilon$.
The flat lines indicate value achieved by $\wassnet$, which is independent of $\epsilon$.
}
\end{figure}

\section{Evaluating the Limit of the Path-based Approximation}
\label{sec:proof of correctness}

$\pathnet$ is sufficiently structured that it will allow us to calculate the limit of~\eqref{eq:defn underapprox}
as $\epsilon \to 0^+$.
%the tesselation of $\roadnet$ becomes finer.
As argued in Section~\ref{sec:gp approx bounds}, that limit is equal to the EMD between the argument distributions.
In this section we present a derivation of the limit, which
produces the formulation of $\wassnet$ in Section~\ref{sec:main result construction}.

Suppose we are trying to compute the EMD between $\measone$ and $\meastwo$ over a road network $\roadnet$.
Let $\wassnet$ denote the resulting $\wasstag$ Wasserstein network, with edge costs $\wasscosts$;
let $\pathnet$ be the $\pathtag$ network generated by some $\epsilon$-tesselation of $\roadnet$,
with weights $\pathwts$.
%Let $\flowmap^*$ be a minimum-cost admissible flow on $\pathnet$.
Note that
the routing edges $\routingedges$ are present in both networks, so
the two networks differ only between
the decision edges $\decisionedges$ in $\wassnet$ and
the road devices in $\pathnet$.

%The cost of a flow $\flowmap \in \pathnet$ can be divided into two parts:
%(i) the cost of the flow on the routing edges $\routingedges$, and
%(ii) the cost of the flow on the road devices $\{ \roaddevice \}_{\roadvar\in\roadset}$.
%Note that the routing edges $\routingedges$ appear ``as is'' $\wassnet$.
%In other words, the product of our analysis is a relation of the costs of flow on the road devices in $\pathnet$
%to costs of flow on the decision edges $\decisionedges$ in $\wassnet$.

\subsection{Costs Associated with Road Devices}

Let $\flowmap^*$ be a minimum-cost flow on $\pathnet$, and
let us consider the cost associated with the device $\roaddevice$
of a non-transshipment road $\roadvar \in \supplyset \cup \demandset$.
%In fact, suppose $\roadvar$ is a supply road, i.e., $\roadvar \in \supplyset$;
As in Figure~\ref{fig:supply device}, let the vertices of $\roaddevice$ be ordered
($\roadnode[1],\roadnode[2],\ldots,\roadnode[\numcell]$)
from $\roadvar^-$ to $\roadvar^+$.
%contributing a device $\roaddevice$ to $\pathnet$,
%let $\roadnode[1],\roadnode[2],\ldots,\roadnode[ N(\roadvar;\epsilon) ]$
%denote the network vertices associated with the partition cells generated by $\roadvar$;
%let us suppose they are ordered from $\roadvar^-$ to $\roadvar^+$;
%%$\roadcell[1],\roadcell[2],\ldots,\roadcell[ N(\roadvar;\epsilon) ]$ (respectively);
%together, these are the vertices in $\roaddevice$.
%%road $\roadvar$'s device.
%%\in \supplyset$, then $(\roadnode[k],\roadcell[k]) \in \tmatch$ for $k = 1,\ldots, N(\roadvar;\epsilon)$;
%%if $\roadvar \in \demandset$, then $(\roadnode[k],\roadcell[k]) \in \bmatch$ for $k = 1,\ldots, N(\roadvar;\epsilon)$.

Suppose $\roadvar \in \supplyset$.
Then from inspection of the device in Figure~\ref{fig:supply device}, we can denote the cost associated with $\roaddevice$ by
\begin{equation}
\label{eq:roadcost_contribution}
	\flowcost_\roadvar( \flowmap^* ; \pathwts )
	\doteq
	\sum_{k=1}^{N-2}
    	\epsilon_\roadvar \ \flowmap^*( \roadnode[k], \roadnode[k+1] )
		+ \epsilon_\roadvar \ \flowmap^*( \roadnode[k+1], \roadnode[k] ).
\end{equation}
%This expression is not very informative, because we do not have any meaningful characterization of $\flowmap^*$.
%However, what if we could obtain an expression of the cost $\flowcost_\roadvar$ in terms of the node supplies $\pathsupply$, which we \emph{do} know?
Let us call all the edges of the form
$(\roadnode[k],\roadnode[k+1])$ the \emph{forward} edges;
%e.g., for $k = 0,\ldots,N(\roadvar;\epsilon)$;
in a similar fashion,
we call all the edges of the form
$(\roadnode[k],\roadnode[k-1])$ the \emph{backward} edges;
here, we are letting $\roadnode[0]$ and $\roadnode[\numcell+1]$ denote symbolically
the vertices $\roadvar^-$ and $\roadvar^+$ (respectively).
%(also, whichever of $(\roadnode[1],\roadvar^-)$ or $(\roadvar^+,\roadnode[ N(\roadvar;\epsilon) ])$ exists);
%(also, whichever of $(\roadvar^-,\roadnode[1])$ or $( \roadnode[ N(\roadvar;\epsilon) ], \roadvar^+)$ exists).
%
%
Our ability to obtain a meaningful expression relies crucially on
an important technical property of minimum-cost flows on $\pathtag$ networks:

Note that between any adjacent vertices in $\roaddevice$,
positive flow can be supported only either on the forward edge or the backward edge;
otherwise, $\flowmap^*$ would be non-minimal by existence of a cycle.
%In this sense we can imbue the flow between adjacent vertices with the \emph{direction} of the positive flow edge.
%Now consider a minimum-cost flow $\flowmap$ on the network $\flownet$.
%Note that the absense of cycles in min-cost flows
%implies that 
%there cannot be both left-going flow and right-going flow.
%; otherwise, they would form a positive-flow cycle.
We say that a vertex $\roadnode$ ``parts'' device $\roaddevice$
if all forward flows
(i.e., positive flows on forward edges)
are on one side of $\roadnode$ and
all backward flows are on the opposite side.
If such a \emph{parting vertex} exists, then we say the device is \emph{parted} [by the flow].
%let $\nodevar_\roadvar^1$

%
\begin{lemma}[Minimum-cost flows part road devices]
\label{lemma:threshold}
%Let $\pathnet$ be a flow network generated by the procedure of Section~\ref{}.
Let $\flowmap^*$ be a minimum-cost admissible flow on $\pathnet$,
generated by some $\epsilon$-tesselation of some road network $\roadnet$.
Then every road device in $\pathnet$ is parted by $\flowmap^*$.
%Moreover, if $\roadvar \in \supplyset$, then all left-going flow is to the left of the part; if $\roadvar \in \demandset$, then all left-going flow is to the right of the part.
\end{lemma}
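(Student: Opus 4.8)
The plan is to reduce Lemma~\ref{lemma:threshold} to a one-dimensional threshold argument on the chain of vertices of a single road device. I will carry out the case of a supply road $\roadvar \in \supplyset$ in detail; the demand-road case then follows by reversing every edge orientation and negating the corresponding supplies, so I would only remark on the symmetry. Recall the convention $\roadnode[0] \equiv \roadvar^-$ and $\roadnode[\numcell+1] \equiv \roadvar^+$ introduced above, and note that in $\pathnet$ the routing edges $\routingedges$ meet only $\roadverts$ while the connectors $\roadconnleft,\roadconnright$ of a supply device meet only $\roadnode[1],\roadnode[\numcell]$ and are directed \emph{out of} the device; hence the only edges incident to an interior device vertex $\roadnode[k]$ ($1 \le k \le \numcell$) are the forward edges $(\roadnode[k],\roadnode[k+1])$ and the backward edges $(\roadnode[k+1],\roadnode[k])$, $0 \le k \le \numcell$. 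For $0 \le k \le \numcell$ I would set the net forward flow across cut $k$,
\[
  g_k \;\doteq\; \flowmap^*(\roadnode[k],\roadnode[k+1]) - \flowmap^*(\roadnode[k+1],\roadnode[k]),
\]
with the convention that $\flowmap^*$ is zero on an absent edge; the directions of the connectors then give $g_0 = -\flowmap^*(\roadnode[1],\roadvar^-) \le 0$ and $g_\numcell = \flowmap^*(\roadnode[\numcell],\roadvar^+) \ge 0$.

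The first step is just the conservation law. Applying~\eqref{eq:conservation} at each interior vertex $\roadnode[k]$, whose supply in $\pathnet$ is $\pathsupply(\roadnode[k]) = \measone(\roadcell[k]) \ge 0$, and collecting its four incident flows, I obtain the telescoping identity $g_k - g_{k-1} = \pathsupply(\roadnode[k])$ for $1 \le k \le \numcell$. Thus $k \mapsto g_k$ is non-decreasing; combined with $g_0 \le 0 \le g_\numcell$ this yields a well-defined smallest index $m \in \{1,\dots,\numcell\}$ with $g_m \ge 0$, and then $g_k < 0$ for every $k < m$.

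The second step uses the acyclicity of minimum-cost flows already observed in the text: the forward edge and the backward edge between a pair of adjacent device vertices carry equal, strictly positive weight $\epsilon_\roadvar$, so if $\flowmap^*$ were positive on both it could be cheapened -- while remaining admissible -- by cancelling the resulting two-cycle; hence at most one of the two is positive. Consequently, for each $k$: $g_k > 0$ forces $\flowmap^*(\roadnode[k+1],\roadnode[k]) = 0$ (cut $k$ carries no backward flow), $g_k < 0$ forces $\flowmap^*(\roadnode[k],\roadnode[k+1]) = 0$ (cut $k$ carries no forward flow), and $g_k = 0$ forces both to vanish. Feeding in the monotonicity of the $g_k$: every cut $k < m$ has $g_k < 0$ and so carries no forward flow, while every cut $k \ge m$ has $g_k \ge 0$ and so carries no backward flow. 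Hence all positive forward flows sit on cuts lying between $\roadnode[m]$ and $\roadvar^+$, and all positive backward flows on cuts lying between $\roadvar^-$ and $\roadnode[m]$ -- on opposite sides of $\roadnode[m]$ -- so $\roadnode[m]$ parts $\roaddevice$, which is exactly the claim.

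I do not anticipate a real obstacle; this is the discrete counterpart of the ``pile of dirt'' picture in Figure~\ref{fig:decision edge intuition}. The only places that call for care are the endpoint bookkeeping -- handled by the $\roadnode[0],\roadnode[\numcell+1]$ convention together with the out-of-device orientation of the connectors, which also pins down the signs $g_0 \le 0 \le g_\numcell$ -- and the remark that an interior device vertex has no edge leaving $\roaddevice$, so that conservation at $\roadnode[k]$ is genuinely the one-dimensional balance used above. Should a self-contained treatment be preferred to citing the acyclicity observation, the two-cycle cancellation can be spelled out in one line.
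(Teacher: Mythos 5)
Your proof is correct and rests on the same two ingredients as the paper's own argument --- nonnegativity of the supplies along a supply device together with the absence of positive two-cycles in a minimum-cost flow --- so it is essentially the same proof, merely organized as a direct construction of the parting vertex (the first sign change of the net forward flow $g_k$) rather than the paper's proof by contradiction. The only nit is that when $g_0=0$ your claim that $g_k<0$ for all $k<m$ can fail at $k=0$; but since $g_k=0$ also forces both flows across that cut to vanish, the parting conclusion is unaffected.
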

%
%there exists a set of indices $\{ k_\roadvar \}_{\roadvar \in \roadset}$
%such that for every road $\roadvar \in \roadset$,
%flow does not ``cross'' $\roadnode[k_\roadvar]$.
% for any road device associated
%such that no flow ``crosses'' $\roadnode[k_\roadvar]$ on any road device
%on a path-based flow network $\newflownet$,
%%$ \in \flowset_\Wass$,
%there exists a set of indices
%$\{ \tc{\tthresh_\roadvar} \}_{\roadvar \in \roadset}$,
%such that for every $\roadvar \in \roadset$,
%$\flowmap( \tc{\nodevar_\roadvar^k}, \tc{\nodevar_\roadvar^{k+1}} ) = 0$
%for all $k < \tc{\thresh_\roadvar}$,
%and 
%$\flowmap( \tc{\nodevar_\roadvar^{k+1}}, \tc{\nodevar_\roadvar^{k}} ) = 0$
%for all $k \geq \tc{\thresh_\roadvar}$;
%%
%there exists a set of indices
%$\{ \bc{\thresh_\roadvar} \}_{\roadvar \in \roadset}$,
%such that for every $\roadvar \in \roadset$,
%$\flowmap( \bc{\nodevar_\roadvar^{k+1}}, \bc{\nodevar_\roadvar^{k}} ) = 0$
%for all $k < \bc{\thresh_\roadvar}$,
%and 
%$\flowmap( \bc{\nodevar_\roadvar^{k}}, \bc{\nodevar_\roadvar^{k+1}} ) = 0$
%for all $k \geq \bc{\thresh_\roadvar}$.
%\end{lemma}
%
\begin{proof}
The proof is by contradiction:
Assume that $\flowmap^*$ is a minimum-cost admissible flow,
but the device of some $\roadvar \in \supplyset$ is not parted.
(We give the proof only for $\roadvar \in \supplyset$, but
the proof for $\roadvar \in \demandset$ is symmetrical.)
Note that because $\roadvar \in \supplyset$,
then $\pathsupply(\roadnode[k]) \geq 0$ for $k=1,\ldots,\numcell$.
This implies that the backward flows are non-decreasing in magnitude from $\roadvar^+$ to $\roadvar^-$ and
and the forwards flows are non-decreasing from $\roadvar^-$ to $\roadvar^+$.
(Otherwise, $\flowmap^*$ would be either non-minimal, by existence of a positive-flow cycle,
or else not admissible, by violation of a flow conservation constraint.)
Since $\roaddevice$ is not parted by assumption, then the flow changes direction at least twice.
Thus, there are indices $k'$ and $k''$, $k' \leq k''$, such that
$\flowmap( \roadnode[k'], \roadnode[k'+1] ) > 0$ and
$\flowmap( \roadnode[k''+1], \roadnode[k''] ) > 0$.
In that case, the monotonicity of forward and backward flows implies the existence of a positive-flow cycle
somewhere between $k'$ and $k''$,
drawing a contradiction against optimality of $\flowmap^*$.
\end{proof}

The parting of the road devices is quite powerful, because
in combination with the flow conservation constraints~\eqref{eq:conservation},
it allows us to express the whole device cost~\eqref{eq:roadcost_contribution} in terms of the known supplies $\pathsupply$, and
ultimately, the density function $\den_\roadvar$.
%Then, using a bit of algebra and Reimann approximation one can obtain the next lemma.
%
\begin{lemma}[Costs of Parted Devices] \label{lemma:devicecost}
Let $\pathnet$ be the $\pathtag$ Wasserstein network for some $\epsilon$-tesselation
of a road network $\roadnet$
with argument distributions $\measone$ and $\meastwo$.
Let $\roadvar$ be some non-transshipment road and
let $\flowmap$ be any admissible flow on $\pathnet$ which parts $\roadvar$;
% be a flow network generated by the procedure of Section~\ref{},
%and let $\flowmap^*$ be a minimum-cost admissible flow on $\pathnet$.
let $\thresh_\roadvar$ denote the index of the part of $\roaddevice$.
% induced by $\flowmap$.
Then
%the cost of the device $\roaddevice$ satisfies
\begin{equation} \label{eq:roadcost_outgoing}
  \flowcost_\roadvar(\flowmap;\pathwts)
  	= o(1)
  		+ \int_{y=0}^{ \thresh_\roadvar \times \epsilon_\roadvar} \den_\roadvar(y) \ y \ dy
		+ \int_{y=\thresh_\roadvar \times \epsilon_\roadvar}^{\roadlen_\roadvar} \den_\roadvar(y) \ [ \roadlen_\roadvar - y ] \ dy,
\end{equation}
\begin{align}
\flowmap( \roadconnleft )
	&= \cumden( \thresh_\roadvar \times \epsilon_\roadvar ; \den_\roadvar )
		+ o(1),
	\qquad \text{and} \label{eq:road flow left} \\
\flowmap( \roadconnright )
	&= \cumden( \roadlen_\roadvar - \thresh_\roadvar \times \epsilon_\roadvar ; \reverseden_\roadvar )
		+ o(1).
	\label{eq:road flow right}
\end{align}
\end{lemma}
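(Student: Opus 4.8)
The plan is to prove \eqref{eq:road flow left}, \eqref{eq:road flow right}, and \eqref{eq:roadcost_outgoing} by elementary bookkeeping with the flow conservation constraints \eqref{eq:conservation} on the vertices of the device $\roaddevice$, using \emph{only} the hypothesis that $\flowmap$ parts $\roadvar$; minimality of $\flowmap$ is not needed. I would carry out the case of a supply road $\roadvar \in \supplyset$ in detail, the case $\roadvar \in \demandset$ being entirely symmetric (reverse every flow direction, and note $\den_\roadvar = \dentwo_\roadvar$ there, so in either case $\den_\roadvar$ is the density being transported). Order the device vertices $\roadnode[1],\dots,\roadnode[\numcell]$ from $\roadvar^-$ to $\roadvar^+$ as in Figure~\ref{fig:supply device}, so that $\pathsupply(\roadnode[k]) = \measone(\roadcell[k]) = \int_{(k-1)\epsilon_\roadvar}^{k\epsilon_\roadvar}\denone_\roadvar(y)\,dy$. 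Since $\roadvar$ is a supply road the supplies $\pathsupply(\roadnode[k])$ are nonnegative, which forces the part to separate a left portion whose internal flow runs backward (toward $\roadvar^-$) from a right portion whose internal flow runs forward (toward $\roadvar^+$); in particular the forward edges left of the part and the backward edges right of it carry no flow, and the parting vertex $\roadnode[\thresh_\roadvar]$ receives no flow from within the device. Summing \eqref{eq:conservation} over the appropriate left or right block of device vertices then produces, for each internal edge, a closed form: the backward flow on the edge between $\roadnode[k]$ and $\roadnode[k+1]$ (for $k<\thresh_\roadvar$) is the total $\measone$-mass of the cells $\roadcell[k+1],\dots,\roadcell[\thresh_\roadvar-1]$ plus the fraction $\alpha_\roadvar \in [0,1]$ of $\measone(\roadcell[\thresh_\roadvar])$ that the parting vertex sends toward $\roadvar^-$; symmetrically the forward flows on the right use $1-\alpha_\roadvar$.

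I would then read off the connection edges. Summing \eqref{eq:conservation} over $\{\roadnode[1],\dots,\roadnode[\thresh_\roadvar-1]\}$ gives $\flowmap(\roadconnleft) = \cumden((\thresh_\roadvar-1)\epsilon_\roadvar;\denone_\roadvar) + \alpha_\roadvar\measone(\roadcell[\thresh_\roadvar])$, and the mirror computation gives $\flowmap(\roadconnright) = \int_{\thresh_\roadvar\epsilon_\roadvar}^{\roadlen_\roadvar}\denone_\roadvar(y)\,dy + (1-\alpha_\roadvar)\measone(\roadcell[\thresh_\roadvar])$. By Assumption~\ref{assump:nicemeasures} the density $\denone_\roadvar$ is Lipschitz, hence bounded on $[0,\roadlen_\roadvar]$, so $\measone(\roadcell[\thresh_\roadvar]) = O(\epsilon)$; together with Lipschitz continuity of $\cumden(\cdot;\denone_\roadvar)$ this collapses the first identity to \eqref{eq:road flow left}. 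For \eqref{eq:road flow right}, the change of variables $y \mapsto \roadlen_\roadvar - y$ and the definition $\reverseden_\roadvar(x) = \den_\roadvar(\roadlen_\roadvar - x)$ rewrite $\int_{\thresh_\roadvar\epsilon_\roadvar}^{\roadlen_\roadvar}\denone_\roadvar(y)\,dy$ as $\cumden(\roadlen_\roadvar - \thresh_\roadvar\epsilon_\roadvar;\reverseden_\roadvar)$.

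For the device cost \eqref{eq:roadcost_contribution}, I would substitute the internal-flow formulas above, killing the vanishing edges, so that $\flowcost_\roadvar(\flowmap;\pathwts)$ equals $\epsilon_\roadvar$ times the sum of the backward flows on the left block plus $\epsilon_\roadvar$ times the sum of the forward flows on the right block. Interchanging the order of summation in each double sum, the left block contributes $\epsilon_\roadvar\sum_j (j-1)\measone(\roadcell[j])$ up to a term of size at most $\roadlen_\roadvar\,\measone(\roadcell[\thresh_\roadvar]) = O(\epsilon)$ coming from $\alpha_\roadvar$; since $(j-1)\epsilon_\roadvar$ is the left endpoint of $\roadcell[j]$ and $|y-(j-1)\epsilon_\roadvar|\le\epsilon_\roadvar$ there, this sum lies within $\epsilon_\roadvar\sum_j\measone(\roadcell[j]) \le \epsilon_\roadvar$ of $\int_0^{\thresh_\roadvar\epsilon_\roadvar}\denone_\roadvar(y)\,y\,dy$. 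The right block is handled identically, using that the roadmap distance from $y \in \roadcell[j]$ to $\roadvar^+$ equals $\roadlen_\roadvar - y$, yielding $\int_{\thresh_\roadvar\epsilon_\roadvar}^{\roadlen_\roadvar}\denone_\roadvar(y)[\roadlen_\roadvar - y]\,dy + o(1)$. Adding the two pieces, and recalling $\den_\roadvar = \denone_\roadvar$ on $\supplyset$, gives \eqref{eq:roadcost_outgoing}.

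The main obstacle I anticipate is purely one of bookkeeping: making every $o(1)$ uniform in $\epsilon$ --- and, importantly, in the flow-dependent parting index $\thresh_\roadvar$, which the statement does not ask to converge. The error decomposes into a bounded number of contributions: the mid-cell location of the part inside $\roadcell[\thresh_\roadvar]$; the Riemann-type replacement of $\int_{\roadcell[j]}\denone_\roadvar(y)\,y\,dy$ by $\epsilon_\roadvar(j-1)\measone(\roadcell[j])$ summed over cells; and the two boundary slivers of length $\le \epsilon_\roadvar$ at each end of the integration range. I would bound each of these by a single constant times $\epsilon$, using only that $\denone_\roadvar$ is bounded, that $\epsilon_\roadvar \le \epsilon$ and $\numcell\,\epsilon_\roadvar = \roadlen_\roadvar$, and that $\sum_j\measone(\roadcell[j]) = \measone(\roadvar) \le 1$; the total is then $O(\epsilon)$ with constant depending only on $\roadlen_\roadvar$ and the Lipschitz bound of $\denone_\roadvar$, which is exactly the claimed $o(1)$.
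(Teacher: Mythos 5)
Your proposal is correct and follows essentially the same route as the paper's proof in Appendix~\ref{sec:device cost proof}: use partedness plus the flow conservation constraints~\eqref{eq:conservation} to express the internal and connection-edge flows as partial sums of the cell supplies, then identify Riemann sums converging to the stated integrals, with the Lipschitz bound on $\den_\roadvar$ controlling all error terms at $O(\epsilon)$. The only cosmetic difference is that you parametrize the split of the parting cell's mass by an explicit fraction $\alpha_\roadvar$, whereas the paper sandwiches each flow between two partial sums differing by $\supplyvar(\roadnode[\thresh_\roadvar]) = O(\epsilon)$; these are equivalent.
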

The proof of the lemma is fairly technical, and is provided in Appendix~\ref{sec:device cost proof}.
%Combining the expressions of Lemma~\ref{lemma:devicecost},
%\eqref{eq:roadcost_outgoing} can be transformed into a more familiar expression.
%(note the similarity of the next Lemma to~\eqref{eq:roadcost derived}).
%
\begin{lemma}[Costs of Parted Devices (Refined)] \label{lemma:devicecost readable}
Let $\roadvar$ be some non-transshipment road and
let $\flowmap$ be any admissible flow on $\pathnet$ which parts $\roadvar$.
Then
\begin{equation} \label{eq:devicecost readable}
  \flowcost_\roadvar(\flowmap;\pathwts)
  = \costform( \flowmap(\roadconnleft) ; \den_\roadvar )
  	+ \costform( \flowmap(\roadconnright) ; \reverseden_\roadvar )
  	+ o(1) .
\end{equation}
\end{lemma}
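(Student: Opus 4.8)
The plan is to obtain \eqref{eq:devicecost readable} as a direct corollary of Lemma~\ref{lemma:devicecost}: I would (i) recognize the two integrals on the right of \eqref{eq:roadcost_outgoing} as evaluations of the cost form $\costform$, and then (ii) replace the cumulative-density arguments $\cumden(\thresh_\roadvar\epsilon_\roadvar;\den_\roadvar)$ and $\cumden(\roadlen_\roadvar-\thresh_\roadvar\epsilon_\roadvar;\reverseden_\roadvar)$ by the flow values $\flowmap(\roadconnleft)$ and $\flowmap(\roadconnright)$ using \eqref{eq:road flow left}--\eqref{eq:road flow right}, absorbing the discrepancy into the $o(1)$ term by Lipschitz continuity of $\costform$. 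As in Lemma~\ref{lemma:devicecost}, I would give the argument for $\roadvar\in\supplyset$, the case $\roadvar\in\demandset$ being symmetric.

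For step (i), the basic observation is that for any Lipschitz density $\den:[0,\roadlen]\to\reals_{\ge 0}$ and any $y\in[0,\roadlen]$ one has
\[
  \costform\bigl(\cumden(y;\den);\den\bigr)=\int_0^{y}\den(y')\,y'\,dy'.
\]
Indeed $\costform(x;\den)=\int_0^{\invcumden(x;\den)}\den(y')\,y'\,dy'$ by \eqref{eq:costform}, and $y_0:=\invcumden(\cumden(y;\den);\den)=\inf\{y':\cumden(y';\den)\ge\cumden(y;\den)\}\le y$; since $\cumden(\cdot;\den)$ is continuous and non-decreasing, it is constant on $[y_0,y]$, so $\den=0$ a.e.\ there, whence $\int_{y_0}^{y}\den(y')\,y'\,dy'=0$ and the identity follows. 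Applying this with $\den=\den_\roadvar$, $y=\thresh_\roadvar\epsilon_\roadvar$ identifies the second term of \eqref{eq:roadcost_outgoing} with $\costform(\cumden(\thresh_\roadvar\epsilon_\roadvar;\den_\roadvar);\den_\roadvar)$; and the substitution $y\mapsto\roadlen_\roadvar-y$ rewrites the third term as $\int_0^{\roadlen_\roadvar-\thresh_\roadvar\epsilon_\roadvar}\reverseden_\roadvar(y)\,y\,dy$, which the same identity (with $\den=\reverseden_\roadvar$) identifies with $\costform(\cumden(\roadlen_\roadvar-\thresh_\roadvar\epsilon_\roadvar;\reverseden_\roadvar);\reverseden_\roadvar)$. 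Thus Lemma~\ref{lemma:devicecost} already yields $\flowcost_\roadvar(\flowmap;\pathwts)=\costform(\cumden(\thresh_\roadvar\epsilon_\roadvar;\den_\roadvar);\den_\roadvar)+\costform(\cumden(\roadlen_\roadvar-\thresh_\roadvar\epsilon_\roadvar;\reverseden_\roadvar);\reverseden_\roadvar)+o(1)$.

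For step (ii), \eqref{eq:road flow left}--\eqref{eq:road flow right} say precisely that these cdf arguments agree with $\flowmap(\roadconnleft)$ and $\flowmap(\roadconnright)$ up to $o(1)$; since by Proposition~\ref{prop:costform convex} each of $\costform(\cdot;\den_\roadvar)$ and $\costform(\cdot;\reverseden_\roadvar)$ is Lipschitz (with constant $\roadlen_\roadvar$) on its domain, replacing the argument by one within $o(1)$ changes the value by $o(1)$, giving \eqref{eq:devicecost readable}. The only delicate point is this last bookkeeping of the $o(1)$ terms: one must route the argument-level error of \eqref{eq:road flow left}--\eqref{eq:road flow right} through the Lipschitz estimate of Proposition~\ref{prop:costform convex}, and note that the perturbed arguments stay (asymptotically) inside $[0,\cumden(\roadlen_\roadvar;\den_\roadvar)]$ where that estimate holds, so nothing goes wrong at the endpoints of the domain of $\costform$; finiteness of $\roadset$ makes all the $o(1)$ bounds uniform in $\roadvar$. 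Everything else is the change of variables $y\mapsto\roadlen_\roadvar-y$ and the elementary fact that a density vanishes a.e.\ wherever its cdf is flat.
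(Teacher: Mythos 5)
Your proposal is correct and follows essentially the same route as the paper's proof: identify the integrals in \eqref{eq:roadcost_outgoing} with $\costform(\cumden(\cdot;\den);\den)$ via the identity $\int_0^y \den(y')\,y'\,dy' = \costform(\cumden(y;\den);\den)$, apply the change of variables $y' = \roadlen_\roadvar - y$ for the second term, and absorb the argument-level $o(1)$ errors from \eqref{eq:road flow left}--\eqref{eq:road flow right} through the Lipschitz continuity of $\costform$. Your treatment is in fact more careful than the paper's on the two points it leaves implicit (the proof of the identity when the cdf has flat regions, and the uniformity of the $o(1)$ bookkeeping).
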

\begin{proof}
It is easy to show that
\[
	\int_0^y \den(y') \ y' \ dy' \equiv \costform( \cumden( y; \den) ; \den ).
\]
Thus, we can obtain the first term of~\eqref{eq:devicecost readable}
by combining the first integral of~\eqref{eq:roadcost_outgoing} with~\eqref{eq:road flow left}, and
saving off any low order terms
(recall that all $\costform$ are Lipschitz).
Similarly, we can obtain the second term of~\eqref{eq:devicecost readable},
by combining the second integral of~\eqref{eq:roadcost_outgoing} with~\eqref{eq:road flow right};
in that case, first, we put a change of variables $y' = \roadlen_\roadvar - y$ and a substitution by $\reverseden_\roadvar$.

\end{proof}

\subsection{Proving the Main Result}

Lemma~\ref{lemma:devicecost readable} provides the critical component
of the proof of the main result of the paper, i.e., Theorem~\ref{thm:roadnet emd equation}.
%In this section,
%to prove Theorem~\ref{thm:roadnet emd equation},
%we will split the equality~\eqref{eq:roadnet emd equation} into matching bounds
%and prove each bound separately.
%In the interest of clarity, we split the main result (equality) into two inequality resulttwo result into 
%At this point, the equations~\eqref{eq:roadcost_outgoing}, \eqref{eq:flowcost_transportlayer}, and~\eqref{eq:roadcost_incoming}
%exhibit a strong similarity to the components of equation~\eqref{eq:varcost}.
%What is left is to produce the bounds that equate them.
%\begin{lemma}[EMD bounds from above]
%\label{lemma:emd bounds from above}
%$\min_{\flowmap \in \wassnet} \flowcost( \flowmap ; \wasscosts ) = \Wass(\measone,\meastwo)$.
%\end{lemma}
%
\begin{proof}[Proof of Theorem~\ref{thm:roadnet emd equation}]
We begin by proving that
$\min_{\flowmap \in \wassnet} \flowcost( \flowmap ; \wasscosts )
	\leq 
\Wass(\measone,\meastwo)$.
That proof is by showing that
\begin{equation} \label{eq:lowerbound strategy}
\min_{\flowmap \in \wassnet} \flowcost( \flowmap ; \wasscosts )
\leq o(1)
	+ \min_{\flowmap \in \pathnet} \flowcost( \flowmap ; \pathwts ),
\end{equation}
where $\pathnet$ is of the $\epsilon$-tesselation of $\roadnet$ for $\epsilon > 0$ arbitrarily small, so that
the lemma holds in the limit as $\epsilon \to 0^+$.
Let $\flowmap^*$ be a minimum-cost admissible flow on $\pathnet$, and
let $\flowmap$ be the network flow on $\wassnet$ defined by
\begin{align}
\flowmap( \roadconnleft )
	& := \flowmap^*( \roadconnleft )
			\label{eq:same tail flow}		\\
\flowmap( \roadconnright )
	& := \flowmap^*( \roadconnright )
		\qquad & \text{for all $\roadvar \in \supplyset \cup \demandset$, and}
			\label{eq:same head flow}		\\
\flowmap( \arcvar )
	& := \flowmap^*( \arcvar )
		\qquad & \text{for all $\arcvar \in \routingedges$}.
			\label{eq:same routing flow}
\end{align}
%(Also,
%$\flowmap( \roadvar, \roadvar^+ ) := \measone(\roadvar) - \flowmap( \roadvar, \roadvar^- )$
%for $\roadvar \in \supplyset$ and
%$\flowmap( \roadvar^+, \roadvar ) := \meastwo(\roadvar) - \flowmap( \roadvar^-, \roadvar )$
%for $\roadvar \in \demandset$.)
It is a simple exercise to show that $\flowmap$ is admissible, i.e., $\flowmap \in \wassnet$.
Applying Lemma~\ref{lemma:devicecost readable},
we observe that for every road $\roadvar \in \roadset$,
the difference between
the cost of the road device $\roaddevice$ in $\pathnet$ and
the combined cost of the decision edges $\roadconnleft$ and $\roadconnright$ in $\wassnet$
is $o(1)$.
The flows and weights on $\routingedges$ are identical in both networks, contributing no additional costs.
Therefore, the total difference in cost between $\flowmap$ and $\flowmap^*$ is $o(1)$.
%\ktmargin{give them a little more here...}{}
By definition, the minimum-cost flow on $\wassnet$ has cost bounded by $\flowcost( \flowmap ; \wasscosts )$,
and so we obtain~\eqref{eq:lowerbound strategy}.
%Since $\flowcost( \flowmap^* ; \pathcosts(\epsilon) ) \to \Wass(\measone,\meastwo)$, we obtain the lemma.

We prove the matching lower bound by another limiting expression
\begin{equation}
\min_{\flowmap \in \pathnet} \flowcost( \flowmap ; \pathwts )
\leq o(1)
	+ \min_{\flowmap \in \wassnet} \flowcost( \flowmap ; \wasscosts ),
%\qquad \text{for all $\epsilon > 0$}.
\end{equation}
%for all $\epsilon > 0$,
%so that the lemma holds, again, in the limit as $\epsilon \to 0^+$.
Let $\flowmap^*$ be a minimum-cost admissible flow on the flow network $\wassnet$.
$\flowmap$ shall be an admissible flow ($\flowmap \in \pathnet$) satisfying
again~\eqref{eq:same tail flow}, \eqref{eq:same head flow}, and~\eqref{eq:same routing flow}.
$\flowmap$ must also part every device $\roaddevice$.
(Such $\flowmap$ can be generated, e.g., by traversing each device $\roaddevice$ and assigning flows greedily
to obtain~\eqref{eq:same tail flow} and~\eqref{eq:same head flow}.)
The rest of the proof continues by symmetrical logic.
\end{proof}

\section{Conclusion}
\label{sec:conclusion}

In this paper we have defined the Earth Mover's distance with respect to a set of
ground metrics capturing the common notion of ``roadmap distance''.
In order to produce such ground metrics,
we have defined formally a class of one-dimensional metric spaces which are
$\reals^1$-like but may have arbitrary, graph-like topology.
We have given an expression of the EMD on such road networks,
for a general class of probability distributions, which is explicit in the sense that
it is amenable to efficient computational optimization techniques.
In the case that both distributions are piece-wise uniform, the EMD can be computed by quadratic programming.
Finally, we have demonstrated by simulation experiment
that our formulation
can be used to predict accurately the maximum theoretical throughput of a vehicle sharing system 
modeled by the DPDP in a roadmap workspace.
The result can be used to address a limitation of previous DPDP models, which
treat the distances between points in a planar workspace using a simplified Euclidean distance metric.

\emph{Future Work:}
There are several directions is which this work may be extended.
For example, the authors are quite certain that the basic formulation shall admit simple extensions for
(i) the class of \emph{mixed} distributions,
i.e., distributions having an absolutely continuous part and an \emph{atomic} part;
(ii) non-symmetrical ground metrics resulting from the treatment of ``one-way'' streets.
It should also be straightforward to obtain a generalization of the formulation
for definitions of the EMD (e.g., in~\cite{rubner1998metric})
which allow input measures to have unequal total mass.
Another possible extension of this work would be to obtain better algorithms for road networks with special structure.
(For example, it should be possible to produce an algorithm in the style of~\cite[Sec.~5.3]{ling2007efficient}
for road networks that can be represented by tree graphs.)

In addition to these particular extensions, we hope that
our formal treatment of road networks and
the analytical techniques introduced in this paper
may facilitate bringing the power of computational statistics research to bear
on research questions framed in the ubiquitous road network setting.

\bibliographystyle{plain}
\bibliography{main}

%
%\clearpage
%\phantom{again}
%\clearpage
%%\phantom{again}

\appendices

\section{Correctness of the General Purpose Approximation}
\label{sec:naive approx correctness}

Before proving the two propositions, we must introduce a relation
between the set of couplings $\Gamma(\measone,\meastwo)$
and the network flow constraints on $\approxnet$.
% requires us Before we prove the two propositions, we present a c

\begin{lemma}[Coupling-induced network flow]
\label{lemma:couplingflow}
Let $\measone$ and $\meastwo$ be two measures over a domain $\env$.
Let $\cellset$ be a partition of $\env$ into cells,
and let $\approxnet$ be the approximation network derived from $\measone$, $\meastwo$, and $\cellset$.
Let $\gamma$ be a coupling of measures $\measone$ and $\meastwo$,
%in a measurable space $(\env,\sigfield)$;
$\gamma \in \Gamma(\measone,\meastwo)$.
%generating flow network $\flownet_\epsilon$ in the sense above.
%generating the approximating flow network $(G,\edgelowcap,\edgehighcap,s,t)$.
Let $\flowmap : \tverts \times \bverts \to \reals$ be the mapping where % \in \flownet_\epsilon$ be the mapping
% \begin{equation} \label{eq:couplingflow_outgoing}
%   \flowmap(s,\tvert^i) := 
%   %\int_{ \workcell^i } d\mu_1;
%   \measone( \workcell^i )
%   \qquad
%   \text{for $i=1,\ldots,N(\epsilon)$;}
% \end{equation}
% %
% \begin{equation} \label{eq:couplingflow_incoming}
%   \flowmap(\bvert^i,t) := 
%   %\int_{ \workcell^i } d\mu_1;
%   \meastwo( \workcell^i )
%   \qquad
%   \text{for $i=1,\ldots,N(\epsilon)$;}
% \end{equation}
%
\begin{equation} \label{eq:couplingflow_network}
  \flowmap( u, u' ) =
%  \int_{ \pointvar \in \workcell, \pointvar' \in \workcell'} d\gamma(\pointvar,\pointvar')
	\gamma\left( \workcell \times \workcell' \right)
  \qquad
  \text{for each $(u,\workcell) \in \tmatch$ and $(u',\workcell') \in \bmatch$.}
\end{equation}
Then $\flowmap$ is admissible, i.e., $\flowmap \in \approxnet$.
\end{lemma}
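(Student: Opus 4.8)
The plan is to check directly that $\flowmap$ satisfies the two conditions in the definition of an admissible flow: non-negativity, and the flow conservation constraints~\eqref{eq:conservation} at every vertex of $(\approxnodes,\approxedges)$. Non-negativity is immediate, and I would dispose of it first: each value $\flowmap(u,u') = \gamma(\workcell\times\workcell')$ is the $\gamma$-measure of a measurable rectangle (measurable in the product $\sigma$-algebra since each cell is measurable), hence non-negative; this also certifies that $\flowmap$ is well defined.

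For the conservation constraints I would split according to the two sides of the bipartition, using that $\approxedges = \tverts\times\bverts$ with every edge oriented from its $\tverts$-endpoint to its $\bverts$-endpoint. Consequently a vertex in $\tverts$ has no incoming edges and a vertex in $\bverts$ has no outgoing edges. First, fix $u\in\tverts$ with $(u,\workcell)\in\tmatch$; then~\eqref{eq:conservation} reduces to $\approxsupply(u) = \sum_{u'\in\bverts}\flowmap(u,u')$, and since $\approxsupply(u)=\measone(\workcell)$ by~\eqref{eq:supplies supply}, it suffices to establish $\sum_{u'\in\bverts}\flowmap(u,u') = \measone(\workcell)$. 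Summing~\eqref{eq:couplingflow_network} over $u'$ amounts, through the bijection $\bmatch$, to summing $\gamma(\workcell\times\workcell')$ over all cells $\workcell'\in\cellset$; because $\cellset$ partitions $\env$ into finitely many disjoint measurable cells, finite additivity of $\gamma$ gives $\sum_{\workcell'\in\cellset}\gamma(\workcell\times\workcell') = \gamma(\workcell\times\env)$, which equals $\measone(\workcell)$ by the first-marginal property of a coupling. Second, for $u'\in\bverts$ with $(u',\workcell')\in\bmatch$, constraint~\eqref{eq:conservation} reads $\approxsupply(u') + \sum_{u\in\tverts}\flowmap(u,u') = 0$; since $\approxsupply(u')=-\meastwo(\workcell')$ by~\eqref{eq:supplies demand}, the symmetric computation together with the second-marginal property $\gamma(\env\times\argholder)=\meastwo(\argholder)$ yields $\sum_{u\in\tverts}\flowmap(u,u') = \meastwo(\workcell')$, which closes the constraint.

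I do not anticipate a genuine obstacle: the argument is essentially bookkeeping. The only places that require care are (i) keeping the edge orientation and the sign convention for demand vertices aligned, so that the vanishing incoming/outgoing sums fall on the intended side of~\eqref{eq:conservation} and the term $-\meastwo(\workcell')$ enters with the correct sign; and (ii) the elementary measure-theoretic step of replacing $\gamma(\workcell\times\env)$ by $\sum_{\workcell'\in\cellset}\gamma(\workcell\times\workcell')$, which is exactly where the hypothesis that $\cellset$ is a finite measurable partition of $\env$ is used.
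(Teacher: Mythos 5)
Your proof is correct and follows the same route as the paper: reduce the conservation constraints on the bipartite network to the two marginal identities $\sum_{u'}\flowmap(u,u')=\measone(\workcell)$ and $\sum_{u}\flowmap(u,u')=\meastwo(\workcell')$, which hold by finite additivity of $\gamma$ over the partition and the defining marginal properties of a coupling. The paper states this verification as ``easily verified''; you have simply written out the bookkeeping (including the non-negativity check) explicitly and correctly.
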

\begin{proof}
To prove $\flowmap$ of~\ref{eq:couplingflow_network} is admissible,
one must show that~\eqref{eq:conservation} holds.
On the bipartite network $\approxnet$,
\eqref{eq:conservation} holds if
$\sum_{ (u',\workcell') \in \bmatch } \flowmap( u, u' ) = \measone(\workcell)$
for all $(u,\workcell) \in \tmatch$
and
$\sum_{ (u,\workcell) \in \tmatch } \flowmap( u, u' ) = \meastwo(\workcell')$
for all $(u',\workcell') \in \bmatch$.
Recalling that $\gamma \in \Gamma(\measone,\meastwo)$, these conditions can be easily verified.
% 
% First, note that~\eqref{eq:couplingflow_outgoing} and~\eqref{eq:couplingflow_incoming}
% are the \emph{only} settings that satisfy the quota constraints
% under~\eqref{eq:quota_outgoing} and~\eqref{eq:quota_incoming};
% also, there are no other quotas.
% We are left only to show
% that $\flowmap$ satisfies conservation constraints:
% i.e., we must verify
% \begin{equation} \label{eq:couplingflow_toplayer_conservation}
%   \flowmap(s,\tvert^i) = \sum_j \flowmap(\tvert^i,\bvert^j)
%   \qquad
%   \text{for all $i=1,\ldots,N(\epsilon)$,}
% \end{equation}
% %
% \begin{equation} \label{eq:couplingflow_bottomlayer_conservation}
%   \sum_i \flowmap(\tvert^i,\bvert^j) = \flowmap(\bvert^j,t)
%   \qquad
%   \text{for all $j=1,\ldots,N(\epsilon)$.}
% \end{equation}
% These can be verified easily by
% substituting~\eqref{eq:couplingflow_outgoing} and~\eqref{eq:couplingflow_network} in~\eqref{eq:couplingflow_toplayer_conservation}
% (and~\eqref{eq:couplingflow_incoming} and~\eqref{eq:couplingflow_network} in~\eqref{eq:couplingflow_bottomlayer_conservation})
% and 
\end{proof}

\begin{prop}
\label{remark:corresp_exists}
For any admissible flow $\flowmap \in \approxnet$,
there exists at least one coupling $\gamma \in \Gamma(\measone,\meastwo)$
satisfying~\eqref{eq:couplingflow_network}.
%corresponding to $\flowmap$ in the sense of Lemma~\ref{lemma:couplingflow}.
(In general, there are many.)
%\ktmargin{how do I handle this more convicingly?}{}
\end{prop}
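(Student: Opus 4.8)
The plan is to construct the coupling $\gamma$ explicitly by ``disaggregating'' the admissible flow $\flowmap$ inside each pair of cells. For each $(u,\workcell)\in\tmatch$ with $\measone(\workcell)>0$, I would introduce the probability measure $\hat\measone_\workcell$ on $\env$ given by $\hat\measone_\workcell(A)\doteq \measone(A\cap\workcell)/\measone(\workcell)$ --- the restriction of $\measone$ to $\workcell$, renormalized --- and define $\hat\meastwo_{\workcell'}$ analogously for each $(u',\workcell')\in\bmatch$ with $\meastwo(\workcell')>0$. Then I would set
\[
  \gamma \doteq \sum \flowmap(u,u')\,\bigl(\hat\measone_\workcell \otimes \hat\meastwo_{\workcell'}\bigr),
\]
the sum ranging over all $(u,\workcell)\in\tmatch$, $(u',\workcell')\in\bmatch$ with $\flowmap(u,u')>0$. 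Since $\flowmap\ge 0$ and each summand is a finite product measure, $\gamma$ is a well-defined non-negative measure on $\env^2$, concentrated on the union of the blocks $\workcell\times\workcell'$.

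Next I would verify the three required facts. (i) Because the cells are pairwise disjoint, $\hat\measone_{\workcell_1}(\workcell)=0$ unless $\workcell_1=\workcell$, and similarly for the second factor, so only the $(\workcell,\workcell')$ summand contributes to $\gamma(\workcell\times\workcell')$, yielding $\gamma(\workcell\times\workcell')=\flowmap(u,u')\,\hat\measone_\workcell(\workcell)\,\hat\meastwo_{\workcell'}(\workcell')=\flowmap(u,u')$, which is exactly~\eqref{eq:couplingflow_network}. (ii) For the first marginal I use $\hat\meastwo_{\workcell'}(\env)=1$ together with the conservation identity $\sum_{(u',\workcell')\in\bmatch}\flowmap(u,u')=\measone(\workcell)$ already extracted from~\eqref{eq:conservation} in the proof of Lemma~\ref{lemma:couplingflow}:
\[
  \gamma(A\times\env)=\sum_{(u,\workcell)\in\tmatch}\hat\measone_\workcell(A)\sum_{(u',\workcell')\in\bmatch}\flowmap(u,u')=\sum_{(u,\workcell)\in\tmatch}\measone(A\cap\workcell)=\measone(A),
\]
the last step because $\cellset$ partitions $\env$. (iii) The second marginal equals $\meastwo$ by the symmetric computation. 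Hence $\gamma\in\Gamma(\measone,\meastwo)$ and it satisfies~\eqref{eq:couplingflow_network}.

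The only point requiring care --- which I regard as the sole (and minor) obstacle --- is the treatment of degenerate cells. If $\measone(\workcell)=0$ then $\hat\measone_\workcell$ is undefined, but flow conservation forces $\sum_{(u',\workcell')\in\bmatch}\flowmap(u,u')=\measone(\workcell)=0$, so with $\flowmap\ge 0$ every such $\flowmap(u,u')$ vanishes and the corresponding terms are simply absent from the sum; symmetrically when $\meastwo(\workcell')=0$. The same bookkeeping makes transparent, as the parenthetical in the statement anticipates, why the correspondence is generally many-to-one: the flow records only the block masses $\gamma(\workcell\times\workcell')$, whereas a coupling must additionally prescribe how that mass is arranged within each $\workcell\times\workcell'$, and the product measures above are merely the most symmetric among many valid choices.
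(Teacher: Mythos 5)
Your construction is exactly the one the paper uses: the paper's proof defines $\gamma(A\times B)$ as the sum over matched cell pairs of $\flowmap(u,u')\,\measone(A\cap\workcell)/\measone(\workcell)\cdot\meastwo(B\cap\workcell')/\meastwo(\workcell')$, i.e.\ the same mixture of normalized product restrictions, and simply asserts that the required properties ``can be checked.'' Your proposal is correct and follows the same route, merely supplying the marginal verification and the (harmless) degenerate-cell bookkeeping that the paper omits.
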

\begin{proof}
The proof is by an example construction.
Given $\flowmap \in \approxnet$, let $\gamma$ be the unique measure satisfying
\[
	\gamma( A \times B )
	=
	\sum_{ (u,\workcell) \in \tmatch, (u',\workcell') \in \bmatch }
%	\left\{
		\flowmap(u,u') \,
		\frac{ \measone( A \cap \workcell) }{ \measone(\workcell) } \,
		\frac{ \meastwo( B \cap \workcell') }{ \meastwo(\workcell') }
%	\right\}
\]
for all $A, B \in \sigfield$
(with the standard extension to the product measure-space $\sigfield \otimes \sigfield$).
It can be checked that $\gamma$ satisfies the conditions of the proposition.
% \in \Gamma(\measone,\meastwo)$ and satisfies~\eqref{eq:couplingflow_network}.
\end{proof}

\begin{proof}[Proof of Prop.~\ref{prop:approx_contains_wass}]
First, we show that $\underapprox(\epsilon) \leq \Wass$ for all $\epsilon > 0$;
For the rest of the proof, we will omit the argument $\epsilon$.
%W(\pdistr_\delvtag,\pdistr_\picktag)$.
%Let $\cellset(\pointvar)$ denote the cell in $\cellset$ containing point $\pointvar \in \env$;
%where $\cellset(\cdot)$ maps points to their containing cells;
For $\delta > 0$ arbitrarily small,
we choose some $\gamma \in \Gamma(\measone,\meastwo)$
within $\delta$ of the infimum~\eqref{eq:EMDdef}.
Let $\flowmap$ be given by~\eqref{eq:couplingflow_network}.
%, say with difference $\epsilon > 0$.
%Given a partition $\cellset$ 
%
Then we have
% 
% Noting that the matrix
% $\left[ \alpha_{ij}
%   = \int_{x_1 \in \cellset^i, x_2 \in \cellset^j}
%     \ d\gamma^*(x_1,x_2) \right]_{ij}$
% satisfies the constraints of Problem~\ref{problem:traffic_optimistic}, we have
\begin{equation}
\begin{aligned}
  \Wass
  =
  \inf_{ \gamma' } \int \distfunc{\pointvar,\pointvar'} d\gamma'(\pointvar,\pointvar')
  \geq
  \int \distfunc{\pointvar,\pointvar'} d\gamma(\pointvar,\pointvar') + \delta.
\end{aligned}
\end{equation}
Let us define the distance function
\begin{equation}
\underdist_\cellset(\pointvar,\pointvar')
  :=
%   \min_{\altpointvar \in \cellset(\pointvar), \altpointvar' \in \cellset(\pointvar') }
%     \distfunc{\altpointvar,\altpointvar'}.
  \sum_{\workcell,\workcell' \in \cellset}
  \krondelta_{ \{ \pointvar\in\workcell, \pointvar'\in\workcell' \} }
  \min_{ \altpointvar \in \workcell, \altpointvar' \in \workcell' }
  \distance(\altpointvar,\altpointvar').
\end{equation}
We observe $\underdist_\cellset$ is everywhere a lower bound for $\distance$;
therefore,
\begin{equation}
\int \distfunc{\pointvar,\pointvar'} d\gamma(\pointvar,\pointvar')
\geq
\int \underdist_\cellset(\pointvar,\pointvar') d\gamma(\pointvar,\pointvar').
\end{equation}
Letting $\lowerwts =: \{ \lowerwt_\arcvar \}_{\arcvar \in \arcset}$, note that
%Applying Lemma~\ref{lemma:costcorresp}, we can write
\begin{equation}
\begin{aligned}
  \int \underdist_\cellset(\pointvar,\pointvar') d\gamma(\pointvar,\pointvar')
  &=
  %\sum_{i,j}
  \sum_{ \workcell,\workcell' \in \cellset  }
%	\min_{ \altpointvar \in \workcell^i, \altpointvar \in \workcell^j }
	\min_{ \altpointvar \in \workcell, \altpointvar \in \workcell' }
    \distfunc{\altpointvar,\altpointvar'}
    \
%    \int_{ \pointvar \in \workcell^i, \pointvar' \in \workcell^j }
    \int_{ \pointvar \in \workcell, \pointvar' \in \workcell' }
      d\gamma(\pointvar,\pointvar')	\\
  &=
%  \sum_{i,j}
	\sum_{ u \in \tverts, u' \in \bverts }
	%(u,\workcell) \in \tmatch, (u',\workcell') \in \bmatch }
		\lowerwt_{(u,u')} \flowmap( u, u' )	\\
%    \lowerwt(\topcopy{\nodevar^i}, \bottomcopy{\nodevar^j} ) \flowmap\left( \topcopy{\nodevar^i}, \bottomcopy{\nodevar^j} \right)		\\
  &= \flowcost( \flowmap ; \lowerwts ).
\end{aligned}
\end{equation}
By definition, $\flowcost(\flowmap; \lowerwts )$ is no smaller than $\underapprox$ .
Combining these results we have that $\Wass \geq \underapprox + \delta$.
The proof follows since the inequality holds for $\delta$ arbitrarily small.
%the first part follows.
%, we conclude the first part.

The proof that $\overapprox \geq \Wass$ is similar.
Let $\flowmap$ be the minimum-cost flow of $\approxnet$ under edge weights $\upperwts$;
by definition, the cost of $\flowmap$ is $\overapprox$.
Recalling Remark~\ref{remark:corresp_exists},
let $\gamma$ be any coupling of $\measone$ and $\meastwo$ which induces $\flowmap$.
Then
\begin{equation}
\begin{aligned}
  \Wass
  = \inf_{\gamma'} \int \distfunc{\pointvar,\pointvar'} \ d\gamma'(\pointvar,\pointvar')
  \leq \int \distfunc{\pointvar,\pointvar'} \ d\gamma(\pointvar,\pointvar').
\end{aligned}
\end{equation}
We define the distance function
\begin{equation}
\overdist_\cellset(\pointvar,\pointvar')
  :=
%   \min_{\altpointvar \in \cellset(\pointvar), \altpointvar' \in \cellset(\pointvar') }
%     \distfunc{\altpointvar,\altpointvar'}.
  \sum_{\workcell,\workcell' \in \cellset}
  \krondelta_{ \{ \pointvar\in\workcell, \pointvar'\in\workcell' \} }
  \max_{ \altpointvar \in \workcell, \altpointvar' \in \workcell' }
  \distance(\altpointvar,\altpointvar');
\end{equation}
$\overdist_\cellset$ is everywhere greater than $\distance$, so
\begin{equation}
\begin{aligned}
  \int \distfunc{\pointvar,\pointvar'} d\gamma(\pointvar,\pointvar')
  \leq
  \int \overdist_\cellset(\pointvar,\pointvar') d\gamma(\pointvar,\pointvar').
%   \geq
%   \inf_{\gamma'}
%   \int \distfunc{\pointvar,\pointvar'} d\gamma'(\pointvar,\pointvar').
\end{aligned}
\end{equation}
By previous logic, it can be shown that
\[
\int \overdist_\cellset(\pointvar,\pointvar') \ d\gamma(\pointvar,\pointvar')
  = \flowcost( \flowmap ; \upperwts ) = \overapprox(\epsilon). 
\]
Combining these results proves the second part.
\end{proof}

\begin{proof}[Proof of Prop.~\ref{prop:approx_squeezes}]
The result is simply a consequence of the fact (one can check) that
for any $\epsilon > 0$, and
$\lowerwts(\epsilon) =: \{ \lowerwt_\arcvar \}_{\arcvar \in \arcset}$,
$\upperwts(\epsilon) =: \{ \upperwt_\arcvar \}_{\arcvar \in \arcset}$, we have
$\upperwt_\arcvar - \lowerwt_\arcvar \leq \epsilon$ for all $\arcvar \in \arcset$.
Let $\flowmap^*$ be the minimum-cost flow on $\approxnet$
%on the flow network $(G,\edgelowcap,\edgehighcap,s,t)$
with edge weights $\lowerwts$.
Note that 
\begin{equation}
\begin{aligned}
  \overapprox(\epsilon)
  &=
  \min_{ \newflowmap \in \approxnet } \flowcost( \newflowmap ; \upperwts(\epsilon) )	\\
  &\leq
  \flowcost( \flowmap^* ; \upperwts(\epsilon) )		\\
  &=
  \sum_{u \in \tverts, u' \in \bverts }
    \upperwt_{ (u,u') } \flowmap^*( u,u' )		\\
    %(\topcopy{\nodevar^i}, \bottomcopy{\nodevar^j})
    %\flowmap\left( \topcopy{\nodevar^i}, \bottomcopy{\nodevar^j} \right)		\\
  &\leq
  \sum_{u \in \tverts, u' \in \bverts }
    \left[ \lowerwt_{ (u,u') } + \epsilon \right]
    \flowmap^*( u, u' )		\\
  &=
  \flowcost( \flowmap^*; \lowerwts(\epsilon) ) + \epsilon |\mu|	%	\\	%\sum_{i,j} f(u^i,v^j) \ \epsilon	\\
  =
  \underapprox(\epsilon) + \epsilon |\mu|.	%+ \epsilon \ \sum_{i,j} f(u^i,v^j)
\end{aligned}
\end{equation}
%
% It is easy to check that
% $\sum_{i,j} f(u^i,v^i) = \mu$,
% and the result follows.
% 
% We need only examine the edges $(u^i,v^j)$ for $i,j = 1,\ldots,N(\epsilon)$.
% It is easy to check,
% for $d$-dimensional cubes of side-length $\epsilon / 2\sqrt{d}$, that
% $\max_{ \pointvar \in \workcell^i, \pointvar' \in \workcell^j } \distfunc{\pointvar,\pointvar'}
%   - \min_{ \pointvar \in \workcell^i, \pointvar' \in \workcell^j } \distfunc{\pointvar,\pointvar'}
%   \leq \epsilon$.
\end{proof}

\section{Reimann Approximation of Road Device Costs}
\label{sec:device cost proof}

\newcommand{\roadcostLower}{\flowcost_\roadvar^-}
\newcommand{\roadcostUpper}{\flowcost_\roadvar^+}

\begin{proof}[Proof of Lemma~\ref{lemma:devicecost}]
We give the proof only for $\roadvar \in \supplyset$; the proof for $\roadvar \in \demandset$ is by identical logic.
%Let $\flowmap$ be some minimum-cost admissible flow on the network $\flownet_`$.
%
%By Lemma~\ref{lemma:threshold} the device $\roaddevice$ is parted.
Since $\roaddevice$ is parted,
%with parting index $k_\roadvar$,
% denote the index of the parting vertex for $\roaddevice$,
we can restrict the ranges of the sums in~\eqref{eq:roadcost_contribution} to obtain
%to~\eqref{eq:roadcost_contribution}, we can write
\begin{equation}
\label{eq:roadcost_split}
  \flowcost_\roadvar( \optflow ; \pathwts )
  =
  \sum_{k=0}^{ k_\roadvar - 1 } \epsilon_\roadvar \ \optflow( \roadnode[k+1], \roadnode[k] )
  + \sum_{k=k_\roadvar}^{ \numcell - 1 } \epsilon_\roadvar \ \optflow( \roadnode[k], \roadnode[k+1] ).
\end{equation}
%
%Now let us consider the implications of the flow conservation constraints~\eqref{eq:conservation} on a parted device:
Combining the parted-ness of $\roaddevice \in \supplyset$
%(Lemma~\ref{lemma:threshold})
with the flow conservation constraints~\eqref{eq:conservation},
we obtain a recursive system
% of recursive equations
%
\begin{align}
\optflow( \roadnode[k], \roadnode[k-1] )
	& = \optflow( \roadnode[k+1], \roadnode ) + \supplyvar( \roadnode ),
	& \text{for $k=2,\ldots, \thresh_\roadvar-1$, and}
	\label{eq:flowconserve_left}		\\
\optflow( \roadnode, \roadnode[k+1] ) 
	& = \optflow( \roadnode[k-1], \roadnode ) + \supplyvar( \roadnode ),
%  \qquad
	& \text{for $k=\thresh_\roadvar+1,N-2$}.
	\label{eq:flowconserve_right}
\end{align}
We can ``unroll'' each of the recursions~\eqref{eq:flowconserve_left} and~\eqref{eq:flowconserve_right}
until we reach the part index $\thresh_\roadvar$;
since the supply $\supplyvar( k_\roadvar )$ could be split between the backward and forward flows,
at best we can write bounds
%provides that the left-going flow from a node $u^k$ is the sum of its supplies from accumulation of all left
\begin{align} 
  \sum_{k'=k+1}^{ \thresh_\roadvar-1 } \supplyvar( \roadnode[k'] )
  \leq
  \optflow( \roadnode[k+1], \roadnode )
  \leq
  \sum_{k'=k+1}^{ \thresh_\roadvar } \supplyvar( \roadnode[k'] )
%  \qquad
 & & \text{for all $k < \thresh_\roadvar$},
	\label{eq:cumulativeflow_left}		\\
  \sum_{ k'=\thresh_\roadvar+1 }^k \supplyvar( \roadnode[k'] )
  \leq
  \optflow( \roadnode, \roadnode[k+1] )
  \leq
  \sum_{ k'=\thresh_\roadvar }^k \supplyvar( \roadnode[k'] )
  %\sum_{k'=\thresh_\roadvar}^k f(s, \roadnode[k'] )
 % \qquad
  && \text{for all $k \geq \thresh_\roadvar$}.
\label{eq:cumulativeflow_right}
\end{align}
Substituting~\eqref{eq:cumulativeflow_left} and~\eqref{eq:cumulativeflow_right}
in~\eqref{eq:roadcost_split},
and re-arranging the sums,
we obtain bounds
$\roadcostLower \leq \roadcost \leq \roadcostUpper$, where
%for the cost $\flowcost^1$, where
\begin{align} 
	\roadcostLower &:=
	\sum_{k'=1}^{ \thresh_\roadvar - 1 } \supplyvar( \roadnode[k'] ) \ k' \epsilon_\roadvar
    +
	\sum_{k'= \thresh_\roadvar + 1 }^{\numcell-1} \supplyvar( \roadnode[k'] ) \ (\numcell-k') \ \epsilon_\roadvar,
\label{eq:roadcost_lowerbound} 	\\
	\roadcostUpper & :=
	\sum_{k'=1}^{ \thresh_\roadvar } \supplyvar( \roadnode[k'] ) \ k'\epsilon_\roadvar
    +
	\sum_{ k' = \thresh_\roadvar }^{ \numcell-1 } \supplyvar( \roadnode[k'] ) \ (\numcell-k') \ \epsilon_\roadvar.
	\notag
\end{align}
%Here $\numcell := \numcell(\epsilon;\roadvar)$ (for brevity).
%
The two bounds have separation
$\roadcostUpper - \roadcostLower
	= \supplyvar( \roadnode[\thresh_\roadvar] ) \ [ \thresh_\roadvar + (\numcell-\thresh_\roadvar) ] \ \epsilon_\roadvar$
  	$= \supplyvar( \roadnode[\thresh_\roadvar] ) \ \roadlen_\roadvar$.
%Because $\den_\roadvar$ is uniformly bounded, then
Since $\den_\roadvar$ is Lipschitz by assumption, then
\begin{equation} \label{eq:flow_into_cell}
  \supplyvar( \roadnode )
  %\flowmap(s,\tvert^k)
  = \epsilon_\roadvar \ \den_\roadvar( k\epsilon_\roadvar ) + o(\epsilon_\roadvar)
  \qquad
  \text{for $k=1,\ldots,\numcell$},
\end{equation}
%and recalling $\den_\roadvar$ uniformly bounded,
and so $\roadcost = \roadcostLower + O(\epsilon)$.
Substituting~\eqref{eq:flow_into_cell} into~\eqref{eq:roadcost_lowerbound},
as well as
$y(k) \doteq k\epsilon_\roadvar$ and $\Delta y \doteq \epsilon_\roadvar$,
we obtain
\begin{equation} \label{eq:lowerbound_reimann_sum}
\roadcost =
O(\epsilon)
	+ \sum_{k'=1}^{ \thresh_\roadvar - 1 }
		\left[ \den_\roadvar(y(k')) \Delta y + o(\Delta y) \right] \ y(k')
  	+ \sum_{k'=\thresh_\roadvar+1}^{N-1}
    	\left[ \den_\roadvar(y(k')) \Delta y + o(\Delta y) \right] \ ( \roadlen_\roadvar - y(k') );
\end{equation}
%Substituting equation~\eqref{eq:incomingflow} in either equation,
% and~\eqref{eq:outgoingflow},
%and~\eqref{eq:outgoingflow} is $\dentwo( k\epsilon ) \epsilon + o(\epsilon)$,
\eqref{eq:lowerbound_reimann_sum} is a Reimann sum which can be written as~\eqref{eq:roadcost_outgoing}.

\eqref{eq:road flow left} and~\eqref{eq:road flow right}
can be obtained in a similar fashion
by substituting~\eqref{eq:flow_into_cell} into
\eqref{eq:cumulativeflow_left} and~\eqref{eq:cumulativeflow_right},
for
$\flowmap( \roadconnleft ) \equiv \flowmap( \roadnode[1], \roadnode[0] )$ and
$\flowmap( \roadconnright ) \equiv \flowmap( \roadnode[N], \roadnode[N+1] )$, respectively, then
identifying the Reimann sums, and
applying Definition~\ref{def:cumden}.
To obtain~\eqref{eq:road flow right} also requires a change of variables $y' = \roadlen_\roadvar - y$ and
substitution by $\reverseden_\roadvar$.
\end{proof}

\end{document}